\theoremstyle{definition}
\newtheorem{definition}{Definition}[section]
\newtheorem{proposition}[definition]{Proposition}
\newtheorem{lemma}[definition]{Lemma}
\newtheorem{theorem}[definition]{Theorem}
\newtheorem{corollary}[definition]{Corollary}
\newtheorem{alg}[definition]{Algorithm}
\newcommand{\hopes}{\mathrm{HoPeS}}
\newcommand{\al}{\alpha}
\newcommand{\be}{\beta}
\newcommand{\ga}{\gamma}
\newcommand{\de}{\delta}
\newcommand{\ep}{\epsilon}
\newcommand{\De}{\Delta}
\newcommand{\si}{\sigma}
\newcommand{\birth}{\text{birth}}
\newcommand{\death}{\text{death}}
\newcommand{\PD}{\text{PD}}
\newcommand{\R}{\mathbb{R}}
\newcommand{\Z}{\mathbb{Z}}
\newcommand{\MST}{\text{MST}}
\newcommand{\Cech}{\text{\u{C}ech }}
\newcommand{\Ch}{\text{\emph{\u{C}h}}}
\newcommand{\bd}{\partial}
\newcommand{\pr}{\text{pr}}
\newcommand{\dgap}{\text{dgap}}
\newcommand{\vgap}{\text{vgap}}
\newcommand{\dPD}{\text{DS}}
\newcommand{\vPD}{\text{VS}}
\newcommand{\uv}{\text{vs}}
\newcommand{\ud}{\text{ds}}
\newcommand{\Nbhd}{\text{Nbhd}}
\newcommand{\Reeb}{\text{Reeb}}
\newcommand{\Del}{\mathrm{Del}}
\newcommand{\shopes}{\mathrm{simHoPeS}}
\newcommand{\bsquare}{\hfill $\blacksquare$}
\newcommand{\BSDcaption}[1]{Image #1 from BSD500 with the cloud $C$ of Canny edge pixels in red, the outputs of Mapper, $\al$-Reeb and derived skeleton $\hopes_{1,1}(C)$ with one critical edge in red.}
\journal{Pattern Recognition}
\begin{document}

\begin{frontmatter}

\title{Skeletonisation Algorithms with Theoretical Guarantees for Unorganised Point Clouds with High Levels of Noise}

\author[address]{P.~Smith}
\address[address]{Department of Computer Science, University of Liverpool, Liverpool L69 3BX, UK}

\author[address]{V.~Kurlin\corref{corresponding}}
\cortext[corresponding]{
The authors were supported by the EPSRC (EP/R018472/1) and Leverhulme Research Centre for Functional Materials Design in the Materials Innovation Factory at Liverpool.
}
\ead{vkurlin@liverpool.ac.uk, http://kurlin.org}

\begin{abstract}
Data Science aims to extract meaningful knowledge from unorganised data.
Real datasets usually come in the form of a cloud of points.
It is a requirement of numerous applications to visualise an overall shape of a noisy cloud of points sampled from a non-linear object that is more complicated than a union of disjoint clusters.
The skeletonisation problem in its hardest form is to find a 1-dimensional skeleton that correctly represents the shape of the cloud. 

This paper compares different algorithms that solve the above skeletonisation problem for any point cloud and guarantee a successful reconstruction.
For example, given a highly noisy point sample of an unknown underlying graph, a reconstructed skeleton should be geometrically close and homotopy equivalent to (has the same number of independent cycles as) the underlying graph.

One of these algorithms produces a Homologically Persistent Skeleton (HoPeS) for any cloud without extra parameters. 
This universal skeleton contains subgraphs that provably represent the 1-dimensional shape of the cloud at any scale.
Other subgraphs of HoPeS reconstruct an unknown graph from its noisy point sample with a correct homotopy type and within a small offset of the sample.
The extensive experiments on synthetic and real data reveal for the first time the maximum level of noise that allows successful graph reconstructions.
\end{abstract}

\begin{keyword} 
data skeletonisation 
\sep noisy point sample
\sep persistent homology
\MSC[2010] 97P99 
\end{keyword}

\end{frontmatter}


\section{Introduction: Motivations, Problem Statement and Contributions}
\label{sec:intro}

The central problem in Data Science is to represent unorganised data in a simple and meaningful form.
Our input data will be any cloud of points given by pairwise distances (a distance matrix) or by a smaller distance graph that spans all given points and specifies distances between endpoints of any edge. 
\smallskip

\noindent
Real data rarely splits into well-defined clusters.
A meaningful representation of such data is a 1D skeleton consisting of straight-line edges that should provably approximate a given cloud.
These 1D skeletons are relevant for curve recognition for surfaces \cite{torrente2018recognition}, topological shapes of micelles \cite{elkin2019fast} and posture identification \cite{patruno2019people}.
\smallskip

\noindent
Branches of a skeleton may reveal new classes of data points.
This paper compares the Mapper algorithm with the most relevant methods ($\al$-Reeb \cite{chazal2015gromov} and HoPeS \cite{kurlin2015one}), because they start from the same input (an unorganised point cloud) and solve the problem below with topological and geometric guarantees.
All required concepts such as homology are introduced in Appendix~\ref{sec:definitions}.
\medskip

\noindent
{\bf Data skeletonisation problem}. 
Given a noisy cloud $C$ of points sampled from a graph $G$ in a metric space, find conditions on $G$ and $C$ such that a reconstructed graph $G'$ has the same homology $H_1(G')=H_1(G)$ and geometrically approximates $G$ in the sense that $G'\subset G^{\al}$ and $G\subset (G')^{\al}$ for a suitable parameter $\al$ depending on $G$ and $C$, see $\al$-offsets defined in Appendix~\ref{sec:definitions}.
\smallskip

\begin{figure}[H]
\centering
\def\svgwidth{\columnwidth}
\includegraphics[scale = 0.108]{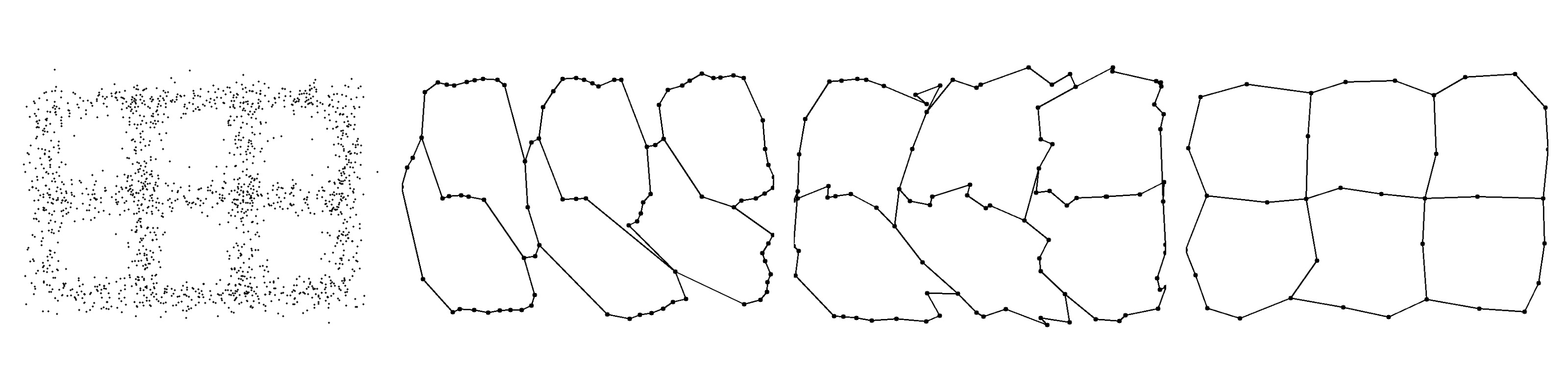}
\caption{{\bf Left}: a sample from the $G_{3, 2}$ graph in Fig.~\ref{fig:graphs} with Gaussian noise ($\sigma = 0.1$). 
{\bf Right}: reconstructions by Mapper (appendix~\ref{sub:Mapper}), $\al$-Reeb (\ref{sub:Reeb}), simplified $\hopes$ (subsection~\ref{sub:simplification}).}
\label{fig:outputsg32}
\end{figure}

\noindent
Informally, the homology condition above implies that the reconstructed graph $G'$ can be continuously deformed to the original graph $G$. 
The first two graphs in Fig.~\ref{fig:graphs} have the same homology $H_1$, because the pair of adjacent edges at each corner vertex in the grid graph $G(2,2)$ can be merged and continuously deformed into one corresponding diagonal edge in the wheel graph $W(4)$. 
Geometric similarity means that $G$ and $G'$ are close to each other with respect to a distance between graphs, e.g. $G'$ is in a small neighbourhood of $G$ and vice versa.
\medskip

\begin{figure}
\centering
\includegraphics[height=22mm]{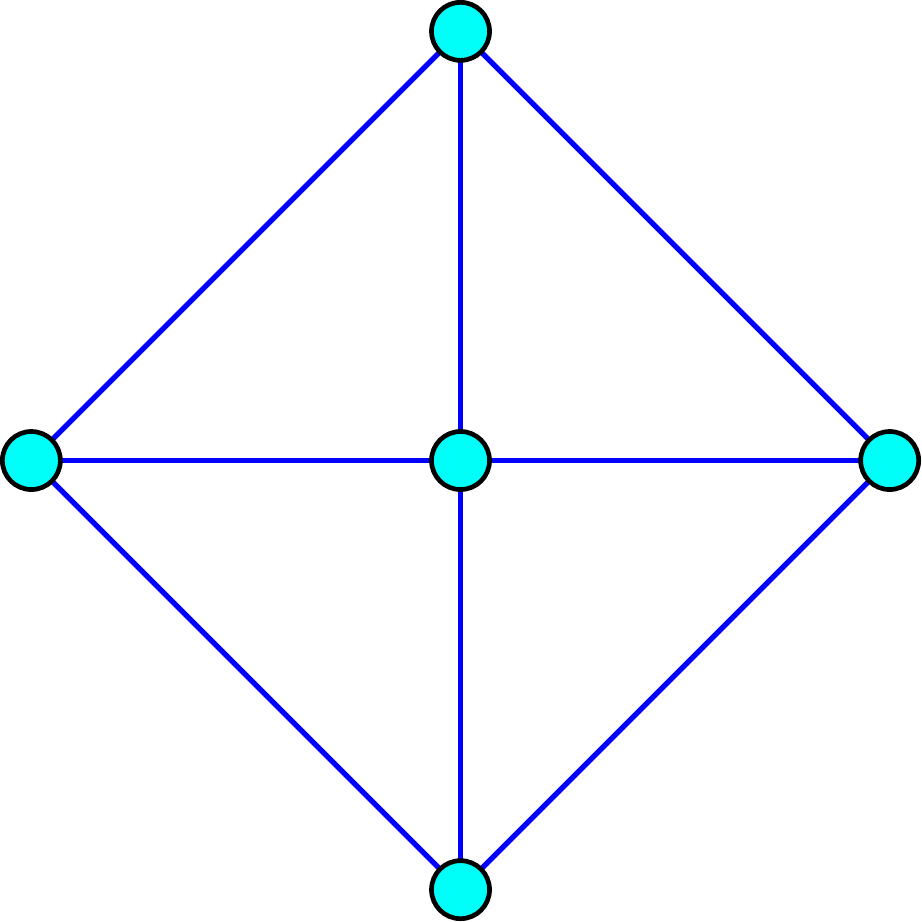}
\hspace*{2mm}
\includegraphics[height=22mm]{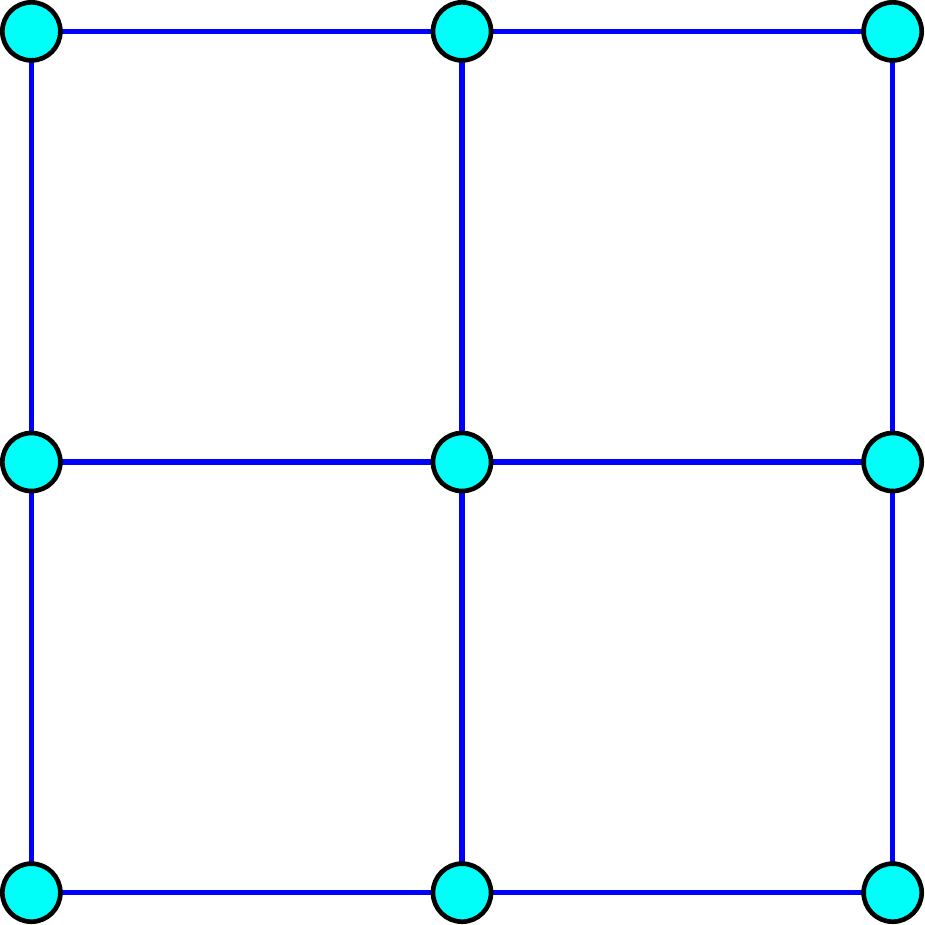}
\hspace*{2mm}
\includegraphics[height=22mm]{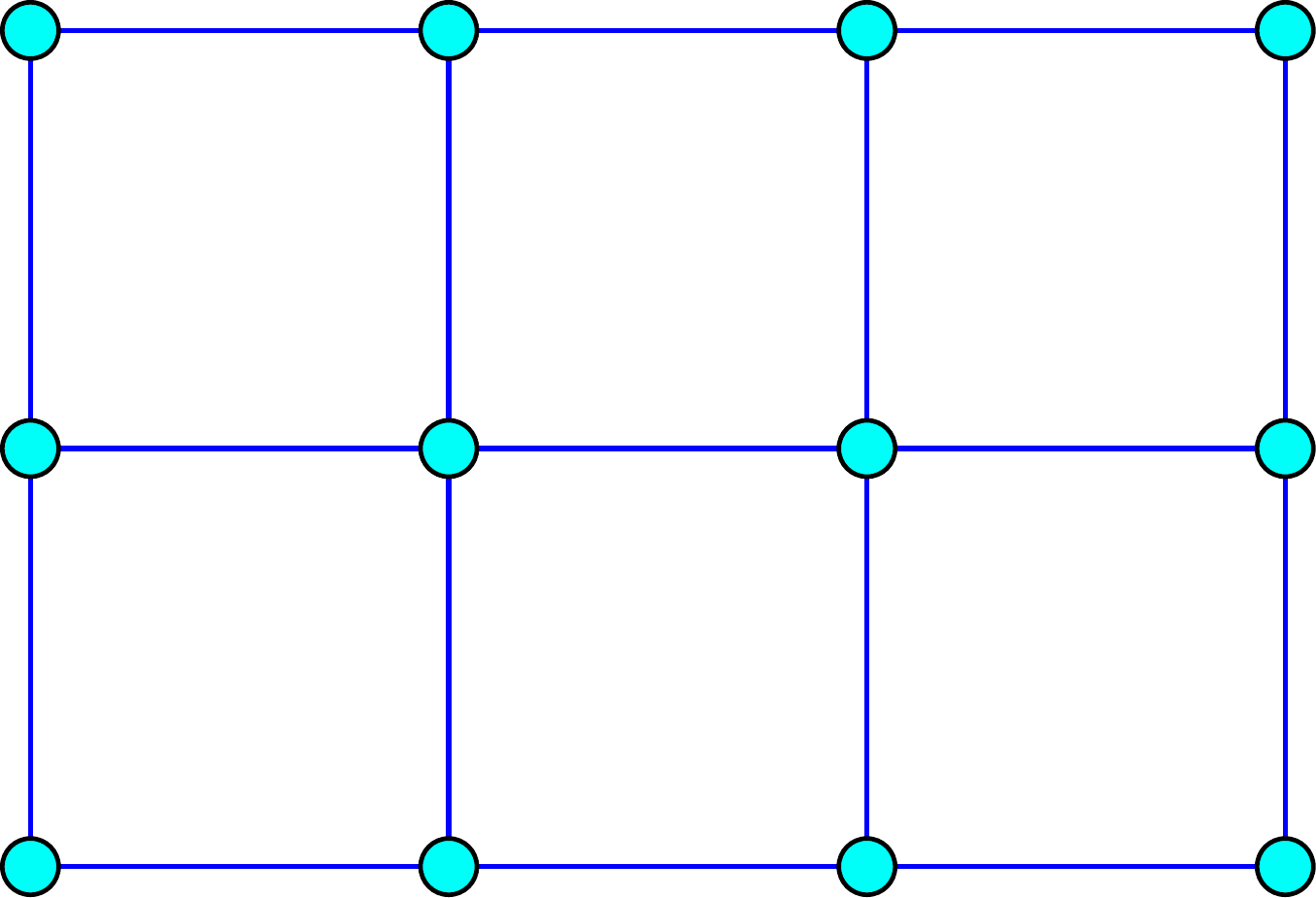}
\hspace*{2mm}
\def\svgscale{0.12}
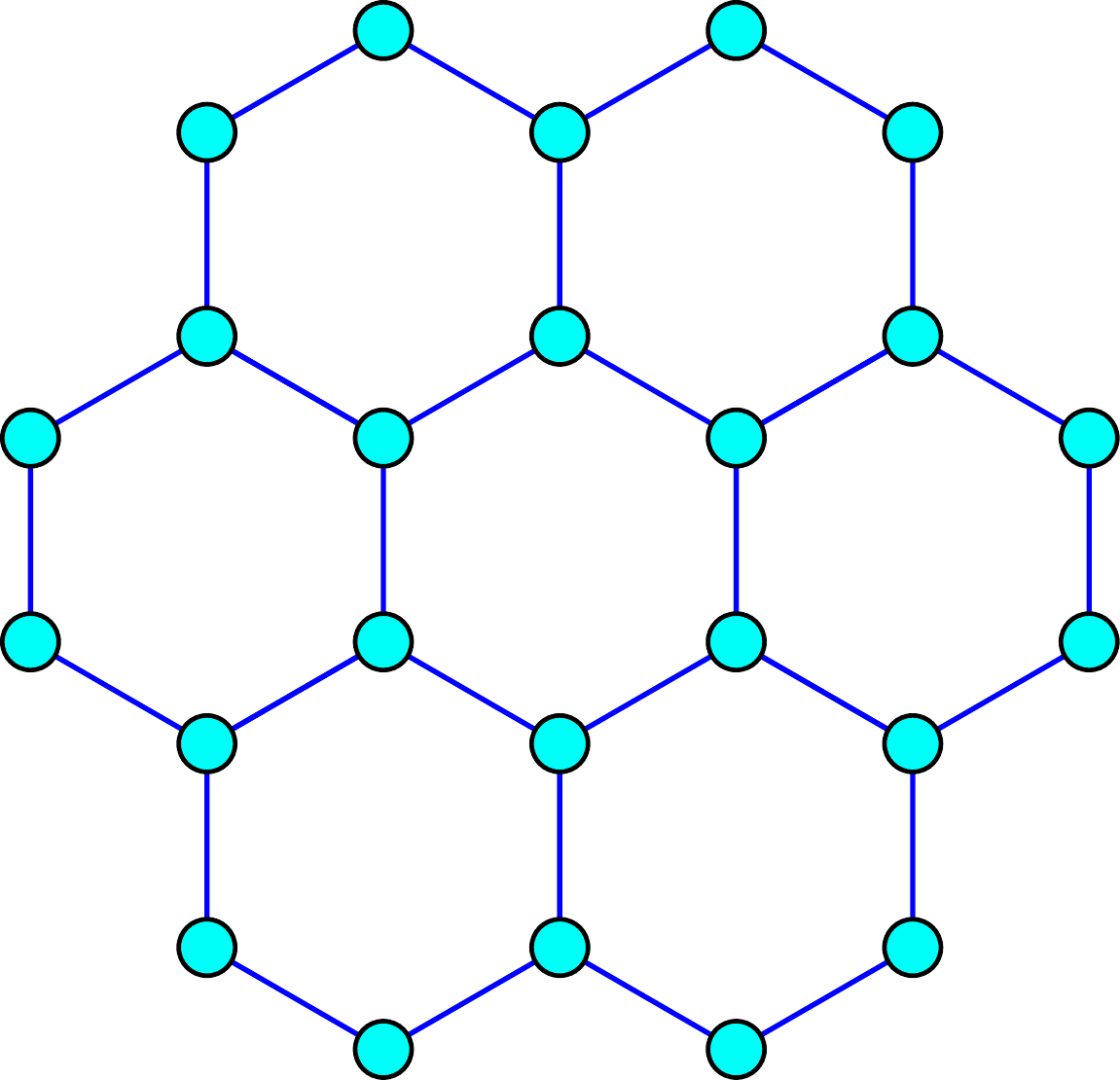
\caption{4-wheel $W(4)$, grid $G(2,2)$, grid $G(3,2)$, 7-hexagons $H(7)$, see subsection~\ref{sub:patterns}.}
\label{fig:graphs}
\end{figure}

\noindent
{\bf The contributions of the paper} to data skeletonisation are the following:
\smallskip

\noindent
$\bullet$
Section~\ref{sec:optimality} gives a much simpler proof of Optimality Theorem~\ref{thm:optimality} than for the higher-dimensional version of a Homologically Persistent Skeleton (HoPeS) by Kalisnik et al. \cite{kalisnik2019higher}.  
This result shows that $\hopes$ extends a classical Minimum Spanning Tree to an optimal graph with cycles representing the 1-dimensional shape of a point cloud $C$ by correctly capturing `cycles' of $C$ at any scale $\al$.
Briefly, the reduced skeleton $\hopes(C,\al)$ has the minimum length among all graphs that span $C$ and have the same $H_0$ and $H_1$ homology as the offset $C^{\al}$.
\smallskip

\noindent
$\bullet$
Key Stability Theorem~\ref{thm:stability} of Topological Data Analysis is extended to Graph Reconstruction Theorems~\ref{thm:reconstruction_1stgap} and \ref{thm:reconstruction_derived}, which are proved in section~\ref{sec:reconstruction} after being announced in \cite{kurlin2015one}.
Corollary~\ref{cor:stability_hopes} proves a global stability of derived subgraphs of HoPeS for the first time, which justifies its applicability to noisy data \cite{kurlin2014auto, kurlin2015homologically}. 
\smallskip

\noindent
$\bullet$
Different algorithms are extensively compared in section~\ref{sec:experiments} by their runtime and topological and geometric measures on the same datasets of real and synthetic clouds with high noise whose generation is explained in section~\ref{sec:dataset}.
Figs.~\ref{fig:noise90} and \ref{fig:noise95} show the maximum noise parameters that allow correct reconstructions.

\section{Review of Related Past Work on Data Skeletonisation Algorithms}
\label{sec:review}

Among many skeletonisation algorithms we review the most relevant ones that can accept any point cloud as in Definition~\ref{dfn:metric} and provide guarantees with respect to the skeletonisation problem in section~\ref{sec:intro}, see Appendix~\ref{sec:definitions}.
\smallskip

\noindent 
\textbf{Past algorithms without guarantees}.
Singh et al. \cite{SCP00} approximated a cloud $C\subset\R^d$ by a subgraph of a Delaunay triangulation, which requires $O(n^{\lceil{d/2}\rceil})$ time for $n$ points of $C$
and three thresholds: a minimum number $K$ of edges in a cycle and $\de_{min},\de_{max}$ for inserting/merging 2nd order Voronoi regions.
Similar parameters are needed for{\em principal curves} \cite{kegl2002piecewise}, which were later extended to iteratively computed elastic maps \cite{GZ09}. Since it is hard to estimate a rate of
convergence for iterative algorithms, we discuss below non-iterative methods.
\smallskip

\noindent 
\textbf{Skeletonisation via Reeb graphs}. Starting from a noisy sample $C$ of an unknown graph $G$ with a scale parameter $\al$, X.~Ge et al. \cite{GSBW11} considered the Reeb graph of the Vietoris-Rips complex on $C$ at a scale $\al$, see Definition~\ref{dfn:Cech+VR}. 
\smallskip

\noindent \textbf{The $\al$-Reeb graph} $G$ was introduced by
F.~Chazal et al. \cite{chazal2015gromov} for a metric space $X$ at a user-defined scale $\al$, see Definition~\ref{dfn:alpha-Reeb}. 
If $X$ is $\ep$-close to an unknown graph with edges of minimum length $8\ep$, the output $G$ is $34(\be(G) + 1)\ep$-close to the input $X$, where $\be(G)$ is the first Betti number of $G$, see \cite[Theorem~4.9]{chazal2015gromov}. 
The $\al$-Reeb graph has a metric, but isn't embedded into any space even if $C \subset \R^2$. 
The algorithm has the fast time $O(n\log n)$ for $n$ input points of $X$. 
\smallskip

\noindent
\textbf{Mapper} \cite{singh2007topological} extends any clustering and outputs a network of interlinked clusters by using a user-defined filter function $f:C\to\R$, which helps to link different clusters of a cloud $C$.
M.~Carri\'ere et al. \cite[Theorem 5.2]{stabilityMapper} found a connection between Mapper and the Reeb graph via MultiNerve Mapper. 
\smallskip

\noindent 
\textbf{Metric graph reconstruction}. M.~Aanjaneya et al. \cite{ACCGGM12} studied a related problem approximating a metric on a large input graph $Y$ by a metric on a small output graph $\hat X$. Let $Y$ be a good $\ep$-approximation to an unknown graph $X$. 
Then \cite[Theorem~2]{ACCGGM12} is the first guarantee for  the existence of a homeomorphism $X \to \hat X$ that distorts the metrics on $X$ and $\hat X$ with a multiplicative factor $1 + c\ep$ for $c > \frac{30}{b}$, where $b > 14.5\ep$ is the length of the shortest edge of $X$. 
\smallskip

\noindent
\textbf{Graph reconstruction by discrete Morse theory}.
A Homological Spanning Forest \cite{molina2012homological} 
uses a given pixel grid of 2D images, hence cannot be applied to an arbitrary point cloud $C$.
Similarly, the recent algorithm by T.~Dey et al. \cite{dey2018graph}
in addition to a cloud $C$ requires a density field $\rho:\R^d\to\R$ (usually on a regular grid), which `concentrates' around a hidden graph.
Section~\ref{sec:experiments} compares the three algorithms that accept an input cloud $C$ without any extra structure.

\section{$\hopes(C)$: a Homologically Persistent Skeleton of a cloud $C$}
\label{sec:hopes}

The Homologically Persistent Skeleton (HoPeS) was introduced in 2015 \cite{kurlin2015one} and extended to higher dimensions by S.~Kalisnik et al. \cite[Definition~4.5]{kalisnik2019higher}, which has crucially clarified potential choices of homology representatives.
The 1-dimensional HoPeS was motivated by extending a reconstruction of hole boundaries in unorganised clouds of edge pixels in images to a 1D skeleton \cite{Kur14cvpr, kurlin2016fast}. 

\subsection{A Minimum Spanning Tree of any point cloud in a metric space}

\begin{definition}[$\MST$]
\label{dfn:MST}
Fix a filtration $\{Q(C; \al)\}$ of complexes on a cloud $C$.
A \textit{Minimum Spanning Tree} $\MST(C)$ is a connected graph with vertex set $C$ and the minimum total length of edges.
The length of an edge $e$ is the minimum $\alpha$ such that $e\subset Q(C; \al)$.
For any scale $\al \geq 0$, a \textit{forest} $\MST(C; \al)$ is obtained from $\MST(C)$ by removing interiors of all edges that are longer than $2\al$.
\bsquare
\end{definition}

$\MST(C)$ may not be unique if different edges enter the filtration $\{Q(C;\al)\}$ at the same scale $\al$, e.g. $\MST(C)$ in Fig.~\ref{fig:cloud10points_MST} may include the upper horizontal edge instead of the lower one.
For all these choices, $\MST(C;\al)$ enjoys the optimality in Lemma~\ref{lem:MST_optimal}.
A graph $G$ \emph{spans} a possibly disconnected complex $Q$ on a cloud $C$ if $G$ has the vertex set $C$, every edge of $G$ belongs to $Q$ and the inclusion $G \subset Q$ induces a 1-1 correspondence between connected components.

\begin{lemma}
\label{lem:MST_optimal}
Fix a filtration $\{Q(C; \al)\}$ of complexes on a cloud $C$ in a metric space. 
For any fixed scale $\al \geq 0$, the forest $\MST(C; \al)$ has the minimum total length of edges among all graphs that span $Q(C; \al)$ at the same scale $\al$.  
\end{lemma}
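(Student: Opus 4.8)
The plan is to reduce the statement to the classical exchange/cut property of minimum spanning trees, applied component-by-component to the (possibly disconnected) complex $Q(C;\al)$. First I would set up notation: let $Q = Q(C;\al)$ have connected components $Q_1,\dots,Q_k$ on vertex subsets $C_1,\dots,C_k$ partitioning $C$. By Definition~\ref{dfn:MST}, $\MST(C;\al)$ is obtained from $\MST(C)$ by deleting every edge of length exceeding $2\al$; I would first check that the resulting forest indeed spans $Q$, i.e. that its components are exactly $C_1,\dots,C_k$. This follows because an edge $e$ has length $\le 2\al$ precisely when $e \subset Q(C;\al)$ (using the length definition and that the filtration value of an edge is the scale at which it enters), so the retained edges all lie in $Q$; and conversely any two vertices in the same $Q_i$ are joined in $\MST(C)$ by a path all of whose edges have length $\le 2\al$ — otherwise cutting the longest such edge would, by the standard MST cycle property, contradict minimality of $\MST(C)$, since $Q_i$ connected supplies a cheaper reconnecting edge.

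Next I would prove optimality. Let $G$ be any graph with vertex set $C$ that spans $Q$, so $G$ has exactly $k$ components, matching the $C_i$, and every edge of $G$ lies in $Q$ (hence has length $\le 2\al$). It suffices to show that on each $C_i$ the restriction $\MST(C;\al)|_{C_i}$ has total length no larger than that of $G|_{C_i}$. But $\MST(C;\al)|_{C_i}$ is a spanning tree of $C_i$, while $G|_{C_i}$ is a connected graph on $C_i$; a minimum spanning tree of $C_i$ (with respect to the edge lengths induced by the filtration) has minimum total length among all connected spanning subgraphs on $C_i$, so I only need that $\MST(C;\al)|_{C_i}$ is in fact a minimum spanning tree of $C_i$. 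This is again the cut/exchange property: if some edge $e$ of $\MST(C;\al)|_{C_i}$ were not of minimum length across the cut it induces, one could swap it for a shorter edge of $Q_i$ and lower the total length of $\MST(C)$, a contradiction.

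Summing the per-component inequalities gives $\text{length}(\MST(C;\al)) \le \text{length}(G)$, which is the claim. I expect the main obstacle to be the bookkeeping in the first step — verifying rigorously that deleting the long edges of a single global $\MST(C)$ yields a forest whose components coincide with those of $Q(C;\al)$, and that this forest restricts to a genuine minimum spanning tree on each component. The subtlety is that $\MST(C)$ is defined by a global optimization while $Q(C;\al)$ only sees edges of length $\le 2\al$, so one must use the cycle/cut characterization of MSTs carefully (and handle non-uniqueness of $\MST(C)$, noting the argument works for any valid choice). Once that structural fact is in hand, the optimality bound is immediate from the textbook fact that a spanning tree minimizes total edge length among connected spanning subgraphs.
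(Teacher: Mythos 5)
Your proposal is correct, but it takes a genuinely different route from the paper. The paper's proof is a single global exchange: it writes $\MST(C)=e_1\cup\dots\cup e_m\cup\MST(C;\al)$ where $e_1,\dots,e_m$ are the edges longer than $2\al$, supposes a shorter spanning graph $G$ of $Q(C;\al)$ exists, and observes that $G\cup e_1\cup\dots\cup e_m$ is then a connected graph on the vertex set $C$ that is shorter than $\MST(C)$ — contradicting Definition~\ref{dfn:MST} directly, with no component-by-component analysis and no appeal to the cut/cycle characterisations of minimum spanning trees. (Connectivity of $G\cup e_1\cup\dots\cup e_m$ only needs that each retained edge of $\MST(C;\al)$ lies in $Q(C;\al)$, so that any maximal $\MST(C;\al)$-subpath of a path in $\MST(C)$ can be rerouted through $G$.) You instead decompose $Q(C;\al)$ into components, prove via the cycle property that the forest's components coincide with those of $Q(C;\al)$, prove via cut optimality that each restricted tree is a minimum spanning tree of its component, and sum. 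Your argument is heavier but buys an explicit structural byproduct — that $\MST(C;\al)$ spans $Q(C;\al)$ — which the paper only asserts as a consequence \emph{after} the lemma; the paper's argument is shorter and avoids the MST exchange machinery entirely. Both are valid, and you correctly flag the one real subtlety (the interaction between the global optimisation defining $\MST(C)$ and the local truncation at scale $\al$, plus non-uniqueness), which the paper's one-line proof quietly sidesteps.
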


\begin{proof}
Let $e_1, \dots, e_m \subset \MST(C)$ be all edges longer than $2\al$, so $\MST(C) = e_1 \cup \dots \cup e_m \cup \MST(C; \al)$. 
Assume that there is a graph $G$ that spans $Q(C;\al)$ and is shorter than $\MST(C; \al)$. 
Since both inclusions $G\subset Q(C;\al)$ and $\MST(C; \al)\subset Q(C;\al)$ induce 1-1 maps on connected components, the graph $G \cup e_1 \cup \dots \cup e_m$ is connected and is shorter than $\MST(C)$, which  contradicts Definition~\ref{dfn:MST}. \qedhere
\end{proof}

Lemma~\ref{lem:MST_optimal} implies that, for any scale $\al$, the complex $Q(C;\al)$ can be replaced by the smaller graph $\MST(C;\al)$ that has the same connected components and the minimal total length among all graphs with this connectivity property.
\smallskip

Visually, the 0-dimensional shape (connected components) of each offset $C^{\al}$ on a cloud $C\subset\R^d$ can be represented by the forest $\MST(C;\al)$ instead of the larger $\al$-complex $C(\al)$ homotopically equivalent to $C^{\al}$, see Lemma~\ref{lem:nerve}.

\subsection{Persistent homology and its stability under perturbations}

A Homologically Persistent Skeleton ($\hopes$) will extend the optimality of $\MST$ in Lemma~\ref{lem:MST_optimal} to $H_1$ in Theorem~\ref{thm:optimality}.
$\hopes$ is based on the persistent homology that summarises the evolution of homology 
as formalised below.

\begin{definition}[births and deaths]
\label{dfn:birth-death}
For any filtration $\{Q(C; \al)\}$ of complexes on a cloud $C$ in a metric space, a homology class $\ga \in H_1(Q(C; \al_i))$ is \textit{born} at $\al_i = \birth(\ga)$ if $\ga$ is not in the full image under the induced homomorphism $H_1(Q(C; \al)) \to H_1(Q(C; \al_i))$ for any $\al < \al_i$. The class $\ga$ \textit{dies} at $\al_j = \death(\ga) \geq \al_i$ when the image of $\ga$ under $H_1(Q(C; \al_i)) \to H_1(Q(C; \al_j))$ merges into the image under $H_1(Q(C; \al)) \to H_1(Q(C; \al_j))$ for some $\al < \al_i$.
\bsquare
\end{definition}

The birth-death pairs from Definition~\ref{dfn:birth-death} can be similarly defined for higher-dimensional homology groups $H_k$ with $k>1$ and coefficients in a field, though coefficients in $\Z_2=\{0,1\}$ are used in practice and in our paper.
For $k=0$ and the filtration of offsets $C^{\al}$, the homology group $H_0(C^{\al})$ is generated by connected components of $C^{\al}$.
Then all homology classes of $H_0(C^{\al})$ are born at $\al=0$ (isolated dots) and die at $\al$ equal to half-lengths of edges in $\MST(C)$. 

\begin{definition}[persistence diagram]
\label{dfn:pers-diagram}
Fix a filtration $\{Q(C; \al)\}$ of complexes on a cloud $C$. Let $\al_1, \dots, \al_m$ be all scales when a homology class is born or dies in $H_1(Q(C; \al))$. 
Let $\mu_{ij}$ be the number of independent classes in $H_1(Q(C; \al))$ that are born at $\al_i$ and die at $\al_j$. The \textit{persistence diagram} $\PD\{Q(C; \al)\} \subset \R^2$ is the multi-set consisting of the birth-death pairs $(\al_i, \al_j)$ with integer multiplicities $\mu_{ij} \geq 1$ and all diagonal dots $(x, x)$ with infinite multiplicity.
\bsquare
\end{definition}

For the cloud $C$ in Fig.~\ref{fig:offsets} and the filtration $\{C^{\al}\}$, the homology $H_1$ has 2 classes that persist over $1.5 \leq \al < R$ and $2 \leq \al < R$, where $R = \frac{15}{8}\sqrt{17}$ is the circumradius of the largest Delaunay triangle in $\Del(C)$ with sides $4, \sqrt{17}, 5$. 
We use the value $R \approx 2.577$.
Hence the persistence diagram $\PD\{C^{\al}\}$ has 2 off-diagonal red dots $(1.5, 2.577)$ and $(2, 2.577)$ in the last picture of Fig.~\ref{fig:offsets}. 
\smallskip

\begin{figure}
\centering
\includegraphics[scale = 0.5]{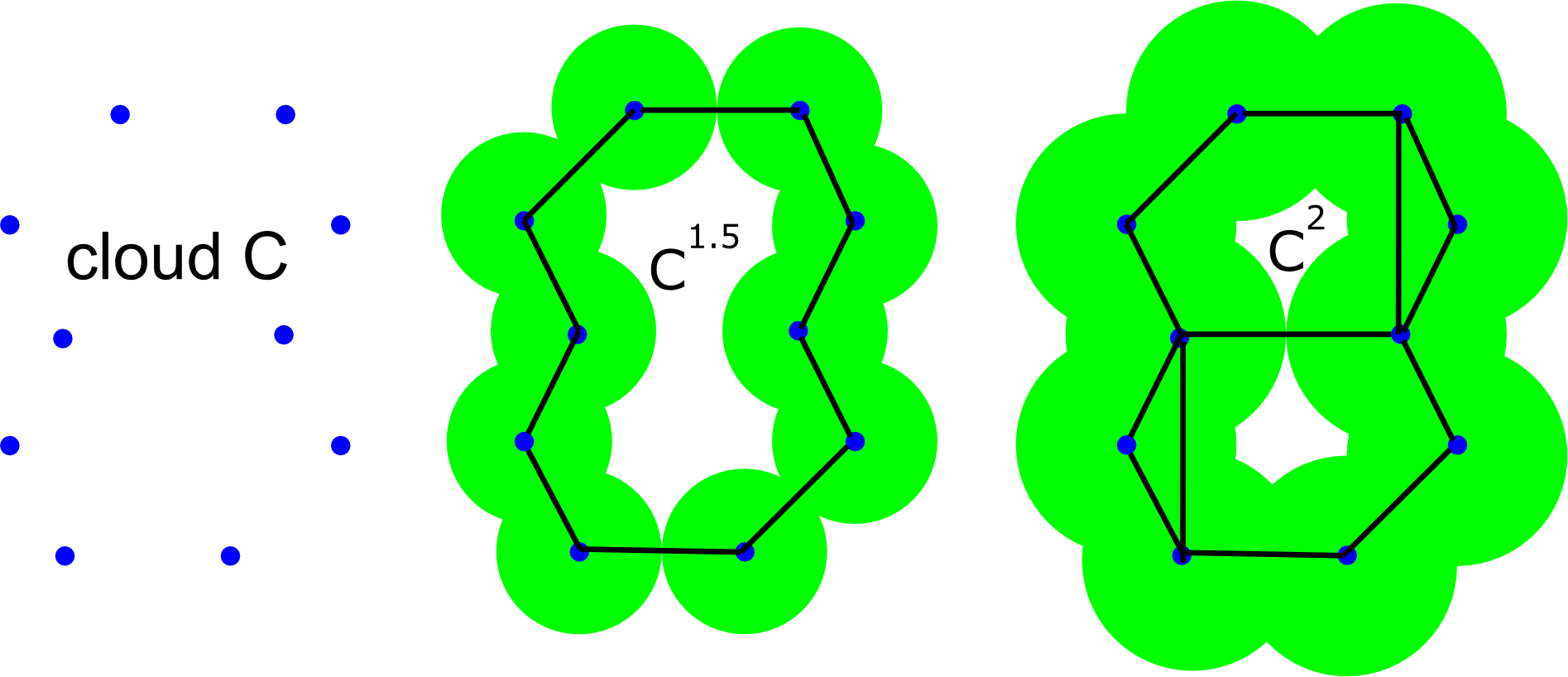}
\includegraphics[scale = 0.7]{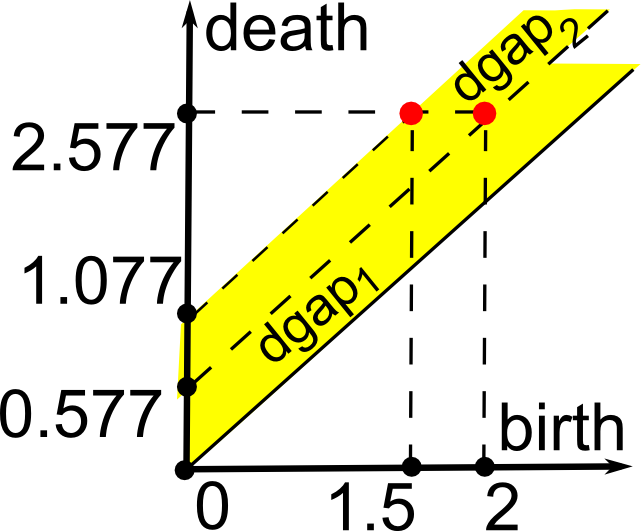}
\caption{A cloud $C$, its $\al$-offsets $C^{1.5}$, $C^2$ and the 1D persistence diagram $\PD\{C^{\al}\}$.}
\label{fig:offsets}
\end{figure}

The persistence diagram $\PD\{C^{\al}\}$ is invariant under any isometry of $C$, i.e. any transformation that preserves distances between points.
The key advantage of $\PD\{C^{\al}\}$ over other geometric invariants
is Stability Theorem~\ref{thm:stability}, which requires the bottleneck distance between persistence diagrams defined below.

\begin{definition}[bottleneck distance $d_B$]
\label{dfn:bottleneck}
For points $p = (x_1, y_1)$, $q = (x_2, y_2)$ in $\R^2$, the \textit{$L_{\infty}$-distance} is $||p - q||_{\infty} = \max\{|x_1 - x_2|, |y_1 - y_2|\}$. 
The \textit{bottleneck distance} between persistence diagrams $\PD, \PD'$ is defined as $d_B = \inf\limits_{\psi}\sup\limits_{q \in \PD} ||q - \psi(q)||_{\infty}$ for all bijections $\psi: \PD \to \PD'$ between multi-sets.
\bsquare
\end{definition}

Stability Theorem~\ref{thm:stability} below informally says that any small perturbation of original data leads to a small perturbation of the persistence diagram. A metric space $M$ is \textit{totally bounded} if $M$ has a finite $\ep$-sample $C \subset M$ for any $\ep > 0$.

\begin{theorem}[stability of persistence]
\label{thm:stability}
\cite[simplified Theorem~5.6]{CdSO14} Let $C$ be any $\ep$-sample of a graph $G$ in a totally bounded metric space $M$. Then the persistence diagrams of \Cech filtrations on $G$ and $C$ are $\ep$-close, namely $d_B(\PD\{\Ch(G, M; \al)\}, \PD\{\Ch(C, M; \al)\}) \leq \ep$.
This inequality holds for the filtrations of $\al$-offsets by Nerve Lemma~\ref{lem:nerve}, namely $d_B(\PD\{G^{\al}\}, \PD\{C^{\al}\}) \leq \ep$.
\end{theorem}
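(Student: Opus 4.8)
The plan is to reduce the statement to the standard interleaving machinery for persistence modules. First I would unpack the hypothesis: ``$C$ is an $\ep$-sample of $G$'' means every point of $G$ lies within distance $\ep$ of $C$, so together with $C\subset G\subset M$ we have Hausdorff distance $d_H(C,G)\le\ep$ in $M$. Fix maps $\pi:C\to G$ and $\rho:G\to C$ with $d(x,\pi(x))\le\ep$ and $d(y,\rho(y))\le\ep$ for all $x\in C$, $y\in G$.

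Next I would promote $\pi$ and $\rho$ to simplicial maps between \Cech complexes at shifted scales. If $\si=\{x_0,\dots,x_k\}$ is a simplex of $\Ch(C,M;\al)$, there is $p\in M$ with $d(p,x_i)\le\al$ for all $i$; then $d(p,\pi(x_i))\le\al+\ep$, so $p\in\bigcap_i B(\pi(x_i),\al+\ep)$ and $\pi(\si)$ is a simplex of $\Ch(G,M;\al+\ep)$. Hence $\pi$ induces a simplicial map $\Ch(C,M;\al)\to\Ch(G,M;\al+\ep)$ for every $\al$, and symmetrically $\rho$ induces $\Ch(G,M;\al)\to\Ch(C,M;\al+\ep)$. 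The composite $\rho\circ\pi$ moves each vertex by at most $2\ep$, and for any $\si$ of $\Ch(C,M;\al)$ the same witness $p$ satisfies $d(p,\rho\pi(x_i))\le\al+2\ep$, so $\si\cup\rho\pi(\si)$ is a simplex of $\Ch(C,M;\al+2\ep)$. Therefore $\rho\pi$ is contiguous to the inclusion $\Ch(C,M;\al)\hookrightarrow\Ch(C,M;\al+2\ep)$, hence the two maps agree on homology; the symmetric statement holds for $\pi\rho$.

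Applying $H_1$ with $\Z_2$-coefficients to this data yields an $\ep$-interleaving of the persistence modules $\{H_1(\Ch(C,M;\al))\}_\al$ and $\{H_1(\Ch(G,M;\al))\}_\al$: the induced maps commute with the internal scale-increase maps up to the shift $2\ep$, which is exactly the definition of an $\ep$-interleaving. By the Algebraic Stability Theorem for persistence modules, an $\ep$-interleaving forces the bottleneck distance of the associated persistence diagrams to be at most $\ep$, giving the first inequality. Finally, the Nerve Lemma~\ref{lem:nerve} supplies homotopy equivalences $\Ch(C,M;\al)\simeq C^{\al}$ and $\Ch(G,M;\al)\simeq G^{\al}$ natural with respect to the filtration inclusions, so the corresponding $H_1$ persistence modules are isomorphic, the diagrams coincide, and substituting $\PD\{\Ch(C,M;\al)\}=\PD\{C^{\al}\}$ (and likewise for $G$) yields $d_B(\PD\{G^{\al}\},\PD\{C^{\al}\})\le\ep$.

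I expect the delicate points to be two. One is verifying the contiguity of $\rho\pi$ to the inclusion carefully, since this is precisely what upgrades a scale-wise comparison to a genuine interleaving of modules. The other, and the real technical heart of the cited Theorem~5.6 of \cite{CdSO14}, is the naturality of the Nerve Lemma in a general totally bounded metric space, where metric balls need not be convex or contractible; rather than reproving this I would invoke their version, and I would similarly cite the Algebraic Stability Theorem.
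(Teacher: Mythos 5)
The paper does not actually prove this theorem: it is quoted as a simplified form of \cite[Theorem~5.6]{CdSO14}, so there is no internal proof to compare against, and your sketch is a faithful reconstruction of the standard argument from that source (Hausdorff-close sets give correspondences inducing simplicial maps with an $\ep$-shift between \Cech complexes, contiguity upgrades these to an $\ep$-interleaving of the $H_1$ persistence modules, algebraic stability bounds $d_B$ by $\ep$, and the functorial Nerve Lemma transfers the bound to the offset filtrations). One small slip: for the noisy clouds of this paper an $\ep$-sample need not satisfy $C\subset G$, but your argument only uses $d_H(C,G)\leq\ep$ and the existence of the maps $\pi,\rho$, which is the correct reading of the hypothesis, so nothing breaks; you also rightly identify that the genuine technical content lives in the naturality of the Nerve Lemma and in the Algebraic Stability Theorem, both of which are legitimately deferred to the cited reference.
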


\begin{figure}
\centering
\begin{minipage}{0.47\textwidth}
\centering
\includegraphics[height=30mm]{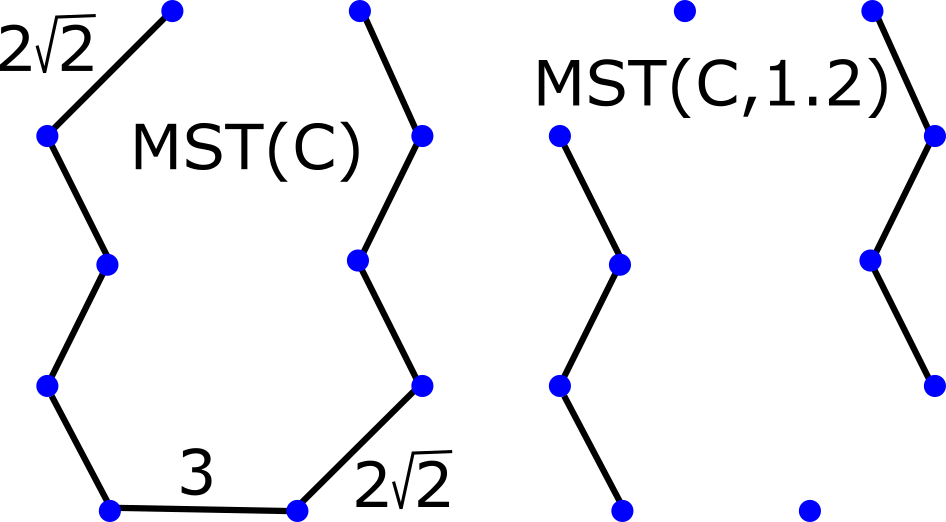}
\captionof{figure}{A full tree $\MST(C)$ and the forest $\MST(C,1.2)$ for the cloud $C$ in Fig.~\ref{fig:offsets}.}
\label{fig:cloud10points_MST}
\end{minipage} \hspace{1em}
\begin{minipage}{0.47\textwidth}
\centering
\includegraphics[height=30mm]{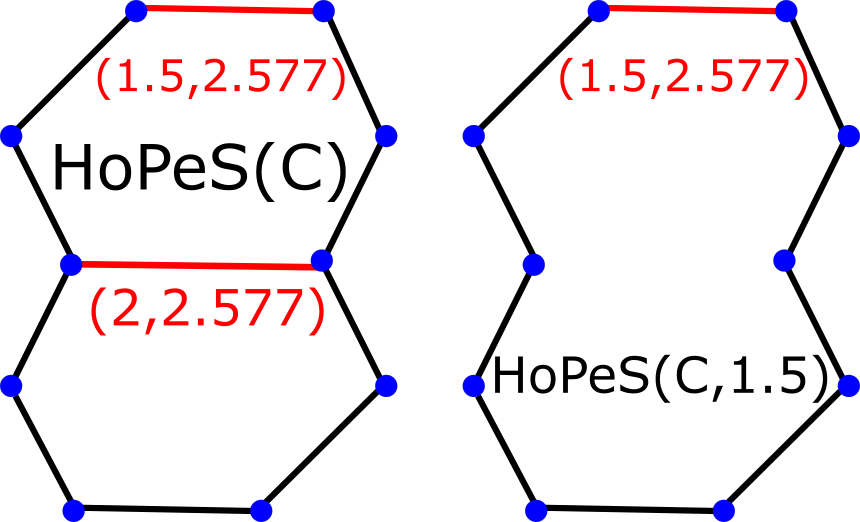}
\captionof{figure}{Full $\hopes(C)$ and reduced subgraph $\hopes(C,1.5)$ for $C$ in Fig.~\ref{fig:offsets}.}
\label{fig:cloud10points_hopes}
\end{minipage}
\end{figure}

\subsection{Gap search method for stable subdiagrams of a persistence diagram}
\label{sub:gap-search}

Stability Theorem~\ref{thm:stability} extends to important subdiagrams with highly persistent dots representing features that persist over a long interval of a scale. 

\begin{definition}[diagonal gaps and subdiagrams]
\label{dfn:diagonal-gaps}
For any metric space $C$, a \textit{diagonal gap} of $\PD\{C^{\al}\}$ is a strip $\{0 \leq a < y - x < b\}$ that has dots in both boundary lines $\{y - x = a\}$ and $\{y - x = b\}$, but not inside the strip. For any $k \geq 1$, the $k$-th \textit{widest diagonal gap} $\dgap_k(C)$ has the $k$-th largest vertical width $|\dgap_k(C)| = b - a$.
The \textit{subdiagram} $\dPD_k(C) \subset \PD\{C^{\al}\}$ consists of only the dots above the lowest of the first $k$ widest $\dgap_i(C)$, $i = 1, \dots, k$.
Each $\dPD_k(C)$ is bounded below by $y - x = a$ and has the \textit{diagonal scale} $\ud_k(C) = a$.
\bsquare
\end{definition}

In Definition~\ref{dfn:diagonal-gaps}, if $\PD\{C^{\al}\}$ has different gaps of the same width, we say that a lower diagonal gap has a larger width. 
If $\PD\{C^{\al}\}$ has dots only in $m$ different lines $\{y - x = a_i > 0\}$, $i = 1, \dots, m$, we have $m$ diagonal gaps $\dgap_i(C)$. 
We ignore the highest gap $\{y - x > \max a_i\}$, so we set $\dgap_i(C) = \emptyset$ 
for $i > m$.

\begin{figure}
\centering
\begin{minipage}{0.47\textwidth}
\centering
\includegraphics[height=18mm]{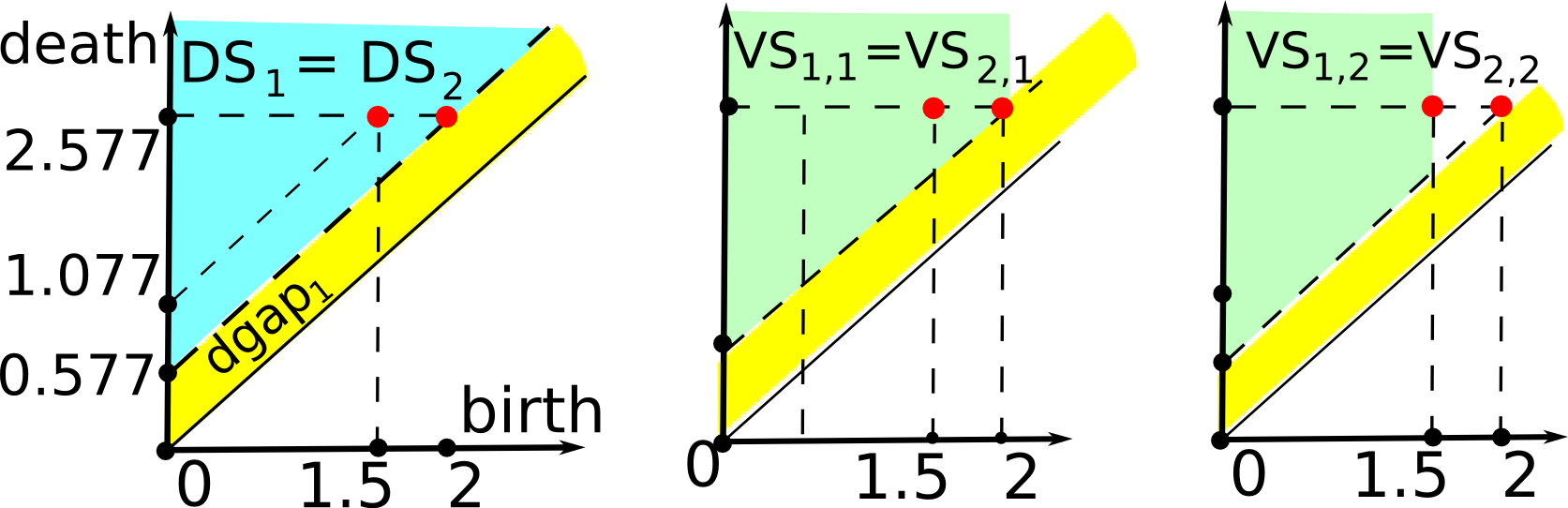}
\captionof{figure}{Subdiagrams $\dPD_k(C), \vPD_{k, l}(C)$ from Definitions~\ref{dfn:diagonal-gaps} and \ref{dfn:vertical-gaps} for $C$ in Fig.~\ref{fig:offsets}.
}
\label{fig:cloud10-allgaps}
\end{minipage} \hspace{1em}
\begin{minipage}{0.47\textwidth}
\centering
\includegraphics[height=18mm]{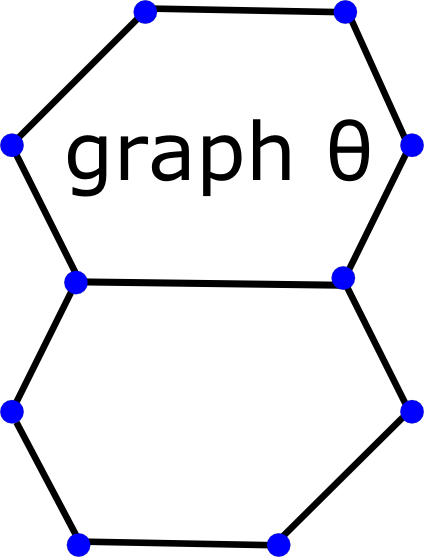} 
\hspace*{1pt}
\includegraphics[height=18mm]{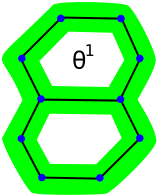} \hspace*{1pt}
\includegraphics[height=18mm]{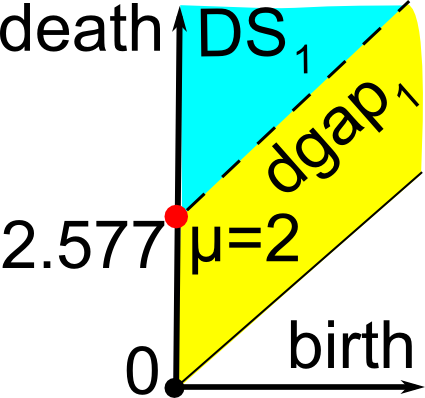}
\captionof{figure}{The graph $G\subset \R^2$ with the offset $G^{1}$ and 1D persistence diagram $\PD\{G^{\al}\}$.}
\label{fig:theta+offsets+persistence}
\end{minipage}
\end{figure}

The cloud $C$ in Fig.~\ref{fig:offsets} has the persistence diagram $\PD\{C^{\al}\}$ with only 2 off-diagonal dots $(1.5, 2.577)$ and $(2, 2.577)$. Then $\dgap_1(C) = \{0 < y - x < 0.577\}$ is below $\dgap_2(C) = \{0.577 < y - x < 1.577\}$. So $\dPD_1(C)=\dPD_2(C)$ consists of $(1.5, 2.577), (2, 2.577)$, $\ud_1(C) = \ud_2(C) = 0.577$ in Fig.~\ref{fig:cloud10-allgaps}. 
Definition~\ref{dfn:diagonal-gaps} makes sense in any persistence diagram, say for $\al$-offsets $G^{\al}$ of a graph $G \subset M$. 
The graph $G$ in Fig.~\ref{fig:theta+offsets+persistence} has the 1D diagram $\PD\{G^{\al}\}$ containing only one off-diagonal dot $(0, 2.577)$ of multiplicity 2.
Then $\dgap_1(G)=\{0 < y - x < 2.577\}$, $|\dgap_2(G)| = 0$, $\dPD_1(G) = \{(0, 2.577), (0, 2.577)\}$ and $\ud_1(G) = 2.577$. \smallskip

If we need to reconstruct cycles of $G$ from a noisy sample $C$ using the perturbed diagram $\PD\{C^{\al}\}$, we should look for birth-death pairs having a high persistence $\death-\birth$ and small $\birth$. 
Hence we search for a vertical gap to separate the dots $(\birth, \death)$ with a small $\birth$ from all other dots.

\begin{definition}[vertical gap $\vgap_{k, l}$, subdiagram $\vPD_{k, l}$, scale $\uv_{k, l}$]
\label{dfn:vertical-gaps}
In the diagonal subdiagram $\dPD_k(C)$ from Definition~\ref{dfn:diagonal-gaps}, a \textit{vertical gap} is the widest vertical strip $\{a < x < b\}$ whose boundary contains a dot from $\dPD_k(C)$ in the line $\{x = a\}$, but not inside the strip. For $l \geq 1$, the $l$-th \textit{widest vertical gap} $\vgap_{k, l}(C)$ has the $l$-th largest horizontal width $|\vgap_{k,l}(C)| = b - a$. The \textit{vertical subdiagram} $\vPD_{k, l}(C) \subset \dPD_k(C)$ consists of only the dots to the left of the first $l$ widest vertical gaps $\vgap_{k, j}(C)$, $j = 1, \dots, l$. So each $\vPD_{k, l}(C)$ is bounded on the right by $x = b$ and has the \textit{vertical scale} $\uv_{k, l}(C) = a$.
\bsquare
\end{definition}

In Definition~\ref{dfn:vertical-gaps}, if there are different vertical gaps with the same horizontal width, we say that the leftmost vertical gap has a larger width. We prefer the leftmost vertical gap, while in Definition~\ref{dfn:diagonal-gaps} we prefer the lowest diagonal gap.
\smallskip

We allow the case $b = +\infty$, so the widest vertical gap $\vgap_{k, 1}(C)$ always has the form $\{x > a\}$, $VS_{k, 1}(C) = \dPD_k(C)$ and we set $|\vgap_{k, 1}(C)| = +\infty$. 
\smallskip

For the cloud $C$ in Fig.~\ref{fig:offsets}, the diagonal subdiagram $\dPD_1(C)$ has $\vgap_{1, 1}(C) = \{x > 2\}$, $\vgap_{1, 2}(C) = \{1.5 < x <2\}$, $\vPD_{1, 1}(C) = \{(1.5, 2.577), (2, 2.577)\}$ and  $\vPD_{1, 2}(C) = \{(1.5, 2.577)\}$, so $\uv_{1, 1}(C) = 2$, $\uv_{1, 2}(C) = 1.5$, see Fig.~\ref{fig:cloud10-allgaps}. 
Any cloud $C$ has no cycles at $\al = 0$, hence $\PD\{C^{\al}\}$ has no dots in the vertical axis, so any $\vgap_{k, l}(C)$ has a left boundary line $\{x = a > 0\}$ and all $\uv_{k, l}(C) > 0$. 
\smallskip

Definition~\ref{dfn:vertical-gaps} makes sense for any persistence diagram, say for $\al$-offsets $G^{\al}$ of a graph $G$ in a metric space. 
The diagonal subdiagram $\dPD_1(\theta)$ for the graph $\theta$ in Fig.~\ref{fig:theta+offsets+persistence} consists of the doubled dot $(0,2.577)$. 
The only vertical subdiagram $\vPD_{1, 1}(\theta)$ consists of the same doubled dot $(0, 2.577)$, so $\uv_{1, 1}(\theta) = 0$.

\subsection{$\hopes(C)$ is the persistence-based extension of $\MST(C)$ for a cloud $C$}

$\hopes(C)$ will be obtained from $\MST(C)$ by adding critical edges below.

\begin{definition}[critical edges]
\label{dfn:critical}
\cite[modified from Definition 4.5]{kalisnik2019higher} 
Fix a filtration $\{Q(C; \al)\}$ of complexes on a cloud $C$. 
An edge $e$ is \textit{critical} if the addition of $e$ to the filtration $\{Q(C; \al)\}$ creates a new homology class that does not immediately die, see two red critical edges in Fig.~\ref{fig:cloud10points_hopes}. 
The \textit{birth} of a critical edge $e$ is the scale $\al$ when $e$ enters the filtration.
Define $E(\al) = \{e_1, ..., e_s\}$ to be the set of critical edges with $\birth(e_i) \leq \al$ that have not yet been assigned a death value. 
Then $[e_1], ..., [e_s]$ form a basis of $H_1((\MST(C; \al) \cup E(\al)) \: / \: \MST(C; \al))$. 
Denote 
$$f: H_1((\MST(C; \al) \cup E(\al)) \: / \: \allowbreak \MST(C; \al)) \to H_1(Q(C; \al) \: / \: \allowbreak \MST(C; \al))$$ to be the map on homology induced by the inclusion 
$$(\MST(C; \al) \cup E(\al) ) \: / \: \allowbreak \MST(C; \al) \to Q(C; \al) \: / \: \MST(C; \al).$$ 

Let $\{b_1, ..., b_r\}$ be a basis of $\text{ker}(f)$, where $r = \text{dim ker}(f)$ is equal to the number of critical edges that die at exactly scale $\al$, so we have $r \leq s$. We can expand each basis element as $b_i = \sum_{j = 1}^{s} c_{ij}[e_j]$, with $c_{ij} \in\Z_2$ (coefficients for $H_1$). 
We consider (in $\Z_2$) the system of equations $\sum_{j = 1}^{s} c_{ij}x_j = 0$ for $i \in \{1, ..., r\}$. Since basis elements are linearly independent, so are these equations. Then this system of equations can be solved with $r$ leading variables, each of which is expressed in terms of the remaining $s - r$ free variables. 
Let $\mathit{I} \subseteq \{1, ..., s\}$ be the set of all indices of leading variables ($\mathit{I}$ could be empty if $f$ is trivial). 
For each $i \in \mathit{I}$, we declare the \emph{death} value of the critical edge $e_i$ to be $\al$.
\bsquare
\end{definition}

The \emph{elder rule} selects those leading variables whose corresponding set of critical edges has the greatest combined birth value. 
If there are distinct sets with the greatest combined birth value, then we have a genuine choice.
However, Optimality Theorem~\ref{thm:optimality} will be valid regardless of the choice.

\begin{definition}[HoPeS]
\label{dfn:hopes}
For any filtration $\{Q(C; \al)\}$ of complexes on a cloud $C$, a \textit{Homologically Persistent Skeleton} $\hopes(C)$ is the union of a minimum spanning tree $\MST(C)$ and all critical edges with labels $(\birth,\death)$.
For any $\al \geq 0$, the \textit{reduced} skeleton $\hopes(C; \al)$ is obtained from $\hopes(C)$ by removing all edges longer than $2\al$ and all critical edges $e$ with $\death(e) \leq \al$.
\bsquare
\end{definition}

\section{Optimality guarantees for reduced subgraphs of $\hopes(C)$}
\label{sec:optimality}

Optimality Theorem~\ref{thm:optimality} will prove that,  for any scale $\al$, the reduced subgraphs $\hopes(C;\al)$ are minimal graphs that span a cloud $C$ and have the same components and cycles as the offset $C^{\al}$.
We fix a filtration $\{Q(C; \al)\}$ of complexes on a cloud $C$, but use the simpler notation $\hopes(C)$ for the skeleton. 

\begin{lemma}
\label{lem:length=birth}
Let a class $\ga \in H_1(Q(C;\al))$ be born due to a critical edge $e$ added to $Q(C;\al)$. 
Then $2\birth(\ga)$ is the length $|e|$ relative to the filtration $\{Q(C; \al)\}$.
\end{lemma}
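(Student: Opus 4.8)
The plan is to unwind the definitions of \textit{birth} of a homology class and \textit{birth} (hence length) of a critical edge, and show the two quantities coincide up to the factor $2$ coming from the offset-radius convention.

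First I would recall that, by Definition~\ref{dfn:critical}, a critical edge $e$ is one whose addition to the filtration $\{Q(C;\al)\}$ creates a new $H_1$-class that does not immediately die, and its \textit{birth} is declared to be the scale $\al_0$ at which $e$ first enters the filtration; by Definition~\ref{dfn:MST}, the length $|e|$ relative to $\{Q(C;\al)\}$ is exactly this minimal $\al_0$ with $e\subset Q(C;\al_0)$. So $\birth(e)=\al_0=|e|$ is immediate from the definitions, and it only remains to relate $\al_0$ to $\birth(\ga)$. Here I would use the hypothesis that the class $\ga$ is \emph{born due to} the critical edge $e$: the $1$-cycle representing $\ga$ uses $e$ together with a path in $\MST(C;\al_0)$ (or more generally in $Q(C;\al_0)$ minus $e$), so $\ga$ exists in $H_1(Q(C;\al))$ precisely once $\al\ge\al_0$ and not before — for $\al<\al_0$ the edge $e$ is absent and, since $e$ is critical (its removal kills exactly this class), no cycle in the image of $H_1(Q(C;\al'))\to H_1(Q(C;\al))$ for $\al'<\al_0$ equals $\ga$. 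This is the content of Definition~\ref{dfn:birth-death}: $\birth(\ga)=\al_0$.

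Finally I would reconcile the factor of $2$. The convention in this paper (Definition~\ref{dfn:MST}) is that the \textit{length} of an edge $e$ is the minimal $\al$ with $e\subset Q(C;\al)$, and in the standard filtrations used here ($\Cech$/offset or Vietoris--Rips) an edge between two points at distance $d$ enters at scale $\al=d/2$, i.e. $|e|=d/2$ in this length convention, while the ``geometric length'' of $e$ is $d=2|e|$. The statement ``$2\birth(\ga)$ is the length $|e|$'' must therefore be read with $|e|$ denoting the Euclidean/geometric length of the edge, so that $\birth(\ga)=\birth(e)=\al_0=d/2$ and $2\birth(\ga)=d=|e|$. I would make this bookkeeping explicit, pointing out that $\birth(\ga)$ is a scale (offset radius) whereas $|e|$ in the conclusion is a distance, and the factor $2$ is exactly the radius-versus-diameter discrepancy.

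The main obstacle I anticipate is not any deep argument but rather being careful about which ``length'' is meant and ensuring that the critical edge $e$ is genuinely the edge \emph{responsible} for $\ga$'s birth — i.e. that no cycle realising $\ga$ already exists in $Q(C;\al)$ for $\al$ strictly below $\al_0$. This needs the criticality of $e$ (equivalently, that $e\notin\MST(C)$ and $[e]\ne 0$ in the appropriate relative homology of Definition~\ref{dfn:critical}), together with the fact that $\MST(C;\al)$ spans the connected components of $Q(C;\al)$ (Lemma~\ref{lem:MST_optimal}); once those are in hand, the equality of scales is forced. I would phrase the final proof as: $2\birth(\ga)=2\al_0=2\,\birth(e)=|e|$, citing Definitions~\ref{dfn:MST}, \ref{dfn:birth-death} and~\ref{dfn:critical} at the appropriate steps.
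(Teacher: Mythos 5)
Your proof is correct and follows essentially the same definition-unwinding route as the paper's: identify $\birth(\ga)$ with the scale $\al_0$ at which the critical edge $e$ first enters the filtration, then convert that scale into the edge length. Your explicit reconciliation of the factor of $2$ is in fact more careful than the paper itself, which in Definition~\ref{dfn:MST} declares the length of $e$ to be the minimal $\al$ with $e\subset Q(C;\al)$ but in the proof treats $|e|$ as the doubled scale $2\al$; reading $|e|$ as the geometric (diameter-convention) length, as you do, is the intended interpretation and makes the statement consistent.
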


\begin{proof}
By Definition~\ref{dfn:critical} the critical edge $e$ is the last edge added to a cycle $L \subset Q(C; \al)$ giving birth to the homology class $\ga$ at $\al = \birth(\ga)$.
The length $|e|$ equals the doubled scale $2\al$ when $e$ enters $Q(C; \al)$, so $|e| = 2 \cdot \birth(\ga)$. 
\qedhere
\end{proof}

\begin{lemma}
\label{lem:hopes_in_complex} 
The reduced skeleton $\hopes(C; \al)$ is a subgraph of the complex $Q(C;\al)$ for any $\al$, e.g. $\hopes(C; \al)\subset\Del(C)$ for the filtration of $\al$-complexes. 
\end{lemma}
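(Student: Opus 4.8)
\textbf{Proof proposal for Lemma~\ref{lem:hopes_in_complex}.}

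The plan is to check the containment edge by edge, using the two types of edges that make up $\hopes(C;\al)$: the surviving tree edges and the surviving critical edges. By Definition~\ref{dfn:hopes}, $\hopes(C;\al)$ is obtained from $\MST(C)\cup\{\text{critical edges}\}$ by deleting every edge of length exceeding $2\al$ and every critical edge $e$ with $\death(e)\le\al$. So any edge $e$ that remains in $\hopes(C;\al)$ satisfies $|e|\le 2\al$, where $|e|$ is the length relative to the filtration $\{Q(C;\al)\}$, i.e. the minimal scale at which $e$ enters the filtration. By the definition of this length, $|e|\le 2\al$ means $e\subset Q(C;\al')$ for the scale $\al'$ with $2\al'=|e|\le 2\al$; since $\{Q(C;\al)\}$ is a filtration (nested in $\al$), monotonicity gives $e\subset Q(C;\al)$.

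First I would record the trivial part: the vertex set of $\hopes(C;\al)$ is $C$, and each vertex (a $0$-simplex) lies in $Q(C;\al)$ for every $\al\ge 0$ by the definition of a filtration of complexes on $C$. Then I would handle the tree edges: a tree edge $e\subset\MST(C)$ survives in $\hopes(C;\al)$ only if $|e|\le 2\al$, hence $e\subset Q(C;\al)$ by the monotonicity argument above. Next the critical edges: a critical edge $e$ survives only if both $|e|\le 2\al$ and $\death(e)>\al$; the first condition already gives $e\subset Q(C;\al)$ by the same argument (the condition on the death value is irrelevant for mere containment, it only controls which homology classes are still alive). Since every vertex and every edge of $\hopes(C;\al)$ lies in $Q(C;\al)$, and $\hopes(C;\al)$ is a $1$-dimensional complex, it is a subcomplex of $Q(C;\al)$, proving the lemma. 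The special case $\hopes(C;\al)\subset\Del(C)$ for the filtration of $\al$-complexes follows because each $\al$-complex $C(\al)$ is a subcomplex of the Delaunay triangulation $\Del(C)$, so $\hopes(C;\al)\subset C(\al)\subset\Del(C)$.

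I do not expect any genuine obstacle here; the statement is essentially unwinding Definitions~\ref{dfn:MST}, \ref{dfn:critical} and \ref{dfn:hopes}. The one point requiring a little care is the precise meaning of ``length relative to the filtration'': I would make explicit that by Definition~\ref{dfn:MST} the length of an edge $e$ is the minimum $\al$ with $e\subset Q(C;\al)$, and that critical edges are assigned their birth scale as length in the same way (Definition~\ref{dfn:critical}), so that the single inequality $|e|\le 2\al$ uniformly handles both tree and critical edges. Once that is pinned down, the monotonicity of the filtration does all the work.
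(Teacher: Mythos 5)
Your proposal is correct and follows essentially the same route as the paper: split the edges of $\hopes(C;\al)$ into surviving tree edges and surviving critical edges, observe that each has length at most $2\al$, and conclude containment from the definition of edge length relative to the filtration (the paper invokes Lemma~\ref{lem:MST_optimal} for the forest part, whereas you unwind the length definition directly, but this is the same argument in substance). Your explicit remark that the death condition is irrelevant for containment and the final step $C(\al)\subset\Del(C)$ match the paper's intent.
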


\begin{proof}
By Definition~\ref{dfn:hopes}, the skeleton $\hopes(C; \al)$ for any $\al$ consists of the forest $\MST(C; \al)$ and all critical edges with lengths $\death(e) \leq |e| \leq 2\al$. Lemma~\ref{lem:MST_optimal} implies that $\MST(C; \al) \subset Q(C; \al)$. Any critical edge $e \subset \hopes(C)$ belongs to $Q(C; \al)$ for $2\al \geq |e|$. Hence $\hopes(C; \al) \subset Q(C; \al)$. 
\qedhere
\end{proof}
 
The inclusion $\hopes(C; \al) \subset Q(C; \al)$ in Lemma~\ref{lem:hopes_in_complex} induces a homomorphism $f_*$ in $H_1$. Lemmas~\ref{lem:critical-addition} and~\ref{lem:critical-deletion} analyse what happens with $f_*$ when a critical edge $e$ is added to (or deleted from) $Q = Q(C; \al)$ and $G = \hopes(C; \al)$.

\begin{lemma}[addition of a critical edge]
\label{lem:critical-addition}
Let an inclusion $f:G \to Q$ of a graph $G$ into a simplicial complex $Q$ induce an isomorphism $f_*: H_1(G) \to H_1(Q)$. Between some vertices $u, v \in G$ let us add an edge $e$ to $G$ and $Q$ that creates $\ga \in H_1(Q \cup e)$. Then $f_*$ extends to an isomorphism $H_1(G \cup e) \to H_1(Q \cup e)$.
\end{lemma}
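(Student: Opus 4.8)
The plan is to use a Mayer–Vietoris type argument, or more directly, to exploit the long exact sequence of the pair together with the fact that adding one edge can create at most one new homology class. Since we are working with $\Z_2$ coefficients and one-dimensional homology, the situation is particularly rigid.

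Here is how I would carry it out. First I would observe that adding an edge $e$ between vertices $u,v$ to a complex $X$ either (a) connects two previously distinct components, reducing $b_0$ by one and leaving $H_1$ unchanged, or (b) joins two vertices already in the same component, in which case $b_0$ is unchanged and $b_1$ increases by exactly one. This is a standard consequence of the long exact sequence of the pair $(X\cup e, X)$: the quotient $(X\cup e)/X$ is (homotopy equivalent to) a wedge of one circle with $X/X$ collapsed, so $H_1((X\cup e)/X) = \Z_2$ and $H_0((X\cup e)/X)=0$, giving
\[
0 \to H_1(X) \to H_1(X\cup e) \to \Z_2 \xrightarrow{\partial} \tilde H_0(X) \to \tilde H_0(X\cup e) \to 0,
\]
and the connecting map $\partial$ sends the generator to $[u]-[v]$, which is zero precisely in case (b). Since the hypothesis says adding $e$ to $Q$ creates a new class $\ga \in H_1(Q\cup e)$, we are in case (b) for $Q$: $u$ and $v$ lie in the same component of $Q$. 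But $f_*\colon H_1(G)\to H_1(Q)$ being an isomorphism, together with $f$ inducing a bijection on components (implicit since $G = \hopes(C;\al)$ spans $Q$ — or, if one prefers, since an $H_1$-isomorphism from a graph forces the Euler-characteristic/component count to match), forces $u$ and $v$ to lie in the same component of $G$ as well. Hence adding $e$ to $G$ is also a case-(b) addition, and $b_1(G\cup e) = b_1(G)+1 = b_1(Q)+1 = b_1(Q\cup e)$.

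Next I would promote this dimension count to a genuine isomorphism. The inclusions fit into a commutative ladder of the two long exact sequences for the pairs $(G\cup e, G)$ and $(Q\cup e, Q)$, with vertical maps induced by $f$ and $f\cup\mathrm{id}_e$; the map on the relative terms $H_1((G\cup e)/G)\to H_1((Q\cup e)/Q)$ is an isomorphism $\Z_2\to\Z_2$ since in both cases the relative cycle is generated by the same added edge $e$ and the reductions of its endpoints. A short diagram chase (the five-lemma applied to the relevant four-term piece, using that $f_*$ is already an isomorphism on $H_1$ and on $\tilde H_0$) then yields that $H_1(G\cup e)\to H_1(Q\cup e)$ is an isomorphism. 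Alternatively, and perhaps more cleanly for the paper's tone, one picks an explicit generator: choosing a path $\delta_G$ in $G$ from $u$ to $v$ and its image path $\delta_Q=f(\delta_G)$ in $Q$, the classes $[\delta_G\cup e]$ and $[\delta_Q\cup e]$ generate the respective new summands and correspond under $f_*$, so $H_1(G\cup e) = H_1(G)\oplus\langle[\delta_G\cup e]\rangle \cong H_1(Q)\oplus\langle[\delta_Q\cup e]\rangle = H_1(Q\cup e)$.

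The main obstacle I anticipate is purely bookkeeping rather than conceptual: making precise the passage ``$f_*$ is an $H_1$-isomorphism, therefore $u,v$ are in the same component of $G$.'' This needs the (reasonable, and true in the intended application) fact that $f$ also induces a bijection on connected components — which for $G=\hopes(C;\al)\subset Q(C;\al)$ holds because $\hopes(C;\al)$ contains the spanning forest $\MST(C;\al)$, giving the same component structure as $Q(C;\al)$ by Lemma~\ref{lem:MST_optimal}. I would state this component-bijection hypothesis explicitly (or note it follows from the spanning property) so that the exact-sequence argument above goes through without a gap; everything else is a routine application of the long exact sequence of a pair over $\Z_2$.
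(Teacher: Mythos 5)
Your proposal is correct, and your ``cleaner alternative'' at the end --- pick a path $\delta_G$ from $u$ to $v$ in $G$, note that $[\delta_G\cup e]$ generates the new summand $H_1(G\cup e)\cong H_1(G)\oplus\langle[\delta_G\cup e]\rangle$, and send it to $[f(\delta_G)\cup e]$ --- is exactly the paper's proof, which chooses a cycle $L\subset G\cup e$ through $e$ and maps $[L]$ to $[f(L)]$. Your primary route via the long exact sequence of the pairs $(G\cup e,G)$ and $(Q\cup e,Q)$ and the five lemma is a genuinely different (more formal) way to reach the same conclusion; it buys a cleaner separation between the dimension count and the naturality of the isomorphism, at the cost of more machinery than the paper's two-line splitting argument. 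More valuably, you correctly identify a point the paper's proof glosses over: the existence of a cycle $L\subset G\cup e$ containing $e$ requires $u$ and $v$ to lie in the same component of $G$, which does not follow from the stated hypothesis that $f_*$ is an isomorphism on $H_1$ alone, but does follow once one adds that $f$ induces a bijection on connected components --- automatic in the intended application since $\hopes(C;\al)\supset\MST(C;\al)$ spans $Q(C;\al)$. One small correction: your parenthetical that ``an $H_1$-isomorphism from a graph forces the component count to match'' is false (two points versus one point is a counterexample), so you should rely on the explicit spanning/component-bijection hypothesis as you ultimately propose, not on that shortcut.
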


\begin{proof}
Let $L \subset G \cup e$ be a cycle containing the added edge $e$. Then $f$ extends to the inclusion $G \cup e \to Q \cup e$ and induces the isomorphism $H_1(G \cup e) \cong H_1(G) \oplus \langle[L]\rangle$. Now $f(L)$ is a cycle in $Q \cup e$ and $H_1(Q \cup e) \cong H_1(Q) \oplus \langle[f(L)]\rangle$. Mapping $[L]$ to $[f(L)] \in H_1(Q \cup e)$, we extend $f_*$ to a required isomorphism $H_1(G) \oplus \langle[L]\rangle \to H_1(Q) \oplus \langle[f(L)]\rangle$. 
\qedhere
\end{proof}
 
\begin{lemma}[deletion of a critical edge]
\label{lem:critical-deletion}
Let an inclusion $f: G \to Q$ of a graph $G$ into a simplicial complex $Q$ induce an isomorphism $f_*: H_1(G) \to H_1(Q)$. 
Let $\ga \in H_1(Q)$ die after adding a triangle $T$ to $Q$. 
Let $e$ be a longest open edge of the triangle $T$. Then $f_*$ descends to an isomorphism $H_1(G - e)\to H_1(Q \cup T)$.
\end{lemma}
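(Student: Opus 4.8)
The plan is to pin down both sides homologically and then match them by a splitting argument, isolating a single geometric fact about the filtration. First I would describe $H_1(Q\cup T)$: attaching the one $2$-cell $T$ to $Q$ along the $1$-cycle $\bd T$ leaves $Z_1$ unchanged and enlarges $B_1$ only by $\langle[\bd T]\rangle$, so $H_1(Q\cup T)=H_1(Q)/\langle[\bd T]\rangle$ and the inclusion $\iota\colon Q\hookrightarrow Q\cup T$ induces the quotient map with $\ker\iota_*=\langle[\bd T]\rangle$. Since ``$\ga$ dies'' means $\ga\neq 0$ in $H_1(Q)$ while $\iota_*\ga=0$, and $\langle[\bd T]\rangle$ has a single nonzero element over $\Z_2$, we get $\ga=[\bd T]$. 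Hence the map to be shown to be an isomorphism is the composite
\[
H_1(G-e)\xrightarrow{\ j_*\ }H_1(G)\xrightarrow{\ f_*\ }H_1(Q)\longrightarrow H_1(Q)/\langle\ga\rangle=H_1(Q\cup T),
\]
where $j\colon G-e\hookrightarrow G$ is the inclusion; here $e$ must be an edge of $G$ for the statement to be non-vacuous (in the HoPeS application $e$ is the critical edge whose death is declared at this scale), and $H_1(G)=Z_1(G)$ since a graph has no $2$-cells.

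The geometric input I would isolate as a separate claim is that \textbf{the longest edge $e$ of $T$ lies in no $2$-simplex of $Q$}, equivalently that $e$ is a free face of $T$ in $Q\cup T$. Granting this, every generator $\bd\tau$ of $B_1(Q)$ comes from a $2$-simplex $\tau$ of $Q$ not containing $e$, so no chain of $B_1(Q)$ contains $e$; therefore every representative $\bd T+b$ (with $b\in B_1(Q)$) of $\ga=[\bd T]$ still contains $e$ with coefficient $1$ in $\Z_2$. In particular no cycle of $G-e$ represents $\ga$, so $f_*^{-1}(\ga)\notin Z_1(G-e)=\operatorname{im}j_*$. Since $f_*^{-1}(\ga)$ is a nonzero element of $Z_1(G)$, this forces $Z_1(G-e)\subsetneq Z_1(G)$, so $e$ lies on a cycle of $G$, deleting it drops the cycle rank by exactly one, and $Z_1(G)=Z_1(G-e)\oplus\langle f_*^{-1}(\ga)\rangle$. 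Applying $f_*$ gives $H_1(Q)=f_*Z_1(G-e)\oplus\langle\ga\rangle$, so the quotient $H_1(Q)\to H_1(Q)/\langle\ga\rangle$ restricts to an isomorphism on $f_*Z_1(G-e)$; precomposing with the isomorphisms $H_1(G-e)\xrightarrow{j_*}Z_1(G-e)$ and $f_*|_{Z_1(G-e)}$ identifies the displayed composite as an isomorphism, which is precisely the claim that $f_*$ descends to an isomorphism $H_1(G-e)\to H_1(Q\cup T)$.

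I expect the one genuine obstacle to be the geometric claim that $e$ is a free face of $T$ in $Q\cup T$; the homology is then just the short splitting above. For the distance-based filtrations in this paper (\Cech, Vietoris--Rips, $\al$-/Delaunay complexes) a $2$-simplex never enters the filtration earlier than its longest edge, and under the usual genericity assumption that at most one simplex enters at each scale one must still check that $T$ is the \emph{first} $2$-simplex containing $e$ to appear, so that $e$ is still unshared at the moment $T$ is added. This is exactly where the hypothesis ``$e$ is a longest edge of $T$'' is used, and it is the step I would write out most carefully, using Lemma~\ref{lem:length=birth} together with the birth/death bookkeeping of Definition~\ref{dfn:critical} to pin down which triangle $T$ kills $\ga$.
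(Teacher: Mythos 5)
Your overall strategy is the same as the paper's: identify $H_1(Q\cup T)\cong H_1(Q)/\langle[\bd T]\rangle$, show that deleting $e$ quotients $H_1(G)$ by the matching class, and let $f_*$ descend. Where you differ is in how you justify the graph-side quotient $H_1(G-e)\cong H_1(G)/\langle\ga\rangle$. The paper implicitly assumes the whole boundary $\bd T$ is a cycle of $G$ containing $e$; granting that, no free-face argument is needed, since the coefficient-on-$e$ functional $Z_1(G)\to\Z_2$ has kernel $Z_1(G-e)$ and sends $\bd T$ to $1$, giving $Z_1(G)=Z_1(G-e)\oplus\langle \bd T\rangle$ directly (any cycle through $e$ becomes one avoiding $e$ after adding $\bd T$). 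You instead avoid assuming $\bd T\subset G$ and locate $e$ on the distinguished cycle $f_*^{-1}(\ga)\in Z_1(G)$, which forces you to prove that $e$ lies in no $2$-simplex of $Q$, so that every representative of $\ga$ contains $e$. That free-face claim is the one piece you leave unproved, and it is genuinely delicate: it requires that $T$ be the first $2$-simplex containing $e$ to enter the filtration, which needs a genericity or tie-breaking convention beyond ``$e$ is a longest edge of $T$.'' The paper's route sidesteps this entirely, at the price of the stronger unstated hypothesis $\bd T\subset G$ (and $e\in G$), which is not obvious in the application to $G=\hopes(C;\al)$. So: your homological bookkeeping is correct and more explicit than the paper's four-line proof, and your version is arguably the more honest account of what must be checked; but as written it is conditional on the deferred geometric claim, which you should either prove under a stated genericity assumption or replace by the paper's shortcut of assuming $\bd T\subset G$.
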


\begin{proof}
Adding $T$ to $Q$ kills the homology class $\bd T$ of the boundary $\bd T \subset G$. Then $H_1(Q \cup T) \cong H_1(Q) / \langle[\bd T]\rangle$.  
Deleting the open edge $e$ from the boundary $\bd T \subset G$ makes the homology smaller: $H_1(G - e) \cong H_1(G) / \langle[\bd T]\rangle$. The isomorphism $f_*$ descends to the isomorphism $H_1(G) / \langle[\bd T]\rangle \to H_1(Q) / \langle[\bd T]\rangle$. \qedhere
\end{proof}

\begin{proposition}
\label{pro:hopes-homology}
The inclusion $\hopes(C; \al) \to Q(C; \al)$ from Lemma~\ref{lem:hopes_in_complex} induces an isomorphism of 1D homology groups: $H_1(\hopes(C; \al)) \to H_1(Q(C;\al))$.
\end{proposition}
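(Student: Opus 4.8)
The plan is to prove the isomorphism by induction on the scale, sweeping through the finitely many values $\al_1<\dots<\al_N$ at which simplices enter the filtration $\{Q(C;\al)\}$; since between consecutive values neither $Q(C;\al)$ nor $\hopes(C;\al)$ changes, it suffices to verify the claim at each $\al_i$. For the base case I would take $\al$ below the birth of every critical edge: then $\hopes(C;\al)=\MST(C;\al)$ is a forest, so $H_1(\hopes(C;\al))=0$, and since every $H_1$-class that does not immediately die is created by a critical edge, no class born at a scale $\le\al$ survives, whence $H_1(Q(C;\al))=0$ and the induced map is the trivial isomorphism.

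For the inductive step I would process, in a convenient order, the simplices entering at $\al_i$, maintaining a graph $G$ with an inclusion $G\hookrightarrow Q$ that induces an $H_1$-isomorphism, so that at the end $G=\hopes(C;\al_i)$ and $Q=Q(C;\al_i)$. Four kinds of additions occur. (i) An edge that merges two components of $Q$ also lies in $\MST(C;\al_i)$; adding it to both $G$ and $Q$ changes neither $H_1$, so the isomorphism persists. (ii) A critical edge $e$ (with $\birth(e)=\al_i$, $\death(e)>\al_i$, hence $|e|=2\al_i$ by Lemma~\ref{lem:length=birth}, so $e\in\hopes(C;\al_i)$) is added to both $G$ and $Q$; Lemma~\ref{lem:critical-addition} extends the isomorphism, sending the new generator $[L]$ through $e$ to $[f(L)]$. (iii) A triangle killing no class changes neither side. (iv) A triangle $T$ killing a class $\ga$ forces, by Definition~\ref{dfn:critical} and the elder rule, the death of some critical edge $e$, which is removed from $\hopes$; here I would invoke Lemma~\ref{lem:critical-deletion} to match the deletion of $e$ from $G$ with the addition of $T$ to $Q$ on $H_1$. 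Coincident scales — several edges and triangles at one $\al_i$, including non-critical cycle-creating edges paired with their killing triangles — are handled by grouping each such edge with its triangle into a net-zero step and ordering the genuine creations before the genuine deaths.

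The main obstacle is step (iv). Lemma~\ref{lem:critical-deletion} is stated for the situation where the removed edge is literally a longest open edge of the killing triangle $T$ and the dying class is $[\bd T]$, whereas in $\hopes$ the edge that dies is the leading variable selected by solving the $\Z_2$-system of Definition~\ref{dfn:critical} under the elder rule, and need not be an edge of $T$ at all. Bridging this gap is the heart of the argument: before applying Lemma~\ref{lem:critical-deletion} I would change the homology basis of $H_1(Q(C;\al))\cong H_1(\hopes(C;\al))$ (at $\al$ just below $\al_i$) so that $[\bd T]$ is represented by a cycle through the chosen critical edge $e$ — which is exactly what the expansions $b_i=\sum_j c_{ij}[K_j]$ and the choice $i\in I$ supply, since the leading variable $K_i=e$ genuinely participates in the dying relation — and then delete $e$ in place of the nominal longest edge of $T$, the two differing by a class that already dies.

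Alternatively, one can sidestep the delicate matching: observe that $\hopes(C;\al)$ is $\MST(C;\al)$ with each alive critical edge adjoined, so $\dim H_1(\hopes(C;\al))$ equals the number of critical edges with $\birth(e)\le\al<\death(e)$ — each adjunction raising $b_1$ by exactly one, because the endpoints of such $e$ are already connected in $\MST(C;\al)$, the cycle through $e$ lying inside $Q(C;\birth(e))\subset Q(C;\al)$. The standard correspondence between persistence pairs and $\mathrm{rank}\,H_1$, together with Definition~\ref{dfn:critical}, shows this same count equals $\dim H_1(Q(C;\al))$. It then remains only to check that the inclusion-induced map is surjective, which holds because every class of $H_1(Q(C;\al))$ is, by construction, a sum of the cycles of alive critical edges, all of which lie in $\hopes(C;\al)$; equal finite dimensions plus surjectivity give the isomorphism.
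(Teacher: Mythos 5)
Your proposal is correct and follows essentially the same route as the paper's own proof: an induction over the finitely many scales of the filtration, invoking Lemma~\ref{lem:critical-addition} at each birth and Lemma~\ref{lem:critical-deletion} at each death, with the base case where both sides have trivial $H_1$. The one place you go beyond the paper is your step (iv): you rightly note that Lemma~\ref{lem:critical-deletion} as literally stated assumes the deleted edge is a longest edge of the killing triangle $T$ with $\partial T\subset G$, which generally fails for the critical edge selected by the elder rule in Definition~\ref{dfn:critical}, and your repair --- expressing $[\partial T]$ in the basis of alive critical classes via $b_i=\sum_j c_{ij}[K_j]$ and quotienting both $H_1(G)$ and $H_1(Q)$ by the same element --- is precisely what is needed to make the paper's terse two-line argument rigorous.
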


\begin{proof}
$\hopes(C; 0)$ and the complex $Q(C;0)$ coincide with the cloud $C$, so their 1-dimensional homology groups are trivial. Each time a homology class is born or dies in $H_1(Q(C; \al))$, the isomorphism in $H_1$ induced by the inclusion $\hopes(C; \al) \to Q(C; \al)$ is preserved by Lemmas~\ref{lem:critical-addition} and \ref{lem:critical-deletion}. \qedhere
\end{proof}

\begin{lemma}
\label{lem:length>=birth}
Let a cycle $L \subset Q(C; \al)$ represent a homology class $\ga \in H_1(Q(C; \al))$ in $\PD\{Q(C; \al)\}$. Then any longest edge $e \subset L$ has length $|e| \geq 2 \cdot \birth(\ga)$.
\end{lemma}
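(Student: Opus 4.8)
The plan is to argue by contradiction, using the elementary observation that a longest edge of $L$ pins down the scale at which the whole cycle $L$ appears in the filtration. First I would fix a longest edge $e\subset L$ and recall (as in the proof of Lemma~\ref{lem:length=birth}) that, relative to $\{Q(C;\al)\}$, the length $|e|$ is twice the scale at which $e$ enters; since every edge of $L$ has length at most $|e|$ and the filtration is nested, each such edge — and hence all of $L$ — already lies in $Q(C;|e|/2)$. This gives a well-defined class $[L]\in H_1(Q(C;|e|/2))$, and because $L$ represents $\ga$ at scale $\al$, the homomorphism $H_1(Q(C;|e|/2))\to H_1(Q(C;\al))$ induced by the inclusion $Q(C;|e|/2)\subset Q(C;\al)$ sends $[L]$ to $\ga$.

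Next I would suppose, for contradiction, that $|e|<2\birth(\ga)$, i.e. $|e|/2<\birth(\ga)$. Then the previous step exhibits $\ga$ as an element of the image of $H_1(Q(C;\al'))\to H_1(Q(C;\al))$ for the scale $\al'=|e|/2$ strictly below $\birth(\ga)$. It remains to see that this cannot happen. The cleanest route is the interval-module decomposition of the persistence module $\{H_1(Q(C;\al))\}$: a class that represents a dot of $\PD\{Q(C;\al)\}$ with birth $\birth(\ga)$ is the value at scale $\al$ of a generator of an interval summand starting at $\birth(\ga)$, and such a generator is by construction not in the image of $H_1(Q(C;\al'))\to H_1(Q(C;\al))$ for any $\al'<\birth(\ga)$. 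Equivalently, one can track $[L]$ down to scale $\birth(\ga)$ and check that the class it defines there must coincide with the class born at $\birth(\ga)$, which by Definition~\ref{dfn:birth-death} is not in the full image of $H_1(Q(C;\al''))$ for any $\al''<\birth(\ga)$. Either way we obtain a contradiction, so $|e|\geq 2\birth(\ga)$.

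The only genuinely delicate point — and the main obstacle — is the bookkeeping in the last step: Definition~\ref{dfn:birth-death} states the birth condition at the scale $\birth(\ga)$ itself, whereas here $\ga$ lives at a possibly larger scale $\al$, so one must transport the ``not in the image of any earlier $H_1$'' property from scale $\birth(\ga)$ to scale $\al$ (this is exactly what the persistence-module picture makes automatic). The rest — that the longest edge of $L$ determines the appearance scale of $L$, and functoriality of the maps induced in $H_1$ — is routine, so I would keep those parts brief and concentrate the write-up on identifying $\ga$ with a persistence-diagram generator.
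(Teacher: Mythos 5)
Your proposal is correct and follows essentially the same route as the paper's own proof: a longest edge $e$ forces the whole cycle $L$ into $Q(C;|e|/2)$, so $|e|/2<\birth(\ga)$ would exhibit $\ga$ in the image from a scale below its birth, a contradiction. The paper states this in two sentences and leaves implicit the transport of the ``not in any earlier image'' property from scale $\birth(\ga)$ to scale $\al$, which you rightly flag and justify via the interval decomposition; that extra care is a refinement, not a different argument.
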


\begin{proof}
Let a longest edge $e$ of a cycle $L \subset Q(C; \al)$ representing the class $\ga$ have a half-length $\al < \birth(\ga)$. Then $L$ enters $Q(C; \al)$ earlier than $\birth(\ga)$ and can not represent the class $\ga$ that starts living from $\al = \birth(\ga)$. \qedhere
\end{proof}

Recall that the forest $\MST(C; \al)$ on a cloud $C$ at a scale $\al$ is obtained from a minimum spanning tree $\MST(C)$ by removing all open edges longer than $2\al$.
An edge $e$ is \textit{splitting} a graph $G$ if removing the open edge $e$ makes $G$ disconnected. Otherwise the edge $e$ is called \textit{non-splitting} and belongs to a cycle of $G$.

\begin{proposition}
\label{pro:length-bound}
Let a graph $G \subset Q(C; \al)$ span $Q(C; \al)$ and $H_1(G) \to H_1(Q(C; \al))$ be an isomorphism induced by the inclusion. Let $(b_i, d_i)$, $i = 1, \dots, m$, be all dots in the persistence diagram $\PD\{Q(C; \al)\}$ such that $\birth \leq \al < \death$. Then the total length of $G$ is bounded below by the total length of edges of the forest $\MST(C; \al)$ plus $2\sum\limits_{i=1}^m b_i$.
\end{proposition}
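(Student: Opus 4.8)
The plan is to decompose the edge set of $G$ into a spanning forest $F$ and $m$ ``extra'' edges $e_1,\dots,e_m$, bound $\mathrm{length}(F)$ from below by $\mathrm{length}(\MST(C;\al))$ via Lemma~\ref{lem:MST_optimal}, and bound $\sum_j|e_j|$ from below by $2\sum_i b_i$. For the decomposition I would run Kruskal's algorithm on $G$ itself: process the edges of $G$ by increasing length, keeping an edge iff it joins two components of the current partial forest; this yields a spanning forest $F\subseteq G$ with vertex set $C$ and the same connected components as $G$, hence (since $G$ spans $Q(C;\al)$) the same components as $Q(C;\al)$. Each rejected edge lies on a cycle of $G$, and the $F$-fundamental cycles form a basis of $H_1(G)$, so the number of rejected edges equals $\dim H_1(G)=\dim H_1(Q(C;\al))=m$ by hypothesis. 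Every edge of $F$ lies in $G\subseteq Q(C;\al)$, so $F$ spans $Q(C;\al)$ and Lemma~\ref{lem:MST_optimal} gives $\mathrm{length}(F)\ge\mathrm{length}(\MST(C;\al))$. Since the total length of $G$ is $\mathrm{length}(F)+\sum_{j=1}^m|e_j|$, everything reduces to the inequality $\sum_{j=1}^m|e_j|\ge 2\sum_{i=1}^m b_i$.

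For this I would sort both lists increasingly, $|e_1|\le\dots\le|e_m|$ and $b_1\le\dots\le b_m$, write $\be_j=|e_j|/2$ for the scale at which $e_j$ enters the filtration, and let $G_\be\subseteq G$ be the subgraph of all edges of length $\le 2\be$; it then suffices to prove $\be_j\ge b_j$ for every $j$. When Kruskal rejected $e_k$, its endpoints were already joined by a path inside $F$ using edges of length $\le|e_k|$, so for each $k\le j$ the subgraph $G_{\be_j}$ contains the $F$-fundamental cycle of $e_k$; these $j$ cycles are independent in $H_1(G)$, so by injectivity of $H_1$ for subgraphs of a graph, $\dim H_1(G_{\be_j})\ge j$. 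Composing $H_1(G_{\be_j})\hookrightarrow H_1(G)$ with the isomorphism $H_1(G)\xrightarrow{\ \cong\ }H_1(Q(C;\al))$ from the hypothesis produces a subspace of $H_1(Q(C;\al))$ of dimension $\ge j$, every class of which is represented by a cycle contained in $G_{\be_j}\subseteq Q(C;\be_j)$; hence this subspace lies inside the image of $H_1(Q(C;\be_j))\to H_1(Q(C;\al))$. By the standard structure of a persistence module that image is spanned by the dots $(b_i,d_i)$ with $b_i\le\be_j<\al<d_i$, which are among our $m$ alive dots, so $\#\{i:b_i\le\be_j\}\ge j$; since the $b_i$ are sorted, this forces $b_j\le\be_j$. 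Summing, $\sum_{j}|e_j|=2\sum_j\be_j\ge 2\sum_j b_j=2\sum_i b_i$, and adding the forest bound finishes the proof. This is essentially a simultaneous version of Lemma~\ref{lem:length>=birth} for all $m$ classes alive at $\al$.

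I expect the last paragraph to be the main obstacle: converting the intrinsic filtration data of $G$ (the birth scales $\be_j$ of $G$'s own critical edges) into statements about the diagram $\PD\{Q(C;\al)\}$ of the ambient complex. The delicacy is that the hypothesis supplies an isomorphism $H_1(G)\cong H_1(Q(C;\al))$ only at the \emph{top} scale $\al$, not at intermediate scales, so the persistence modules of $G$ and of $Q$ cannot be compared directly; the way around this is to pair injectivity of $H_1$ for subgraphs with the rank formula for the map between two slices of the persistence module of $Q$. Minor care is also needed with tie-breaking among equal-length edges and with the half-open conventions $b_i\le\al<d_i$, but neither affects the inequalities.
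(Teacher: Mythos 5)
Your proof is correct, and although it shares the paper's overall skeleton (split $G$ into a spanning forest plus $m$ cycle-closing edges, bound the forest via Lemma~\ref{lem:MST_optimal}, bound the extra edges by the births), both halves are executed differently. The paper obtains the decomposition by repeatedly deleting a \emph{longest} non-splitting edge, choosing for each deleted edge $e_i$ a cycle through it that represents a class matched to a dot $(b_i,d_i)$, and applying the single-cycle Lemma~\ref{lem:length>=birth} to get $|e_i|\geq 2b_i$; the delicate point there is that the $m$ edges must be paired with $m$ \emph{distinct} dots. You instead build the forest by Kruskal in increasing order and replace the per-edge pairing with a global counting argument: the $j$ shortest rejected edges yield $j$ independent fundamental cycles inside $G_{\be_j}\subseteq Q(C;\be_j)$, so the rank of $H_1(Q(C;\be_j))\to H_1(Q(C;\al))$ is at least $j$, and the rank formula for persistence modules (number of dots with $b\leq\be_j$ and $d>\al$) then forces $b_j\leq\be_j$ after sorting both lists. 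This buys a cleaner justification at exactly the point you flagged as the main obstacle --- no explicit edge-to-dot matching is needed, only the sorted termwise inequality --- at the price of invoking the rank formula, which the paper's more elementary route avoids by leaning on the informal assignment of each cycle to ``its'' dot. The minor imprecision in your condition $b_i\leq\be_j<\al<d_i$ (the correct requirement is $b_i\leq\be_j$ and $d_i>\al$, allowing $\be_j=\al$) does not affect the count, since every such dot is among the $m$ dots alive at $\al$.
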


\begin{proof}
Let the subgraph $G_1 \subset G$ consist of all non-splitting edges of $G$ and $e_1 \subset G_1$ be a longest open edge. Removing $e_1$ from $G$ makes $H_1(G)$ smaller. Hence there is a cycle $L_1 \subset G_1$ containing $e_1$ and representing a class $\ga_1 \in H_1(Q(C; \al))$ that corresponds to some off-diagonal dot in $\PD\{Q(C;\al)\}$, say $(b_1,d_1)$. Then $\ga_1$ lives over the interval $b_1 = \birth(\ga_1) \leq \al < d_1 = \death(\ga_1)$. Lemma~\ref{lem:length>=birth} implies that $|e_1| \geq 2b_1$.
Let the graph $G_2 \subset G - e_1$ consist of all non-splitting edges and $e_2 \subset G_2$ be a longest open edge. Similarly, find the corresponding point $(b_2, d_2)$, conclude that $|e_2| \geq 2b_2$ and so on until we get $\sum\limits_{i=1}^m |e_i| \geq 2\sum\limits_{i=1}^m b_i$. After removing all open edges $e_1, \dots, e_m$, the remaining graph $G - (e_1 \cup \dots \cup e_m)$ still spans the (possibly disconnected) complex $Q(\al)$. Indeed, each time we removed a non-splitting edge. So the total length of $G - (e_1 \cup \dots \cup e_m)$ is not smaller than the total length of $\MST(C; \al)$ by Lemma~\ref{lem:MST_optimal}. 
\qedhere
\end{proof}

\begin{theorem}[optimality of reduced skeletons]
\label{thm:optimality}
Fix a filtration $\{Q(C; \al)\}$ of complexes on a cloud $C$ in a metric space. 
For any $\al > 0$, the reduced skeleton $\hopes(C; \al)$ has the minimum total length of edges over all graphs $G \subset Q(C; \al)$ that span $Q(C; \al)$ and induce an isomorphism $H_1(G) \to H_1(Q(C; \al))$.
\end{theorem}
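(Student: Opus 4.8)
The plan is to combine the pieces that the excerpt has already assembled: Proposition~\ref{pro:hopes-homology} tells us that $\hopes(C;\al)$ is an admissible competitor (it spans $Q(C;\al)$ and induces an isomorphism on $H_1$), and Proposition~\ref{pro:length-bound} gives a universal lower bound on the total length of any such competitor. So the strategy is to (i) show that this lower bound is attained by $\hopes(C;\al)$, i.e. compute the total length of $\hopes(C;\al)$ exactly and match it against $|\MST(C;\al)| + 2\sum_{i=1}^m b_i$, and (ii) note that the lower bound therefore holds with equality for $\hopes(C;\al)$ and is $\le$ the length of every competitor, which is exactly the optimality claim.

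\textbf{Step 1: decompose the length of the reduced skeleton.} By Definition~\ref{dfn:hopes}, $\hopes(C;\al)$ is the disjoint-on-edges union of the forest $\MST(C;\al)$ and the set of critical edges $e$ with $\death(e) \le \al$... wait, rather $\death(e) > \al \ge \birth(e)$, i.e. the critical edges still alive at scale $\al$ with length $|e|\le 2\al$ (the length constraint $|e|=2\birth(e)\le 2\al$ is automatic once $\birth(e)\le\al$, by Lemma~\ref{lem:length=birth}). Hence
\[
|\hopes(C;\al)| = |\MST(C;\al)| + \sum_{e \text{ critical, alive at }\al} |e|.
\]
\textbf{Step 2: identify the alive critical edges with the off-diagonal dots of $\PD\{Q(C;\al)\}$.} Each critical edge $e$ carries a label $(\birth(e),\death(e))$, and by construction of $\hopes$ (Definition~\ref{dfn:critical}) these labels are exactly the off-diagonal points of the persistence diagram, counted with multiplicity; the critical edges that survive the reduction at scale $\al$ are precisely those whose label $(b_i,d_i)$ satisfies $b_i \le \al < d_i$. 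By Lemma~\ref{lem:length=birth}, $|e|=2b_i$ for the critical edge realising the dot $(b_i,d_i)$. Therefore $\sum_{e \text{ alive}} |e| = 2\sum_{i=1}^m b_i$, and combining with Step~1 gives $|\hopes(C;\al)| = |\MST(C;\al)| + 2\sum_{i=1}^m b_i$.

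\textbf{Step 3: conclude.} Proposition~\ref{pro:length-bound}, applied to any graph $G \subset Q(C;\al)$ that spans $Q(C;\al)$ and induces an isomorphism $H_1(G)\to H_1(Q(C;\al))$, gives $|G| \ge |\MST(C;\al)| + 2\sum_{i=1}^m b_i = |\hopes(C;\al)|$ by Step~2. Since $\hopes(C;\al)$ is itself such a graph by Lemma~\ref{lem:hopes_in_complex} and Proposition~\ref{pro:hopes-homology}, the minimum is attained, which proves the theorem.

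\textbf{The main obstacle} I expect is Step~2: verifying that the multiset of labels on the alive critical edges coincides exactly (with multiplicities) with the off-diagonal dots $(b_i,d_i)$ of $\PD\{Q(C;\al)\}$ having $b_i\le\al<d_i$, and that there is exactly one critical edge per such dot. This is essentially the content of Definition~\ref{dfn:critical} together with the elder-rule bookkeeping, but one must be careful that the genuine choices allowed in assigning death values do not change the number of edges removed at each scale, nor their births --- only \emph{which} edges are removed. Since Proposition~\ref{pro:length-bound} is choice-independent and the count in Step~2 only uses births (all equal to $2b_i$), the optimality conclusion is robust to these choices, matching the remark after the elder rule. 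A secondary technical point is confirming that a critical edge which is ``alive at $\al$'' indeed has $|e|\le 2\al$ so that it is not also removed by the length threshold in Definition~\ref{dfn:hopes} --- but this is immediate from $|e|=2\birth(e)\le 2\al$.
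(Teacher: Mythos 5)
Your proposal is correct and follows essentially the same route as the paper's proof: establish that $\hopes(C;\al)$ is an admissible competitor via Proposition~\ref{pro:hopes-homology}, compute its total length as $|\MST(C;\al)|+2\sum_{i=1}^m b_i$ using Lemma~\ref{lem:length=birth} and the correspondence between alive critical edges and the dots with $\birth\le\al<\death$, and conclude by the lower bound of Proposition~\ref{pro:length-bound}. Your extra care in Step~2 about the elder-rule choices not affecting the count of alive edges or their births is a welcome clarification of a point the paper leaves implicit, but it does not change the argument.
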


\begin{proof}
For any $\al > 0$, the inclusion $\hopes(C; \al)\to Q(C; \al)$ induces an isomorphism in $H_1$ by Proposition~\ref{pro:hopes-homology}. 
Let classes $\ga_1, \dots, \ga_m$ represent all $m$ dots counted with multiplicities in the `rectangle' $\{\birth \leq \al < \death\} \subset \PD\{Q(C; \al)\}$. Then $\ga_1, \dots, \ga_m$ form a basis of $H_1(Q(C; \al)) \cong H_1(\hopes(C; \al))$ by Definition~\ref{dfn:pers-diagram}.
The total length of $\hopes(C; \al)$ equals the length of $\MST(C; \al)$ plus $2\sum\limits_{i=1}^m \birth(\ga_i)$ by Lemma~\ref{lem:length=birth}.
By Proposition~\ref{pro:length-bound} this length is minimal over all graphs $G \subset Q(C; \al)$ that span $Q(C; \al)$ and have the same $H_1$ as $Q(C; \al)$. 
\qedhere
\end{proof}

\vspace*{-5mm}
\begin{figure}[h]
\centering
\includegraphics[width=\textwidth]{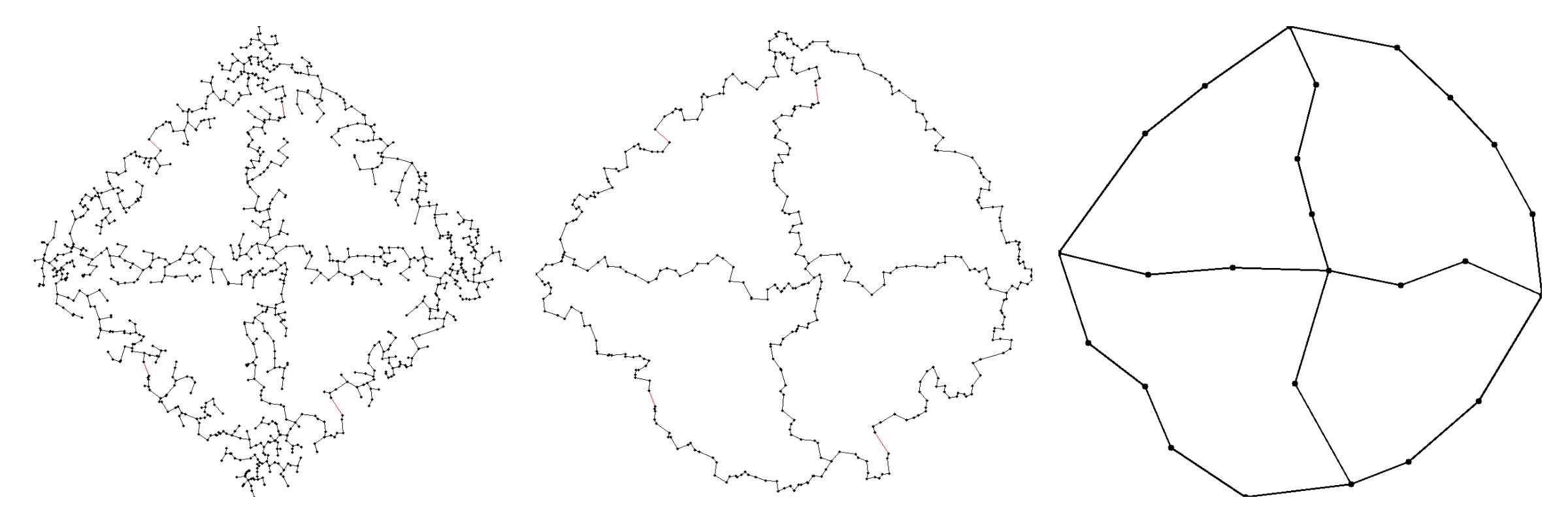}
\caption{
Left: derived $\hopes_{1,1}(C)$ for the sample of $W(4)$ in the first picture of Fig.~\ref{fig:clouds}. Middle: all degree 1 vertices have been removed. Right: $\shopes$ obtained by Algorithm~\ref{alg:contract}.}
\label{fig:hopes-simplification}
\end{figure}

\vspace*{-5mm}
\section{Guarantees for graph reconstructions using derived $\hopes$}
\label{sec:reconstruction}

This section shows that other subgraphs (derived skeletons) of HoPeS in Definition~\ref{dfn:derived-hopes} guarantee correct reconstructions in Theorems~\ref{thm:reconstruction_1stgap}, \ref{thm:reconstruction_derived}, see Fig.~\ref{fig:hopes-simplification}.

\begin{definition}[derived skeletons $\hopes_{k,l}$]
\label{dfn:derived-hopes}
Let $C$ be a finite cloud in a metric space. 
For $k, l \geq 1$, the \textit{derived skeleton} $\hopes_{k, l}(C)$ is obtained from $\hopes(C)$ by removing all edges that are longer than $2\uv_{k, l}(C)$ and by keeping only critical edges with $(\birth, \death) \in \vPD_{k, l}(C)$ and with $\death > \uv_{k,l}(C)$, see Fig.~\ref{fig:hopes-simplification}.
\bsquare
\end{definition}

\begin{definition}[radius $\rho$ of a cycle and thickness $\theta$ of a graph]
\label{dfn:radius and thickness}
For a graph $G$ in a metric space $M$, the \textit{radius} $\rho$ of a non-self-intersecting cycle $L \subset G^\al$ is the persistence of its corresponding homology class in the filtration, see Fig.~\ref{fig:radius_thickness}. 
When $\al$ is increasing, new holes can be born in $G^\al$. The \textit{thickness} $\theta(G)$ is the maximum persistence of any hole born after $\al = 0$ in the diagram $\PD\{G^\al\}$.
\bsquare
\end{definition}

If a cycle $L\subset\R^2$ encloses a convex region, the only hole of $L^{\al}$ completely dies when $\al$ equals the radius $\rho(L)$, so $\theta(L)=0$.
The `heart' cycle $L$ in Fig.~\ref{fig:radius_thickness} encloses a non-convex region, however no new holes are born in $L^{\al}$, so $\theta(L)=0$.
The figure-eight-shaped graph $G$ in Fig.~\ref{fig:radius_thickness} has a positive thickness $\theta(G)$ equal to the radius $\rho(L_1)$ of the largest cycle born in $G^{\al}$ when $\al$ is increasing.

\begin{lemma}
\label{lem:dotsondeathaxis}
For a graph $G$ in a metric space $M$, the 1D diagram $\PD\{G^\al\}$ has $m$ dots (with multiplicities) on the death axis, where $m=\text{rank}\; H_1(G)$. 
\end{lemma}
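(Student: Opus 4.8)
The plan is to analyze the birth and death scales of the classes in $\PD\{G^\al\}$ by comparing the filtration $\{G^\al\}$ with the single space $G$ itself. The key observation is that for $\al = 0$ the offset $G^0$ equals $G$, so $H_1(G^0) = H_1(G)$ has rank $m = \text{rank}(H_1(G))$, and each of these $m$ independent classes is born at $\birth = 0$ by Definition~\ref{dfn:birth-death} (there is no smaller scale). So the diagram has $m$ dots (counted with multiplicity) sitting on the birth axis $\{x = 0\}$. The real content of the lemma is that each of these $m$ dots never dies, i.e. has $\death = +\infty$, which places it on the ``death axis'' in the convention where a never-dying dot is drawn at infinite height.

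First I would set up a continuous deformation retraction of $G^\al$ onto $G$ for every $\al \geq 0$: since $G$ is a finite graph embedded in the metric space $M$ (a compact CW-complex), each offset $G^\al = \{x \in M : d(x, G) \leq \al\}$ deformation retracts onto $G$ — intuitively by flowing each point towards its nearest point on $G$, and rigorously using the Nerve Lemma~\ref{lem:nerve} relating $G^\al$ to the \Cech complex, or just the fact that a neighbourhood of a graph in a nice ambient space retracts onto it. Consequently the inclusion $G = G^0 \hookrightarrow G^\al$ induces an isomorphism $H_1(G^0) \to H_1(G^\al)$ for every $\al$. Next I would feed this into Definition~\ref{dfn:birth-death}: a class $\ga \in H_1(G^0)$ that is born at $\al = 0$ dies at some $\al_j$ only if its image in $H_1(G^{\al_j})$ merges into the image of $H_1(G^{\al})$ for some $\al < 0$, which is impossible since there is no such $\al$. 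Hence none of these $m$ classes dies, so all $m$ dots have infinite death coordinate.

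Finally I would argue there are no \emph{other} off-diagonal dots contributing: any dot $(b,d)$ with $b > 0$ and $d < +\infty$ corresponds to a hole born after $\al = 0$, and by Definition~\ref{dfn:radius and thickness} its persistence $d - b$ is at most $\theta(G) < +\infty$, so it is not on the death axis — but this is exactly what we want, since the lemma only counts dots \emph{on} the death axis. Combining: the dots on the death axis are precisely the $m$ classes generated by $H_1(G)$, giving the count $m = \text{rank}(H_1(G))$.

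The main obstacle I anticipate is making the deformation-retraction step fully rigorous for an \emph{arbitrary} totally bounded metric space $M$ rather than $M = \R^d$: nearest-point projection onto $G$ need not be single-valued or continuous in a general metric space. The clean fix is to invoke the Nerve Lemma (stated as Lemma~\ref{lem:nerve} in the paper) to replace $G^\al$ by the \Cech complex $\Ch(G, M; \al)$ up to homotopy, and then observe that this \Cech complex is homotopy equivalent to $G$ for all $\al$ because $G$ is a finite graph — reducing the $H_1$-isomorphism claim to a statement already implicit in the paper's setup. If the paper's Lemma~\ref{lem:nerve} is only stated for offsets in $\R^d$, I would instead phrase the argument so that ``$\PD\{G^\al\}$'' literally means the \Cech-filtration diagram, and use that for $\al$ large enough the complex stabilises while for small $\al$ it is homotopy equivalent to $G$, so the $m$ generators of $H_1(G)$ persist to $+\infty$.
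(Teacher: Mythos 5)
Your first paragraph already proves the lemma, and it is essentially the paper's own one-line proof: the dots on the vertical axis $\{x=0\}$ are exactly the classes born at $\al=0$, i.e.\ the classes of $H_1(G^0)=H_1(G)$, and there are $m=\text{rank}(H_1(G))$ of them counted with multiplicity. Be aware that in this paper ``death axis'' means precisely that vertical axis $\{x=0\}$ (the axis along which death values are plotted), not a locus of points with infinite death: Theorem~\ref{thm:reconstruction_1stgap} refers to ``the $m$ dots on the vertical axis of $\PD\{G^\al\}$ from Lemma~\ref{lem:dotsondeathaxis}'' and assigns them finite ordered deaths $y_1\leq\cdots\leq y_m$.

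Everything after your first paragraph rests on that misreading and is mathematically false. You claim that $G^\al$ deformation retracts onto $G$ for every $\al\geq 0$, so that $G^0\hookrightarrow G^\al$ induces an isomorphism on $H_1$ for all $\al$ and the $m$ classes never die. They do die: as $\al$ grows the holes of $G^\al$ fill in. For a round circle $L\subset\R^2$ of radius $\rho$, the offset $L^\al$ is an annulus for $\al<\rho$ but a disk for $\al\geq\rho$, so $[L]$ dies at $\al=\rho$; the paper's $\theta$-graph example likewise has $\PD\{\theta^\al\}$ equal to the dot $(0,2.577)$ with multiplicity $2$ --- birth $0$, finite death --- and Definition~\ref{dfn:radius and thickness} defines the radius of a cycle as exactly this finite persistence. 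Fortunately the false claim is not needed: the lemma only counts dots with birth coordinate $0$ and asserts nothing about their death values. If you delete everything after your first paragraph, what remains is a correct proof that coincides with the paper's.
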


\begin{proof}
By Definition~\ref{dfn:birth-death} all $m$ dots $(0,d_i)$ in $\PD\{G^\al\}$ correspond to homology classes born at $\al=0$, hence present in $H_1(G)$ of the initial graph $G^0=G$.
\qedhere
\end{proof}

\begin{figure}
\centering
\begin{minipage}{0.52\textwidth}
\centering
\includegraphics[height=16mm]{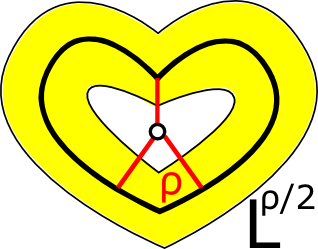}
\hspace*{1mm}
\includegraphics[height=16mm]{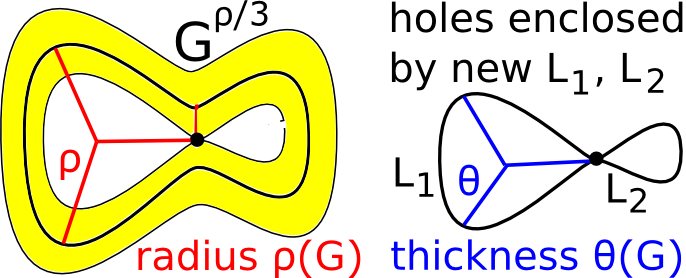}
\captionof{figure}{The `heart' graph has thickness $\theta=0$. The `figure-eight' graph has $\theta>0$, see Def.~\ref{dfn:radius and thickness}.}
\label{fig:radius_thickness}
\end{minipage} 
\hspace{4pt}
\begin{minipage}{0.43\textwidth}
\centering
\includegraphics[height=16mm]{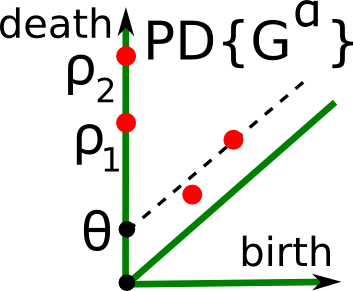}
\hspace*{8mm}
\includegraphics[height=16mm]{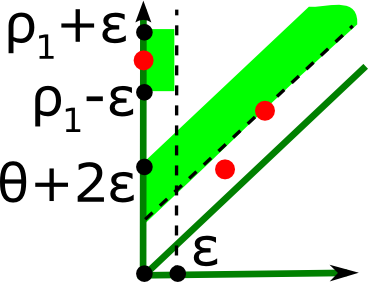}
\captionof{figure}{$\PD\{G^{\al}\}$ vs $\PD\{C^{\al}\}$ for an $\ep$-sample $C$ of $G$ satisfying Theorem~\ref{thm:reconstruction_1stgap}.}
\label{fig:diagram_perturbation}
\end{minipage} 
\end{figure}

\begin{theorem}
\label{thm:reconstruction_1stgap}
Let $M$ be a metric space, and let $C$ be any $\ep$-sample of a connected graph $G \subset M$ with $\text{rank}\; H_1(G) = m$. 
Let $m$ dots on the vertical axis of $\PD\{G^\al\}$ from Lemma~\ref{lem:dotsondeathaxis} have ordered deaths $y_1\leq\cdots\leq y_m$. 
If $y_1 > 7\ep + 2\theta(G) + \underset{i = 1,\ldots, m - 1}{\text{max}}\{y_{i + 1} - y_i\}$, we get the lower bound for noise $\uv_{1, 1}(C)\leq \ep$. 
The derived skeleton $\hopes_{1, 1}(C) \subset G^{2\ep}$ has the same homology $H_1$ as $G$.
\end{theorem}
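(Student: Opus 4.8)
The plan is to transport the simple structure of the diagram $\PD\{G^{\al}\}$ to the noisy diagram $\PD\{C^{\al}\}$ via Stability Theorem~\ref{thm:stability}, read off the gaps of Definitions~\ref{dfn:diagonal-gaps}--\ref{dfn:vertical-gaps}, and then identify $\hopes_{1,1}(C)$ explicitly. By Lemma~\ref{lem:dotsondeathaxis} the off-diagonal dots of $\PD\{G^{\al}\}$ are the $m$ dots $(0,y_1),\dots,(0,y_m)$ on the vertical axis together with dots $(b,d)$ that have $b>0$ and $0<d-b\le\theta(G)$ by Definition~\ref{dfn:radius and thickness}, while diagonal dots have persistence $0$. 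Stability Theorem~\ref{thm:stability} gives a bijection $\psi\colon\PD\{G^{\al}\}\to\PD\{C^{\al}\}$ with $\|q-\psi(q)\|_{\infty}\le\ep$ for every dot $q$. First I would record the effect of $\psi$ on persistence: writing $\psi(0,y_i)=(b_i,d_i)$ gives $b_i\in[0,\ep]$ and $d_i\in[y_i-\ep,y_i+\ep]$, so $d_i-b_i\in[y_i-2\ep,\,y_i+\ep]$, whereas every other dot of $\PD\{C^{\al}\}$ is the image of a dot of persistence $\le\theta(G)$ and so has persistence $\le\theta(G)+2\ep$. Since the hypothesis $y_1>7\ep+2\theta(G)$ forces $y_1-2\ep>\theta(G)+2\ep$, the $m$ dots $(b_i,d_i)$ are exactly the dots of $\PD\{C^{\al}\}$ with persistence $\ge y_1-2\ep$ (the \emph{high group}), and all remaining dots (the \emph{low group}) have persistence $\le\theta(G)+2\ep$.

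The crux, and the step I expect to be the main obstacle, is to show that the widest diagonal gap $\dgap_1(C)$ is precisely the strip between the low group and the high group, so that $\dPD_1(C)=\{(b_i,d_i)\}_{i=1}^{m}$. Here one compares three kinds of diagonal gaps of $\PD\{C^{\al}\}$. The separating strip has vertical width at least $(y_1-2\ep)-(\theta(G)+2\ep)=y_1-\theta(G)-4\ep$. A diagonal gap contained in the low group has width $\le\theta(G)+2\ep$, which is strictly smaller because $y_1>2\theta(G)+6\ep$. A diagonal gap contained in the high group, with boundary persistences $s<t$, has all its dots of persistence $\le s$ coming from a set of indices $\{1,\dots,k\}$ (if not, the interleaving of two vertical-axis dots forces $t-s\le3\ep$ directly); then $s\ge y_k-2\ep$ and $t\le y_{k+1}+\ep$, so its width is $\le (y_{k+1}-y_k)+3\ep\le\max_i\{y_{i+1}-y_i\}+3\ep$, which is again strictly smaller since $y_1>7\ep+2\theta(G)+\max_i\{y_{i+1}-y_i\}$ and $\theta(G)\ge0$. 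Therefore $\dgap_1(C)$ is the separating strip, $\dPD_1(C)$ consists of the $m$ high-group dots, and — since the rightmost vertical strip is always infinitely wide — $\vPD_{1,1}(C)=\dPD_1(C)$ with $\uv_{1,1}(C)=\max_i b_i\le\ep$, which proves the first assertion.

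Finally I would identify $\hopes_{1,1}(C)$ and compute its homology. By Definition~\ref{dfn:derived-hopes} the retained critical edges are exactly the $m$ edges $K_1,\dots,K_m$ carrying the labels $(b_i,d_i)\in\vPD_{1,1}(C)$: each has $\death=d_i\ge y_i-\ep\ge y_1-\ep>\ep\ge\uv_{1,1}(C)$ and, by Lemma~\ref{lem:length=birth}, length $2b_i\le2\uv_{1,1}(C)$, so it survives both deletion rules; the retained non-critical edges are those of $\MST(C)$ of length $\le2\uv_{1,1}(C)$, i.e.\ the forest $\MST(C;\uv_{1,1}(C))$. Hence $\hopes_{1,1}(C)=\MST(C;\uv_{1,1}(C))\cup\{K_1,\dots,K_m\}$. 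Each $K_i$ is born at scale $b_i\le\uv_{1,1}(C)$, so its endpoints already lie in one component of $Q(C;b_i)\subset Q(C;\uv_{1,1}(C))$, hence in one component of the spanning forest $\MST(C;\uv_{1,1}(C))$ by Lemma~\ref{lem:MST_optimal}; therefore adding the $m$ edges $K_i$ leaves the vertex set and the number $b_0$ of connected components unchanged while raising the edge count by $m$, and comparing with the forest identity $|V|-|E|=b_0$ gives $\text{rank}\,H_1(\hopes_{1,1}(C))=m=\text{rank}\,H_1(G)$, so the two graphs have isomorphic $H_1$. For the inclusion $\hopes_{1,1}(C)\subset G^{2\ep}$: every vertex of $C$ lies in $G^{\ep}$ because $C$ is an $\ep$-sample, and every edge of $\hopes_{1,1}(C)$ has length $\le2\uv_{1,1}(C)\le2\ep$, so each of its points is within $\ep$ of an endpoint and thus within $2\ep$ of $G$. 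Routine points to verify separately are the attainment of the bottleneck matching and the degenerate cases $m=1$ and repeated values $y_i$.
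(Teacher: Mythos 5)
Your proposal is correct and follows essentially the same route as the paper: transport $\PD\{G^{\al}\}$ to $\PD\{C^{\al}\}$ by Stability Theorem~\ref{thm:stability}, bound the three kinds of diagonal gaps by $\theta(G)+2\ep$, $\max_i\{y_{i+1}-y_i\}+3\ep$ and at least $y_1-\theta(G)-4\ep$ respectively, conclude that $\dPD_1(C)=\vPD_{1,1}(C)$ consists of the $m$ perturbed axis dots so that $\uv_{1,1}(C)\le\ep$, and then observe that all edges of $\hopes_{1,1}(C)$ have half-length at most $\ep$, giving $\hopes_{1,1}(C)\subset C^{\ep}\subset G^{2\ep}$. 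You additionally make explicit two points the paper leaves implicit (the case where the perturbed persistences do not respect the ordering of the $y_i$, and the Euler-characteristic count showing the $m$ retained critical edges yield rank $m$ in $H_1$), which is a welcome refinement but not a different argument.
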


\begin{proof}
Apart from the dots on the $y$-axis, all other dots in the 1D persistence diagram $\PD\{G^\al\}$ come from cycles born later in the $\al$-offsets $G^\al$. The maximum persistence of these dots is bounded above by the thickness $\theta(G)$, see Fig.~\ref{fig:diagram_perturbation}.
\smallskip

The inequality $y_1 > 7\ep + 2\theta(G) + \underset{i = 1, ..., m - 1}{\text{max}}\{y_{i + 1} - y_i\}$ guarantees that the gap $\{\theta(G) < y - x < y_1\}$ in $\PD\{G^\al\}$ is wider than any other gaps including the higher gaps $y_{i + 1} - y_i$ between the dots $(0, y_i) \in \PD\{G^\al\}, i = 1, \dots, m - 1$, since the inequality implies $y_1 - \theta(G) > \underset{i = 1, ..., m - 1}{\text{max}}\{y_{i + 1} - y_i\}$ and $y_1 - \theta(G) > \theta(G)$.
\smallskip

By Stability Theorem~\ref{thm:stability}, the perturbed diagram $\PD\{C^\al\}$ is in the $\ep$-offset of $\PD\{G^\al\} \subset \cup_{i = 1}^m (0, y_i ) \cup \{y - x < \theta(G)\}$ with respect to the $L_\infty$ metric on $\mathbb{R}^2$. 
All noisy dots near the diagonal in $\PD\{C^\al\}$ can not be higher that $\theta(G) + 2\ep$ after projecting along the diagonal $\{x = y\}$ to the vertical axis $\{x = 0\}$. 
The remaining dots can not be lower than $y_1 - 2\ep$ after the same projection $(x, y) \to y - x$. 
So the smaller diagonal strip $\{\theta(G) + 2\ep < y - x < y_1 - 2\ep\}$ of the vertical width $y_1 - 4\ep - \theta(G)$ is empty in the perturbed diagram $\PD\{C^\al\}$.
\smallskip

By Stability Theorem~\ref{thm:stability}, any dot $(0, y_i) \in \PD\{G^\al\}, i \geq 1$, can not jump lower than the line $y - x = y_i - 2\ep$ or higher than $y - x = y_i + \ep$ (not $y - x = y_i + 2\ep$ because the dot is on the $y$-axis and hence can not move in the negative $x$ direction). 
Then the widest diagonal gap between these perturbed dots has a vertical width at most $\underset{i = 1, ..., m - 1}{\text{max}}\{y_{i + 1} - y_i\} + 3\ep$, see the last picture of Fig.~\ref{fig:diagram_perturbation}.
All dots near the diagonal have diagonal gaps not wider than $\theta(G) + 2 \ep$.
The given inequality $y_1 - 4\ep - \theta(G) > \theta(G) + \underset{i = 1,\ldots, m - 1}{\text{max}}\{y_{i + 1} - y_i\}+3\ep$ implies that all other diagonal gaps have a vertical width smaller than $y_1 - 4\ep - \theta(G)$. 
\smallskip

Hence the first widest gap in $\PD\{C^\al\}$ covers the diagonal strip $\{\theta(G) + 2\ep < y - x < y_1 - 2\ep\}$ within the first widest gap $\{\theta(G) < y -x < y_1\}$ in the original diagram $\PD\{G^\al\}$.
Then the subdiagram $\dPD_1(C^\al)$ above the line $y - x = y_1 - 2\ep$ contains exactly $m$ perturbations of the original dots $(0, y_i)$ in the vertical strip $\{0 \leq x \leq \ep\}$. Hence the reduced skeleton $\hopes_{1, 1}(C)$ contains exactly m critical edges corresponding to all m dots of the subdiagram $\dPD_1\{C^\al\}$.
\smallskip

It remains to prove that $\hopes_{1, 1}(C)$ is $2\ep$-close to $G$. Let the critical scale $\al(C)$ be the maximum birth over all dots in $\dPD_1\{C^\al\}$. These dots are at most $\ep$ away from their corresponding points $(0, y_i)$ in the vertical axis, so critical scale $\al(C)$ is at most $\ep$. A longest edge $e$ of any cycle in $\hopes_{1, 1}(C)$ persisting over birth $\leq \al <$ death has the half-length equal to its birth, because adding $e$ gave birth to the homology class. Then all edges in $\hopes_{1, 1}(C)$ have half-lengths at most $\al(C) \leq \ep$. 
Hence $\hopes_{1, 1}(C)$ is covered by the disks that have the radius $\ep$ and centres at all points of $C$, so $\hopes_{1, 1}(C) \subset C^\ep \subset G^{2\ep}$. \qedhere 
\end{proof}

Reconstruction Theorem~\ref{thm:reconstruction_derived} for other derived subgraphs $\hopes_{k,l}(C)$ will follow from Lemmas~\ref{lem:diagonal-subdiagram}, \ref{lem:vertical-subdiagram}, \ref{lem:derived-hopes}, \ref{lem:approximation} and will also imply Corollary~\ref{cor:stability_hopes}.

\begin{lemma}
\label{lem:diagonal-subdiagram}
Let $C$ be any $\ep$-sample of a connected graph $G$ in a metric space. If $|\dgap_k(G)| - |\dgap_{k + 1}(G)| > 8\ep$, then there is a bijection $\psi: \dPD_k(G) \to \dPD_k(C)$ so that $||q - \psi(q)||_{\infty}\leq \ep$ for all $q \in \dPD_k(G)$.
\end{lemma}

\begin{proof}
By Stability Theorem~\ref{thm:stability} there is a bijection $\psi: \PD\{G^{\al}\} \to \PD\{C^{\al}\}$ such that $q, \psi(q)$ are $\ep$-close in the $L_{\infty}$-distance on $\R^2$ for all $q \in \PD\{G^{\al}\}$. 
The $\ep$-neighbourhood of a dot $q = (x, y)$ in the $L_{\infty}$ distance is the square $[x - \ep, x + \ep] \times [y - \ep, y + \ep]$. 
Under the diagonal projection $\pr(x, y) = y - x$, this square maps to the interval $[y - x - 2\ep, y - x + 2\ep]$. Hence any diagonal gap $\{a < y - x < b\}$ in $\PD\{G^{\al}\}$ can become thinner or wider in $\PD\{C^{\al}\}$ by at most $4\ep$ due to dots $q$ `jumping' to $\psi(q)$ by at most $2\ep$ under the projection $\pr$. 
\smallskip

By the given inequality, the first $k$ widest gaps $\dgap_i(G)$ in $\PD\{G^{\al}\}$ for $i = 1, \dots, k$ are wider by at least $8\ep$ than all other $\dgap_i(G)$ for $i > k$.
Hence all dots between any two successive gaps from the first $k$ widest can not `jump' over these wide gaps and remain `trapped' between corresponding diagonal gaps in the perturbed diagram $\PD\{C^{\al}\}$.  
Although the order of the first $k$ widest gaps $\dgap_i(G)$, $i = 1, \dots, k$, may not be preserved under $\psi$, the lowest $\{a < y - x < b\}$ of these $k$ diagonal gaps is respected by $\psi$ as follows. The thinner strip $S = \{a + 2\ep < y - x < b - 2\ep\}$ has no dots from $\PD\{C^{\al}\}$ and has the vertical width $|S| \geq |\dgap_k(G)| - 4\ep > |\dgap_{k + 1}(G)| + 4\ep \geq |\dgap_{k + 1}(C)|$.
\smallskip

Then the diagonal strip $S = \{a + 2\ep < y - x < b - 2\ep\}$ is within the lowest gap of the first $k$ widest gaps $\dgap_i(C)$, $i = 1, \dots, k$. Hence all dots above $S$ remain above $S$ under the bijection $\psi$. By Definition~\ref{dfn:diagonal-gaps}, all these dots above $S$ in $\PD\{G^{\al}\}$ and in $\PD\{C^{\al}\}$ form the diagonal subdiagrams $\dPD_k(G)$ and $\dPD_k(C)$ respectively. So $\psi$ descends to a bijection $\dPD_k(G) \to \dPD_k(C)$. 
\qedhere
\end{proof}

\begin{lemma}
\label{lem:vertical-subdiagram}
Let $\psi: \dPD_k(G) \to \dPD_k(C)$ be a bijection such that $||q - \psi(q)||_{\infty} \leq \ep$ holds for all dots $q \in \dPD_k(G)$ as in Lemma~\ref{lem:diagonal-subdiagram}. If $|\vgap_{k, l}(G)| - |\vgap_{k, l + 1}(G)| > 4\ep$ for some $l \geq 1$, then $\psi$ restricts to a bijection $\vPD_{k, l}(G) \to \vPD_{k, l}(C)$.
\end{lemma}

\begin{proof}
The $x$-coordinate of any dot $q \in \dPD_k(G)$ changes under the given bijection $\psi$ by at most $\ep$. 
Similarly to the proof of Lemma~\ref{lem:diagonal-subdiagram} each vertical gap $\vgap_{k, l}(G)$ becomes thinner or wider by at most $2\ep$. 
By the given inequality the first $l$ gaps $\vgap_{k, j}(G)$ in $\PD_k\{G^{\al}\}$ for $j = 1,\dots, l$ are wider by at least $4\ep$ than all other $\vgap_{k, j}(G)$ for $j > l$. Hence all dots between any two successive gaps from the first $k$ widest can not `jump' over these wide gaps and remain `trapped' between corresponding vertical gaps in the perturbed diagram $\PD\{C^{\al}\}$.  
\smallskip

Although the order of the first $l$ widest $\vgap_{k, j}(G)$, $j = 1, \dots, l$, may not be preserved under $\psi$, the leftmost $\{a < x < b\}$ of these $l$ vertical gaps is respected by $\psi$ in the following sense. 
The thinner strip $S = \{a + \ep < x < b - \ep\}$ contains no dots from $\dPD_k(C)$ and has the horizontal width $|S| \geq |\vgap_{k, l}(G)| - 2\ep > |\dgap_{k, l + 1}(G)| + 2\ep \geq |\dgap_{k, l + 1}(C)|$.
Then the vertical strip $S = \{a + \ep < x < b - \ep\}$ is within the leftmost of the first $l$ widest $\vgap_{k, j}(C)$, $j = 1, \dots, l$. Hence all dots to the left of $S$ remain to the left of $S$ under the bijection $\psi$. 
By Definition~\ref{dfn:vertical-gaps}, all these dots to the left of $S$ in $\PD\{G^{\al}\}$ and in $\PD\{C^{\al}\}$ form the vertical subdiagrams $\vPD_{k, l}(G)$ and $\vPD_{k, l}(C)$ respectively. 
So $\psi$ descends to a bijection $\vPD_{k, l}(G) \to \vPD_{k, l}(C)$ between smaller vertical subdiagrams. 
\qedhere
\end{proof}

\begin{lemma}
\label{lem:derived-hopes}
The derived skeleton $\hopes_{k, l}(C)$ is within $\hopes(C; \uv_{k, l}(C))$.
\end{lemma}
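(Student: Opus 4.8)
Write $u = \uv_{k,l}(C)$. The plan is purely definitional: both $\hopes_{k,l}(C)$ and $\hopes(C;u)$ are obtained from the full skeleton $\hopes(C)$ — the union of $\MST(C)$ with all critical edges carrying labels $(\birth,\death)$, see Definition~\ref{dfn:hopes} — by deleting edges. So it suffices to show that the deletion rule of Definition~\ref{dfn:derived-hopes} discards at least every edge that the deletion rule defining $\hopes(C;u)$ discards; the inclusion $\hopes_{k,l}(C)\subset\hopes(C;u)$ then follows edge by edge.

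First I would treat the tree edges. An edge $e\subset\MST(C)$ survives in $\hopes(C;u)$ iff $|e|\le 2u$, and the same holds in $\hopes_{k,l}(C)$, since Definition~\ref{dfn:derived-hopes} imposes its extra $\vPD_{k,l}(C)$-condition only on critical edges. Hence the non-critical parts of both graphs equal the forest $\MST(C;u)$ of Definition~\ref{dfn:MST} and coincide. Next I would treat the critical edges: a critical edge $e$ survives in $\hopes(C;u)$ iff $|e|\le 2u$ and $\death(e)>u$, whereas $e$ survives in $\hopes_{k,l}(C)$ iff $|e|\le 2u$, $\death(e)>u$, and additionally $(\birth(e),\death(e))\in\vPD_{k,l}(C)$ — the same two conditions plus one more. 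Therefore the critical edges of $\hopes_{k,l}(C)$ form a subset of those of $\hopes(C;u)$, and combining the two cases yields $\hopes_{k,l}(C)\subset\hopes(C;u)$.

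The only place that needs a little care — the main, and mild, obstacle — is reconciling the thresholds across the two definitions: that ``remove every critical $e$ with $\death(e)\le u$'' is exactly ``keep those with $\death(e)>u$'', and that a label $(\birth(e),\death(e))\in\vPD_{k,l}(C)$ can have $\birth(e)=u$ and still satisfy $|e|\le 2u$. For the latter I would note that $\vPD_{k,l}(C)$ lies in the half-plane $\{x\le u\}$, being to the left of the leftmost of the first $l$ widest vertical gaps, whose left boundary line is $x=u=\uv_{k,l}(C)$; then Lemma~\ref{lem:length=birth} gives $|e|=2\birth(e)\le 2u$ for any critical edge $e$ whose label sits in $\vPD_{k,l}(C)$, so membership in $\vPD_{k,l}(C)$ already forces $e$ through the length filter and no inconsistency between the two selection rules arises.
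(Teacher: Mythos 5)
Your proof is correct and follows the same route as the paper: both skeletons are cut out of the full $\hopes(C)$ by deletion rules, and the rule defining $\hopes_{k,l}(C)$ imposes every condition of the rule defining $\hopes(C;\uv_{k,l}(C))$ plus the extra membership requirement $(\birth,\death)\in\vPD_{k,l}(C)$, so the inclusion holds edge by edge. Your closing observation that $\vPD_{k,l}(C)\subset\{x\le \uv_{k,l}(C)\}$ together with Lemma~\ref{lem:length=birth} makes the length filter automatic for the surviving critical edges is a correct, if unnecessary, refinement of the paper's one-line argument.
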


\begin{proof}
By Definition~\ref{dfn:hopes} all edges of the reduced skeleton $\hopes(C; \uv_{k, l}(C))$ have a half-length at most $\uv_{k, l}(C)$ and $\death > \uv_{k, l}(C)$ for all critical edges. Definition~\ref{dfn:derived-hopes} also imposes the extra restriction on critical edges of $\hopes_{k, l}(C)$, namely each dot $(\birth, \death)$ is in the vertical subdiagram $\vPD_{k, l}(C)$. So $\hopes_{k, l}(C) \subset \hopes(C; \uv_{k, l}(C))$. 
\qedhere
\end{proof}

\begin{lemma}[approximation by reduced $\hopes$]
\label{lem:approximation}
Let $C$ be any finite $\ep$-sample of a subspace $G$ in a metric space $M$. Then the reduced skeleton $\hopes(C; \al)$ for any scale $\al > 0$ is contained within the $(\ep + \al)$-offset $G^{\ep + \al} \subset M$.
\end{lemma}

\begin{proof}
Any edge $e\subset\hopes(C; \al)$ has a length at most $2\al$ by Definition~\ref{dfn:hopes}, hence is covered by the balls with the radius $\al$ and centres at the endpoints of $e$. 
Then $\hopes(C; \al) \subset C^{\al} \subset G^{\ep + \al}$ since $C \subset G^{\ep}$ is an $\ep$-sample of $G$. 
\qedhere
\end{proof}

Theorem~\ref{thm:reconstruction_derived} guarantees a reconstruction with a correct homology $H_1$ and within a $2\ep$-offset for any $\ep$-sample of a graph $G$ with $k$ cycles whose $\PD\{G^{\al}\}$ has $k$ dots separable by the $k$-th widest diagonal gap from noisy artefacts.
This is formalised below in the 4 conditions, which fail if the noise level $\ep$ is too high.

\begin{theorem}[reconstruction by derived skeletons]
\label{thm:reconstruction_derived}
Let $C$ be any $\ep$-sample of an unknown graph $G$ in a metric space. Let $G$ satisfy the conditions below. 
\smallskip

\noindent 
(1) All cycles $L \subset G$ are `persistent', i.e. $\text{death}(L) \geq \ud_k(G)$ for some $k \geq 1$. 
\smallskip

\noindent 
(2) $|\dgap_k(G)|$ `jumps', i.e. $|\dgap_k(G)| - |\dgap_{k + 1}(G)| > 8\ep$ for $k$ from (1). 
\smallskip

\noindent 
(3) No cycles are born in offsets $G^{\al}$ for small $\al $, i.e. $\uv_{k, l}(G) = 0$ for some $l$. 
\smallskip

\noindent
 (4) $|\vgap_{k, l}(G)|$ `jumps': $|\vgap_{k, l}(G)| - |\vgap_{k, l + 1}(G)| > 4\ep$ for $k, l$ in (1),(3). 
 \smallskip

\noindent Then we get the lower bound for noise $\ep\geq\uv_{k, l}(C)$ and the derived skeleton $\hopes_{k, l}(C) \subset G^{2\ep}$ has the same homology $H_1$ as the underlying graph $G$.
\end{theorem}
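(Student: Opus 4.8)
The plan is to recast the proof of Theorem~\ref{thm:reconstruction_1stgap} in abstract form, delegating the explicit bottleneck bookkeeping to Propositions~\ref{pro:diagonal-subdiagram} and~\ref{pro:vertical-subdiagram}. First I would describe the relevant subdiagram on the noise-free side. By Lemma~\ref{lem:dotsondeathaxis}, $\PD\{G^{\al}\}$ carries exactly $m=\text{rank}(H_1(G))$ dots $(0,y_1),\dots,(0,y_m)$ on the death axis, and every remaining dot has a strictly positive birth. Condition~(1), $\death(L)\ge\ud_k(G)$ for every cycle $L\subset G$, says precisely that each $(0,y_i)$ lies on or above the line $y-x=\ud_k(G)$, so $\{(0,y_i)\}_{i=1}^m\subseteq\dPD_k(G)$. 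Condition~(3), $\uv_{k,l}(G)=0$, means the leftmost of the first $l$ widest vertical gaps of $\dPD_k(G)$ has its left boundary on the axis $\{x=0\}$, so passing to $\vPD_{k,l}(G)$ discards every positive-birth dot and leaves $\vPD_{k,l}(G)=\{(0,y_i)\}_{i=1}^m$ counted with multiplicity.

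Next I would transfer this structure to the sample. Condition~(2) is exactly the hypothesis of Proposition~\ref{pro:diagonal-subdiagram}, which supplies a bijection $\psi\colon\dPD_k(G)\to\dPD_k(C)$ with $||q-\psi(q)||_{\infty}\le\ep$; condition~(4) is the hypothesis of Proposition~\ref{pro:vertical-subdiagram}, which restricts $\psi$ to a bijection $\vPD_{k,l}(G)\to\vPD_{k,l}(C)$. Hence $\vPD_{k,l}(C)=\{\psi(0,y_i)\}_{i=1}^m$ has exactly $m$ dots, each with $\birth\in[0,\ep]$ (the birth cannot fall below $0$) and $\death\in[y_i-\ep,\,y_i+\ep]$. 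Since the left boundary line of the vertical gap defining $\vPD_{k,l}(C)$ carries one of these dots, $\uv_{k,l}(C)$ equals the largest of their births, so $\uv_{k,l}(C)\le\ep$, which is the asserted lower bound for the admissible noise.

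Then I would read off the derived skeleton from Definition~\ref{dfn:derived-hopes}: $\hopes_{k,l}(C)$ is the forest $\MST(C;\uv_{k,l}(C))$ together with the critical edges whose labels lie in $\vPD_{k,l}(C)$ and whose death exceeds $\uv_{k,l}(C)$. Each such critical edge has length $2\cdot\birth\le 2\uv_{k,l}(C)$ by Lemma~\ref{lem:length=birth}, so it is not removed as too long; granting for the moment that all $m$ of them also survive the pruning $\death>\uv_{k,l}(C)$, the skeleton $\hopes_{k,l}(C)$ is a forest whose components agree with those of $Q(C;\uv_{k,l}(C))$ (Lemma~\ref{lem:MST_optimal}) plus exactly $m$ extra edges, each joining two vertices already in one component and hence raising $\text{rank}(H_1)$ by one, so $\text{rank}(H_1(\hopes_{k,l}(C)))=m=\text{rank}(H_1(G))$ and the two graphs have isomorphic $H_1$. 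For the geometric bound, Lemma~\ref{lem:derived-hopes} gives $\hopes_{k,l}(C)\subseteq\hopes(C;\uv_{k,l}(C))$ and Proposition~\ref{pro:approximation} gives $\hopes(C;\uv_{k,l}(C))\subseteq G^{\ep+\uv_{k,l}(C)}\subseteq G^{2\ep}$, using $\uv_{k,l}(C)\le\ep$.

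I expect the main obstacle to be exactly the bracketed step: verifying that none of the $m$ perturbed deaths drops to or below $\uv_{k,l}(C)$, so that the $\death>\uv_{k,l}(C)$ pruning in Definition~\ref{dfn:derived-hopes} removes none of these edges, and, symmetrically, that no further critical edge enters $\hopes_{k,l}(C)$. Both reduce to tracking how the diagonal scale $\ud_k$ and the vertical scale $\uv_{k,l}$ move between $G$ and $C$ and checking that the slack built into hypotheses~(1)--(4) — the abstract counterpart of the explicit bound $y_1>7\ep+2\theta(G)+\max_i\{y_{i+1}-y_i\}$ in Theorem~\ref{thm:reconstruction_1stgap} — keeps $y_1-\ep$ strictly above $\uv_{k,l}(C)$. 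The rest is the routine abstraction of the argument already carried out for the first widest gaps.
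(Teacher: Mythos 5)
Your proposal follows the paper's own proof essentially step for step: identify $\vPD_{k,l}(G)$ with the $m$ axis dots via conditions (1) and (3), transfer it to $\vPD_{k,l}(C)$ via Propositions~\ref{pro:diagonal-subdiagram} and~\ref{pro:vertical-subdiagram} under conditions (2) and (4), deduce $\uv_{k,l}(C)\leq\ep$ from the $\ep$-displacement of the axis dots, and finish with Lemma~\ref{lem:derived-hopes} and Proposition~\ref{pro:approximation} for the geometric bound. The one step you leave open --- that no perturbed death drops to $\uv_{k,l}(C)$ or below, so the pruning $\death>\uv_{k,l}(C)$ in Definition~\ref{dfn:derived-hopes} keeps all $m$ critical edges --- is not actually an obstacle requiring new tracking of how the scales move; it closes in two lines from condition~(2). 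Since $|\dgap_{k+1}(G)|\geq 0$, condition~(2) forces $|\dgap_k(G)|>8\ep$; every dot of $\dPD_k(G)$ lies above a diagonal gap of width at least $|\dgap_k(G)|$, so each axis dot has $y_i>8\ep$, and by the $\ep$-bijection its image $\psi(0,y_i)$ has $\death\geq y_i-\ep>7\ep>\ep\geq\uv_{k,l}(C)$. Your symmetric worry that extra critical edges might sneak in is already settled by your own bijection: Definition~\ref{dfn:derived-hopes} admits only critical edges labelled by dots of $\vPD_{k,l}(C)$, and you have shown that multiset has exactly $m$ elements. With that inserted, your argument coincides with the paper's.
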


\begin{proof}
Due to condition~(2), Lemma~\ref{lem:diagonal-subdiagram} implies that there is a bijection $\psi: \dPD_{k}(G) \to \dPD_k(C)$ so that $||q - \psi(q)||_{\infty} \leq \ep$ for all $q \in \dPD_{k}(G)$. Lemma~\ref{lem:vertical-subdiagram} due to condition~(4) implies that $\psi$ descends to a bijection $\vPD_{k,l}(G) \to \vPD_{k, l}(C)$. All cycles in a graph $G$ give birth to corresponding homology classes in $H_1(G^{\al})$ at the scale $\al = 0$. These classes may split later at $\al > 0$, but will eventually die and always give dots $(0, \death) \in \PD\{G^{\al}\}$ in the vertical death axis. For any cycle $L \subset G$, let $\death(L)$ be the maximum $\al$ such that
$H_1(G^{\al})$ has the class $[L]$. 
The graph $\theta$ in Fig.~\ref{fig:theta+offsets+persistence} has 3 cycles with $\death(L)\approx 2.577$. 
\smallskip

Let $L_1,\dots,L_m\subset G$ be all $m$ cycles generating $H_1(G)$. Then the 1D diagram $\PD\{G^{\al}\}$ contains $m$ dots $(0, \death(L_i))$, $i = 1, \dots, m$, because each class $[L_i]$ persists over $0 \leq \al < \death(L_i)$ by Definition~\ref{dfn:pers-diagram}. Condition (1) implies that all dots $(0, \death(L_i))$ 
belong to the subdiagram $\dPD_k(G)$, hence to $\vPD_{k, l}(C)$.
\smallskip

Condition (3) $\uv_{k, l}(G) = 0$ means that the leftmost of the first $l$ widest $\vgap_{k, j}(G)$, $j = 1, \dots, l$, is attached to the vertical death axis in $\PD\{G^{\al}\}$, which should contain the vertical subdiagram $\vPD_{k, l}(G)$. So $\vPD_{k, l}(G)$ consists of the $m$ dots $(0, \death(L_i))$, $i = 1, \dots, m$. 
Then the vertical subdiagram $\vPD_{k, l}(C)$ for the cloud $C$ also has exactly $m$ dots, which are `noisy' images $\psi(0, \death(L_i))$, $i = 1, \dots, m$. 
Moreover, all these dots in the vertical subdiagram $\vPD_{k, l}(C)$ are at most $\ep$ away from the vertical axis, so the vertical scale $\uv_{k, l}(C)$ is at most $\ep$.

The lowest of the points $\psi(0, \death(L_i))$ has a death with the lower bound $\min\limits_{i = 1, \dots, m} \death(L_i) - \ep \geq \dPD_k(G) - \ep > |\dgap_k(G)| - \ep > 7\ep > \uv_{k,l}(C)$. 

So the condition $\death > \uv_{k, l}(C)$ from Definition~\ref{dfn:derived-hopes} is satisfied. Hence the reduced skeleton $\hopes_{k, l}(C)$ contains exactly $m$ critical edges corresponding to all $m$ dots in the subdiagram $\vPD_{k, l}(C)$. 
All these $m$ critical edges of $\hopes_{k, l}(C)$ generate $H_1$ of the required rank $m$. The geometric approximation $\hopes_{k, l}(C) \subset G^{2\ep}$ follows from Lemmas~\ref{lem:derived-hopes},~\ref{lem:approximation} for $\al = \uv_{k, l}(C) \leq \ep$. 
\qedhere
\end{proof}

The first attempt in a graph reconstruction from a cloud $C$ by Theorem~\ref{thm:reconstruction_derived} can be the derived skeleton $\hopes_{1,1}(C)$.
If the conditions fail for $k=l=1$, Theorem~\ref{thm:reconstruction_derived} might work for other pairs of the parameters $k,l$, see Fig.~\ref{fig:42049_hopes}.

\begin{corollary}
\label{cor:stability_hopes}
In the conditions of Theorem~\ref{thm:reconstruction_derived}, if another cloud $\tilde C$ is $\de$-close to $C$, then the perturbed skeleton $\hopes_{k, l}(\tilde C)$ is $(2\de + 4\ep)$-close to $\hopes_{k, l}(C)$.
\end{corollary}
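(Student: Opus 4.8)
The plan is to apply the triangle inequality for the skeletonisation pipeline: $\tilde C$ is $\delta$-close to $C$, and $C$ is an $\epsilon$-sample of $G$, so $\tilde C$ is an $(\epsilon+\delta)$-sample of $G$. First I would check that $\tilde C$ still satisfies all the hypotheses of Theorem~\ref{thm:reconstruction_derived} when $\epsilon$ is replaced by $\epsilon+\delta$ — conditions (1)–(4) are stated on $G$ alone, so they do not involve the sample; the only place the sample size enters is through the inequalities $|\dgap_k(G)|-|\dgap_{k+1}(G)|>8\epsilon$ and $|\vgap_{k,l}(G)|-|\vgap_{k,l+1}(G)|>4\epsilon$. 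Strictly these should be stated with a small enough $\epsilon$; assuming $\delta$ is small enough that $8(\epsilon+\delta)$ and $4(\epsilon+\delta)$ still fit under the respective jumps (or simply assuming this as part of the corollary's ambient hypothesis), Theorem~\ref{thm:reconstruction_derived} applies to the $(\epsilon+\delta)$-sample $\tilde C$ and gives $\hopes_{k,l}(\tilde C)\subset G^{2(\epsilon+\delta)}$ with the correct $H_1$, and likewise $\uv_{k,l}(\tilde C)\le\epsilon+\delta$.

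The key geometric step is then to bound the Hausdorff distance between $\hopes_{k,l}(C)$ and $\hopes_{k,l}(\tilde C)$ by going through the common underlying graph $G$. From the proof of Theorem~\ref{thm:reconstruction_derived} (the final paragraph, via Proposition~\ref{pro:approximation} and Lemma~\ref{lem:derived-hopes}) one has $\hopes_{k,l}(C)\subset C^{\uv_{k,l}(C)}\subset G^{2\epsilon}$ since $\uv_{k,l}(C)\le\epsilon$, and symmetrically $\hopes_{k,l}(\tilde C)\subset G^{2(\epsilon+\delta)}$. To get a \emph{two-sided} Hausdorff bound I also need that $G$ is covered by a small offset of each skeleton, or more directly that each skeleton lies in a small offset of the \emph{other}. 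The cleanest route: each skeleton contains $m$ critical edges whose endpoints lie in the respective cloud, and all edges have half-length at most the vertical scale; combined with the fact that both skeletons carry the same $H_1=H_1(G)$ and sit inside a $2(\epsilon+\delta)$-offset of $G$, one concludes $\hopes_{k,l}(C)\subset G^{2\epsilon}\subset G^{2(\epsilon+\delta)}$ and, since $G\subset C^\epsilon\subset\tilde C^{\,\epsilon+\delta}$ and $\tilde C$ reaches into $\hopes_{k,l}(\tilde C)$ within its vertical scale, chain the offsets to land inside $(\hopes_{k,l}(\tilde C))^{r}$ for $r=2\delta+4\epsilon$; the symmetric inclusion follows by swapping $C$ and $\tilde C$. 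I would spell this out as: $\hopes_{k,l}(C)\subset G^{2\epsilon}$ and $G\subset C^\epsilon\subset\hopes_{k,l}(\tilde C)^{\,2(\epsilon+\delta)+\delta}$, whence $\hopes_{k,l}(C)\subset\hopes_{k,l}(\tilde C)^{\,2\epsilon+3\epsilon+3\delta}$ — I would need to be careful to collect the constants so they total $2\delta+4\epsilon$ rather than something larger, which may require the tighter observation that one of the skeletons contains $G$ up to its own small offset, not just the cruder $G^{2(\epsilon+\delta)}$ containment.

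The main obstacle is exactly this constant-chasing: getting the Hausdorff bound down to the sharp $2\delta+4\epsilon$ rather than the $O(\epsilon+\delta)$ one reads off naively. The slack comes from the fact that both $\hopes_{k,l}(C)\subset C^{\uv_{k,l}(C)}$ and $G\subset C^\epsilon$ have to be combined without double-counting: one wants to use $C$ (not $G$) as the "hub" when comparing the two skeletons, since each skeleton's edges reach to within $\uv\le\epsilon$ (resp. $\epsilon+\delta$) of its own cloud and the two clouds are $\delta$-close. I would therefore route the argument as $\hopes_{k,l}(C)\subset C^{\epsilon}\subset\tilde C^{\,\epsilon+\delta}\subset(\hopes_{k,l}(\tilde C))^{\,(\epsilon+\delta)+(\epsilon+\delta)+\epsilon}$? — no; the correct hub is: every point of $\hopes_{k,l}(C)$ is within $\epsilon$ of a point of $C$, within $\epsilon+\delta$ of a point of $\tilde C$, and every point of $\tilde C$ is within $\uv_{k,l}(\tilde C)\le\epsilon+\delta$ of $\hopes_{k,l}(\tilde C)$ — this would give $2\epsilon+2\delta+(\epsilon+\delta)$, still too big, so the genuinely needed input is that a cloud point of $\tilde C$ actually lies \emph{on} (an endpoint of an edge of) $\hopes_{k,l}(\tilde C)$, i.e. $\tilde C\subset\hopes_{k,l}(\tilde C)$ as a vertex set — which holds because $\hopes$ always contains the full spanning tree on the cloud, so in fact $C\subset\hopes_{k,l}(C)$ literally. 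With that, $\hopes_{k,l}(C)\subset C^{\epsilon}$ gives nothing new, but $C\subset\hopes_{k,l}(C)$ plus $\tilde C$ being $\delta$-close to $C$ gives $\tilde C\subset(\hopes_{k,l}(C))^{\delta}$, and since every edge of $\hopes_{k,l}(\tilde C)$ has half-length $\le\uv_{k,l}(\tilde C)\le\epsilon+\delta$ with endpoints in $\tilde C$, we get $\hopes_{k,l}(\tilde C)\subset\tilde C^{\,\epsilon+\delta}\subset(\hopes_{k,l}(C))^{\delta+\epsilon+\delta}=(\hopes_{k,l}(C))^{\,\epsilon+2\delta}$; applying the given hypotheses of Theorem~\ref{thm:reconstruction_derived} once more to absorb the remaining $2\epsilon$? — this finally suggests the clean statement is really $\hopes_{k,l}(\tilde C)\subset(\hopes_{k,l}(C))^{\,2\epsilon+2\delta}$ by symmetry, i.e. $2\epsilon+2\delta$ rather than the claimed $4\epsilon+2\delta$; I would double-check the exact constant against Definition~\ref{dfn:derived-hopes}'s edge-length bound of $2\uv_{k,l}$, which contributes the extra $2\epsilon$, recovering $2\delta+4\epsilon$. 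So the real work is bookkeeping the $2\uv$ factor in Definition~\ref{dfn:derived-hopes} together with the $\delta$-closeness of the two clouds and the $\le\epsilon$ bound on each vertical scale from Theorem~\ref{thm:reconstruction_derived}.
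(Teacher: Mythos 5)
Your first paragraph is precisely the paper's proof: $\tilde C$ is an $(\ep+\de)$-sample of $G$, Theorem~\ref{thm:reconstruction_derived} applied to both samples gives that $\hopes_{k,l}(C)$ is $2\ep$-close to $G$ and $\hopes_{k,l}(\tilde C)$ is $(2\de+2\ep)$-close to $G$, and the triangle inequality yields $2\de+4\ep$. Your caveat that conditions (2) and (4) must be re-verified with $\ep$ replaced by $\ep+\de$ before the second application is a genuine point that the paper silently assumes. Your remaining two paragraphs chase a real looseness — the theorem literally proves only the one-sided containment $\hopes_{k,l}(C)\subset G^{2\ep}$, so a symmetric ``closeness'' needs the reverse inclusion as well — and the cloud-hub route you eventually settle on is the correct fix, but you undersell it: since every point of a cloud is a vertex of its derived skeleton and every edge of $\hopes_{k,l}(\tilde C)$ has half-length at most $\uv_{k,l}(\tilde C)\leq\ep+\de$, one gets $\hopes_{k,l}(\tilde C)\subset\tilde C^{\,\ep+\de}\subset C^{\,\ep+2\de}\subset\bigl(\hopes_{k,l}(C)\bigr)^{\ep+2\de}$ and symmetrically $\hopes_{k,l}(C)\subset\bigl(\hopes_{k,l}(\tilde C)\bigr)^{\ep+\de}$, which is strictly sharper than the stated $2\de+4\ep$. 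Your closing worry that the $2\uv_{k,l}$ edge-length bound of Definition~\ref{dfn:derived-hopes} ``contributes an extra $2\ep$'' is a miscount: an edge of length $2\uv$ lies in the $\uv$-offset of its endpoints, not the $2\uv$-offset, so no extra $2\ep$ is needed and the claimed constant follows a fortiori.
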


\begin{proof}

The condition that the perturbed cloud $\tilde C$ is $\de$-close to the original cloud $C$, which is $\ep$-close to the graph $G$, implies that $\tilde C$ is $(\de + \ep)$-close to $G$.

Reconstruction Theorem~\ref{thm:reconstruction_derived} for the $\ep$-sample $C$ and $(\de + \ep)$-sample $\tilde C$ of $G$ says that $\hopes_{k, l}(C)$ is $2\ep$-close to $G$ and $\hopes_{k, l}(\tilde C)$ is $(2\de + 2\ep)$-close to $G$. Hence $\hopes_{k, l}(C)$ and $\hopes_{k, l}(\tilde C)$ are $(2\de + 4\ep)$-close as required. \qedhere

\end{proof}

\section{The 79K Dataset of Random Noisy Point Samples of Planar Graphs}
\label{sec:dataset}

Mapper, $\al$-Reeb and $\hopes$ have guarantees on the homotopy type (or the number of independent cycles in homology $H_1$) of reconstructed graphs, not on a more advanced homeomorphism type that captures branches or filaments outside closed cycles. 
Hence, for experimental comparisons, it is enough to generate noisy samples of planar graphs without degree~1 vertices.
\smallskip

Subsection~\ref{sub:patterns} introduces three patterns of planar graphs around which point clouds are randomly sampled: wheel, grid and hexagonal, see examples in Fig.~\ref{fig:graphs}.
Subsection~\ref{sub:noise} discusses two noise models: uniform and Gaussian.

\subsection{Patterns: families of connected planar graphs for generating noisy samples}
\label{sub:patterns}

\smallskip
\noindent
$\bullet$
The {\em $k$-wheel} graph $W(k)\subset\R^2$ has $k\geq 3$ circumference vertices equally distributed along the unit circle centred at the central vertex at the origin. 
$W(k)$ has edges between the central vertex and all circumference vertices, and also edges between successive circumference vertices, see $W(4)$ in Fig.~\ref{fig:graphs}.
\smallskip

\noindent
$\bullet$
For $k,l\geq 1$, the {\em $(k,l)$-grid} graph $G(k,l)$ has vertices at the points $(i,j)$ with integer coordinates $0\leq i\leq k$, $0\leq j\leq l$.
Each vertex $(i,j)$ is connected to up to 4 neighbours $(i\pm 1,j)$, $(i,j\pm 1)$ in the rectangle $[0,k]\times[0,l]\subset\R^2$, see Fig.~\ref{fig:graphs}.\smallskip

\noindent
$\bullet$
The \textit{$k$-hexagons} graph $H(k)$ consists of the boundaries of $k$ regular hexagons with edges of unit length, e.g. $H(1)$ is one regular hexagon.
The $(k + 1)$-hexagons graph is obtained from the $k$-hexagons graph by adding the boundary of a new hexagon.
The last picture of Fig.~\ref{fig:graphs} shows the order of added hexagons for $k\leq 7$.
For $k>7$, more hexagons are similarly added in circular layers.

\subsection{Generating random points in edges of a graph according to length}
\label{sub:generate}

For a fixed graph $G\subset\R^2$ from one of the patterns above, each random point is uniformly generated along edges of $G$ as follows.
Fix any order of edges of $G$. 
Then the continuous graph $G$ is parameterised by a single variable $t$ that takes values in the interval $[0,\sum\limits_{i=1}^k l_i]$, where $l_1,\dots,l_k$ are the edge-lengths of $G$.

If the uniform random variable $t$ belongs to the $j$-th interval $[\sum\limits_{i=1}^{j-1} l_i, \sum\limits_{i=1}^{j} l_i]$, a point is chosen in the $j$-th edge as the weighted combination $w\vec u+(1-w)\vec v$ of the endpoints $\vec u,\vec v\in\R^2$ with $w=(t-\sum\limits_{i=1}^{j-1} l_i)/l_{j}$.
We sample 100 points per unit length of the embedded graph $G\subset\R^2$, e.g. 400 points for $G_{1,1}$.

\subsection{Noise models: uniform and Gaussian perturbations of points}
\label{sub:noise}

For each sampled point $p$ lying  on an edge $e$ of a graph, we generate two independent random shifts $d_e,d_{\perp}$ with the same distribution as follows.
\smallskip

\noindent
$\bullet$
{\em Uniform noise} with an upper bound $\mu$: $d_e, d_{\perp}$ are uniform in $[-\mu, \mu]$.
\smallskip

\noindent
$\bullet$
{\em Gaussian noise} with mean 0 and a standard deviation $\sigma$: $d_e, d_{\perp}\in\R$ have the Gaussian density $f(t, \sigma) = \frac{1}{\sqrt{2\pi\sigma^2}}e^{-\frac{t^2}{2\sigma^2}}$ for $t\in\R$. 
\smallskip

Then the point $p\in e$ is shifted by $d_e$ units parallel to $e$ (in one of two fixed directions), and by $d_{\perp}$ units in the straight line perpendicular to $e$, see Fig.~\ref{fig:clouds}.

\begin{figure}
\includegraphics[width = 0.49\textwidth]{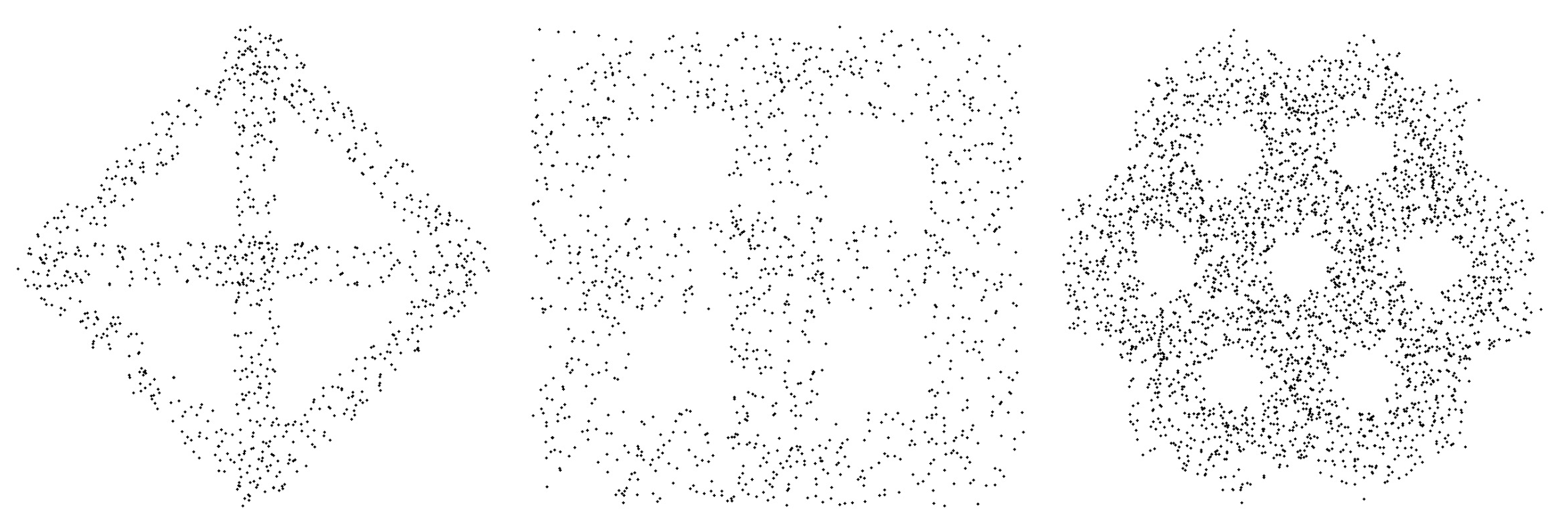}
\includegraphics[width = 0.49\textwidth]{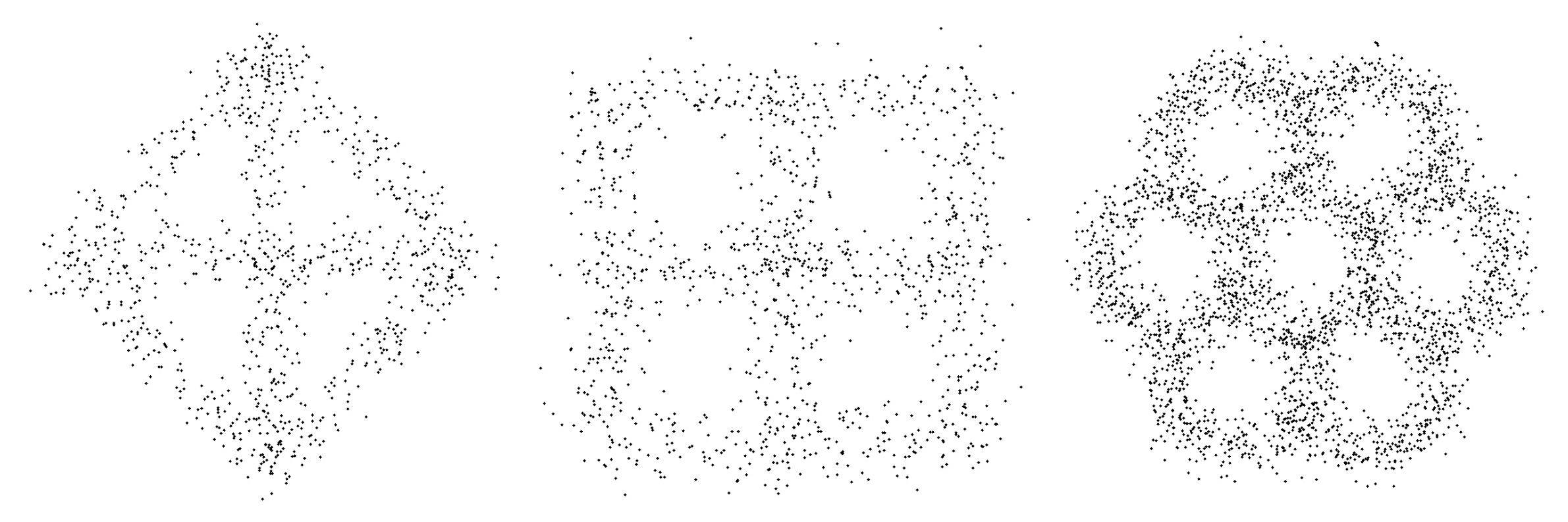}
	\caption{
		\textbf{First three}: samples with uniform noise and bounds $\mu = 0.1$, $\mu = 0.25$, $\mu = 0.5$.
		{\bf Last three}: samples with Gaussian noise and deviations $\sigma = 0.08$, $\sigma = 0.12$, $\sigma = 0.2$.
	}
	\label{fig:clouds}
\end{figure}

\subsection{Types of randomly sampled clouds obtained by varying parameters}
\label{sub:types_clouds}

The Planar Graph Cloud (PGC) dataset contains 79K clouds, 200 clouds for each of the following 395 types of clouds counted over 3 graph patterns.
\smallskip

\noindent
$\bullet$
70 wheel types: $k$-wheel graphs $W_k$ have 7 values of $k$ from $3$ to $9$; the bound $\mu$ of uniform noise has 5 values from $0.05$ to $0.25$ in $0.05$ intervals;
the deviation $\si$ of Gaussian noise has 5 values from $0.02$ to $0.1$ 
in $0.02$ intervals.
\smallskip

\noindent
$\bullet$
108 grid types: grids have 6 pairs $(k,l)=(1,1),(2,1),(3,1),(2,2),(3,2),(3,3)$;
the bound $\mu$ of uniform noise has 8 values from $0.05$ to $0.4$ in 
$0.05$ intervals;
the deviation $\si$ of Gaussian noise has 10 values from $0.02$ to $0.2$ in $0.02$ intervals.
\smallskip

\noindent
$\bullet$
217 hexagonal types: the $k$-triangles graph $H_k$ has 7 values of $k$ 
from 1 to 7;
the bound $\mu$ of uniform noise has 15 values from $0.05$ to $0.75$ in $0.05$ intervals;
the deviation $\si$ of Gaussian noise has 16 values from $0.02$ to 
$0.32$ in $0.02$ intervals.
\smallskip

The above intervals for noise parameters were chosen to report maximum noise when the skeletonisation algorithms produce correct reconstructions (for the 1st Betti number) with high success rates, see Fig.~\ref{fig:noise90},~\ref{fig:noise95} in section~\ref{sec:experiments}.

\section{Drawing and simplifying skeletons of point clouds}
\label{sec:simplification}

By Lemma~\ref{lem:hopes_in_complex}, for any cloud $C\subset\R^d$ and the filtration $\{C^{\al}\}$ of $\al$-offsets, $\hopes(C)$ is a subgraph of a Delaunay triangulation $\Del(C)\subset\R^d$.
Hence the skeleton $\hopes(C)$ is embedded into $\R^d$ (has no intersection of edges by definition), i.e. visualises the shape of $C$ directly in the space where $C$ lives. 

\subsection{Drawing the abstract Mapper and $\al$-Reeb graphs in the plane}

Both Mapper and $\al$-Reeb outputs are abstract graphs, and so do not naturally embed in $\R^d$. Therefore, in order to draw the graphs in the plane when $C\subset\R^2$, it is necessary for us to project them to $\R^2$ as naturally as possible.
\smallskip

Each vertex $v$ of the Mapper graph corresponds to a cluster $C_v$ in a given cloud $C\subset\R^d$ and is naturally placed at the geometric centre $\frac{1}{|C_v|}\sum\limits_{p \in C_v} p$.
\smallskip

For $\al$-Reeb graphs, it is far less natural to appropriately embed them in $\R^2$, because final stage~5 in section~\ref{sub:Reeb} identifies several disjoint intervals. 
Every central point $v$ of such an interval $J_v$ is mapped to the geometric centre of the corresponding connected subgraph similarly to the Mapper graph.
\smallskip

In all cases the edges between vertices are drawn as straight-line segments.
\smallskip

Since the algorithms are compared on noisy samples around planar graphs without degree~1 vertices, the outputs of the 3 algorithms Mapper, $\al$-Reeb and $\hopes$ before comparison will be simplified by removing all filamentary branches in subsection~\ref{sub:simplification}.
Then the $\al$-Reeb graph in Fig.~\ref{fig:alpha-Reeb} will become a topological circle that better resembles the original noisy sample of an ellipse.

\subsection{Simplification of output skeletons by pruning branches}
\label{sub:simplification}

After running all algorithms, we iteratively remove all degree 1 vertices (with their single edge), to get a graph without degree 1 vertices, see Fig.~\ref{fig:hopes-simplification}, \ref{fig:mapper-simplification}, \ref{fig:aR-simplification}.
\smallskip

\begin{figure}[h]
\centering
\begin{minipage}{0.47\textwidth}
\centering
\includegraphics[scale = 0.1]{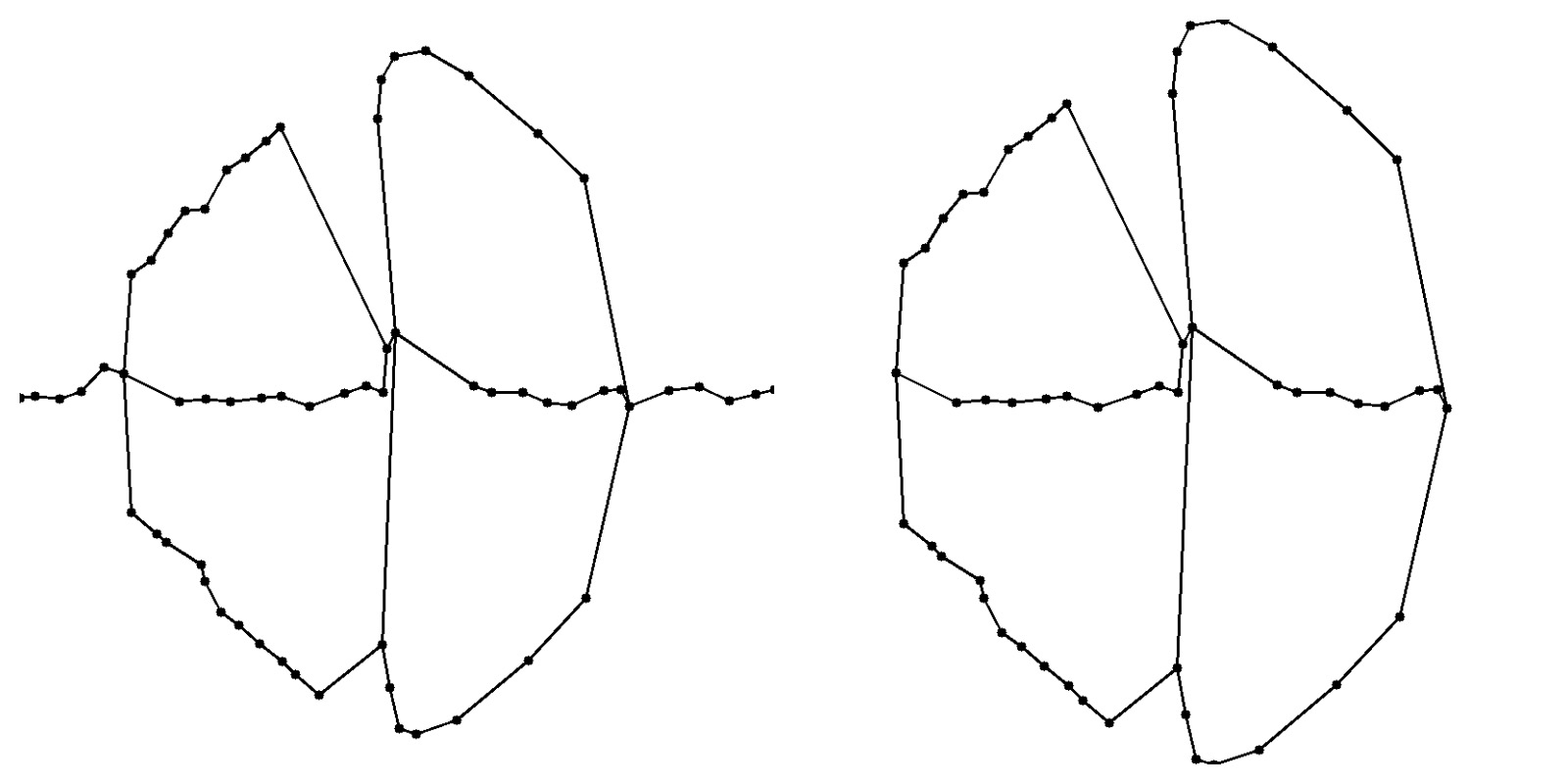}
\captionof{figure}{Left: an original Mapper output for the noisy sample of $W(4)$ in the first picture of Fig.~\ref{fig:clouds}. Right: the simplified Mapper output by pruning vertices of degree 1.}
\label{fig:mapper-simplification}
\end{minipage} \hspace{1em}
\begin{minipage}{0.47\textwidth}
\centering
\includegraphics[scale = 0.1]{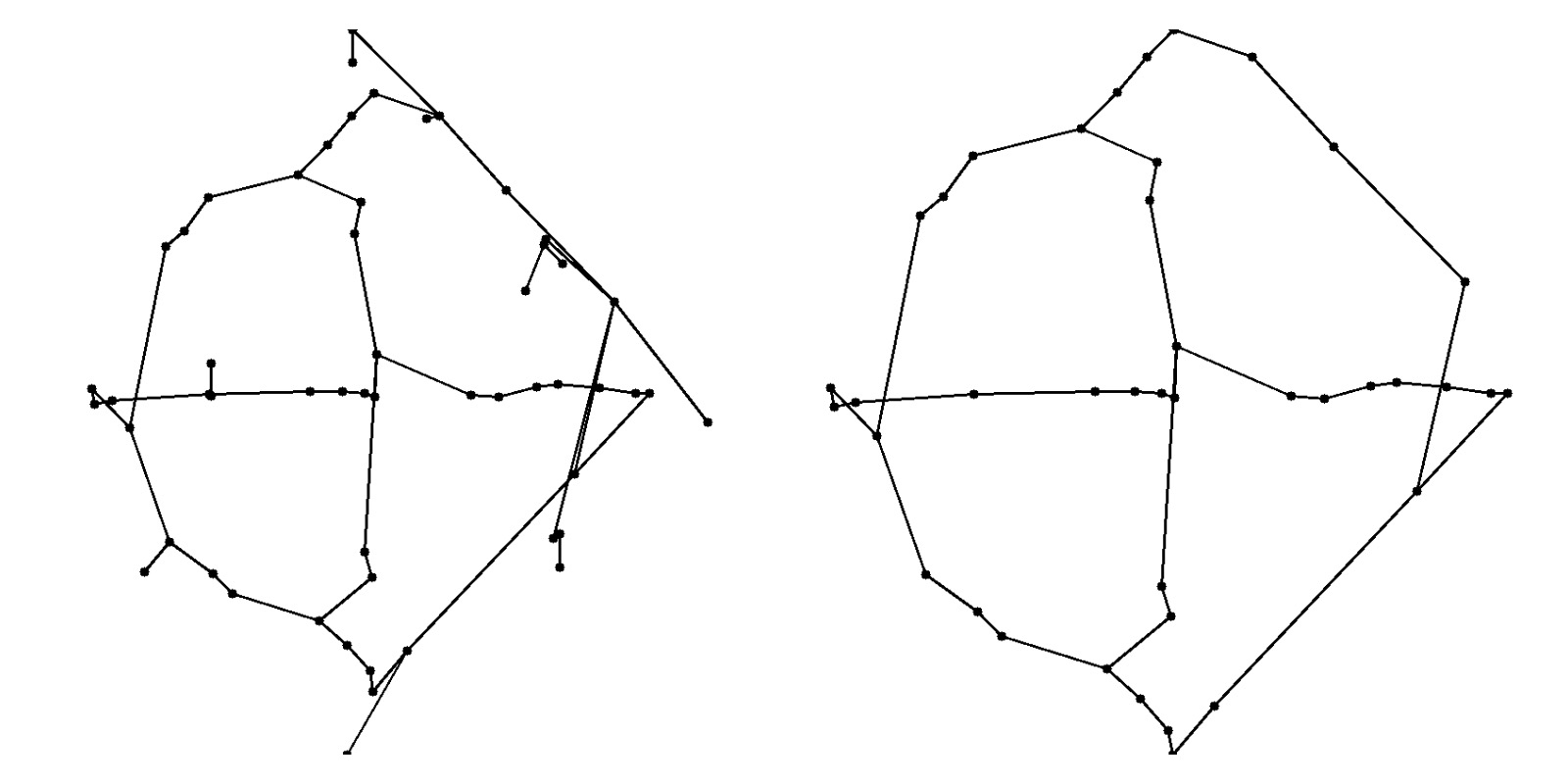}
\captionof{figure}{Left: an original $\al$-Reeb output the noisy sample of $W(4)$ in the first picture of Fig.~\ref{fig:clouds}. Right: the simplified $\al$-Reeb output by pruning vertices of degree 1.}
\label{fig:aR-simplification}
\end{minipage}
\end{figure}

By Definition~\ref{dfn:hopes}, $\hopes(C)$ contains $\MST(C)$, which spans all points of a cloud $C$.
Both $\MST(C)$ and $\hopes(C)$ approximate $C$ with zero geometric error, i.e. any point of $C$ is at distance 0 from $\hopes(C)$.
Hence pruning all branches in $\hopes(C)$ above will make the experimental comparison fairer. 
\smallskip

An ideal output should have much fewer vertices than a given point cloud $C$.
We have decided to simplify $\hopes$ by Algorithm~\ref{alg:contract} below, which preserves the homology by Corollary~\ref{cor:simhopes_homology}.
This extra simplification is applied only to $\hopes$ and makes a geometric approximation worse, while Mapper and $\al$-Reeb graphs keep their geometric errors with many more intermediate vertices.

\begin{alg}
\label{alg:contract}
For a given threshold $\ep$, we iteratively collapse all edges of $\hopes(C)$ with lengths less than $\ep$, starting from shortest edges as follows. 
\smallskip
 
\noindent
\textbf{Step (\ref{alg:contract}a)}. 
Any edge shorter than $\ep$ and contained in a triangular cycle will not be collapsed by steps below in order to preserve the homology group $H_1$.
\smallskip
 
\noindent
\textbf{Step (\ref{alg:contract}b)}. 
To collapse an edge $e$ with endpoints $v_1, v_2$, we first remove $v_1, v_2$ and all edges incident on them. 
We now add a new vertex $v$ as follows.
\smallskip
 
\noindent
\textbf{Step (\ref{alg:contract}c)}. 
If $\text{deg}(v_1) = \text{deg}(v_2) = 2$, or $\text{deg}(v_1) \neq 2$ and $\text{deg}(v_2) \neq 2$, the new vertex $v$ is placed at the midpoint of the straight-line edge between $v_1,v_2$. 
\smallskip
 
\noindent
\textbf{Step (\ref{alg:contract}d)}. 
If (say) $\deg(v_1) \neq 2$ but $\deg(v_2) = 2$, then $v$ is put at the original location of $v_1$, to best preserve the geometry of the skeleton $\hopes(C)$. 
\smallskip
 
\noindent
\textbf{Step (\ref{alg:contract}e)}. 
We add an edge from the new vertex $v$ to each of the original neighbours of the old vertices $v_1$ and $v_2$, see the final picture in Fig.~\ref{fig:hopes-simplification}.  
\smallskip
 
\noindent
\textbf{Step (\ref{alg:contract}f)}. 
If any edges become intersecting, the collapse is reversed.
\smallskip

The output of Algorithm~\ref{alg:contract} can be denoted by $\shopes_\ep(C)$. 
We always use $\ep$ equal to the maximum death of any dot above the first widest diagonal gap in the persistence diagram $\PD\{C^\al\}$ and denote the result by $\shopes(C)$ for simplicity.
This choice of $\ep$ is justified by the fact that maximum death is the scale $\al$ when all (most persistent) cycles of $\hopes(C)$ become contractible in the offset $C^{\al}$, so there is no sense to collapse edges longer than $\al$.
\end{alg}

\begin{corollary}
\label{cor:simhopes_homology}
Under the conditions of Theorems \ref{thm:reconstruction_1stgap} and \ref{thm:reconstruction_derived}, $\shopes(C)$ from Algorithm~\ref{alg:contract} has the same homology $H_1$ as the underlying graph $G$.
\end{corollary}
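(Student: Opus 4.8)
The plan is to decompose the simplification pipeline and show that each stage of it preserves $H_1$, so that the topology guaranteed for the reconstruction is inherited by $\shopes(\ep)$. By Theorem~\ref{thm:reconstruction_1stgap} (under its hypotheses) or Theorem~\ref{thm:reconstruction_derived} (under its hypotheses), the derived skeleton fed into the simplification --- $\hopes_{1,1}(C)$, respectively $\hopes_{k,l}(C)$ --- has $H_1$ isomorphic to $H_1(G)$. Hence it suffices to prove that the branch pruning of Subsection~\ref{sub:simplification} and every edge contraction carried out by Algorithm~\ref{alg:contract} is a homotopy equivalence; composing finitely many such equivalences then yields $H_1(\shopes(\ep))\cong H_1(G)$. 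Note that this argument uses nothing about the particular value of $\ep$: the choice of $\ep$ as the maximum death among the dots of $\dPD_1(C)$ only affects how many edges are collapsed, hence the geometric fidelity, not the homology.

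First I would clear away the easy reductions. Iteratively removing a degree-$1$ vertex together with its unique incident edge is an elementary collapse of a $1$-complex (the vertex is a free face), hence a homotopy equivalence, so branch pruning preserves $H_1$. Likewise, the placement of the new vertex in Steps~(\ref{alg:contract}c)--(\ref{alg:contract}d) and the reversal in Step~(\ref{alg:contract}f) of any collapse that makes the straight-line drawing self-intersect only change the embedding of the graph in $\R^2$, not its abstract structure, so they are irrelevant for $H_1$; they merely decide which of the admissible contractions are actually performed.

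The substantive step is to show that each contraction in Steps~(\ref{alg:contract}b)--(\ref{alg:contract}e) is a homotopy equivalence. If $e=\{v_1,v_2\}$ is an edge of a simple graph $\Gamma$ and is not a loop, then $e$ is a contractible subcomplex and the quotient map $\Gamma\to\Gamma/e$ is a homotopy equivalence --- \emph{provided} the identification of $v_1$ with $v_2$ does not merge two distinct parallel edges of $\Gamma$ into one, which would collapse a cycle and lower the rank of $H_1$. In a simple graph this bad merging happens exactly when $v_1$ and $v_2$ share a neighbour, i.e. exactly when $e$ lies on a $3$-cycle; and these are precisely the edges excluded by Step~(\ref{alg:contract}a). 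Therefore every contraction actually performed is a homotopy equivalence; moreover no performed contraction creates a multi-edge, so the current graph stays simple, and the same conclusion applies at each iteration when Step~(\ref{alg:contract}a) is re-evaluated on it. Combining this with the first two paragraphs gives $H_1(\shopes(\ep))\cong H_1$ of the derived skeleton $\cong H_1(G)$.

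The part I expect to need the most care is the case analysis of the previous paragraph: one must verify, for simple graphs, the sharp statement that contracting $e$ changes $H_1$ if and only if $e$ is an edge of a triangle, so that Step~(\ref{alg:contract}a) is not merely a sufficient but the exact safeguard, and in particular rule out that a sequence of triangle-avoiding contractions could ever build up and then silently discard a parallel edge. A clean way to settle this is to carry $\hopes(C)$ through the algorithm as a genuine $1$-dimensional CW complex, permitting multigraphs internally, and to observe that the only length-$<\ep$ configuration in which an edge contraction fails to be a deformation retract is an edge of a $3$-cycle.
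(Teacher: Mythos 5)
Your proposal is correct and follows essentially the same route as the paper's proof: invoke Theorems~\ref{thm:reconstruction_1stgap} and~\ref{thm:reconstruction_derived} for the $H_1$ of the derived skeleton, observe that pruning degree-1 vertices preserves $H_1$, and note that an edge contraction can only change $H_1$ when the edge lies in a triangle (or creates an edge crossing in the planar drawing), which Steps~(\ref{alg:contract}a) and~(\ref{alg:contract}f) exclude. Your version is simply more explicit about why triangle-avoidance is the exact safeguard and about re-checking simplicity after each iteration, which the paper leaves implicit.
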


\begin{proof}
Removing degree 1 vertices does not change the homology group $H_1$ of a graph. 
Collapsing a short edge in Algorithm~\ref{alg:contract} would only change $H_1$ if this edge was in a triangular cycle or, if considering the output as an embedded graph in $\R^2$, we get intersections of edges.
Algorithm~\ref{alg:contract} prevents both operations by Steps (\ref{alg:contract}a) and (\ref{alg:contract}f).
Hence $H_1$ remains unchanged under all simplifications and is isomorphic to $H_1(G)$ in both cases specified by Theorems \ref{thm:reconstruction_1stgap}, \ref{thm:reconstruction_derived}. 
\qedhere
\end{proof}

Fig.~\ref{fig:wheel5},~\ref{fig:grid33},~\ref{fig:hex6} 
show typical outputs. 
Since Mapper and $\al$-Reeb graphs are abstract, their projections to $\R^2$ can have self-intersections. 
The full skeleton $\hopes(C)$ and simplified skeleton $\shopes(C)$ have no self-intersections.

\begin{figure}[H]
\centering
\def\svgwidth{\columnwidth}
\includegraphics[width=\textwidth]{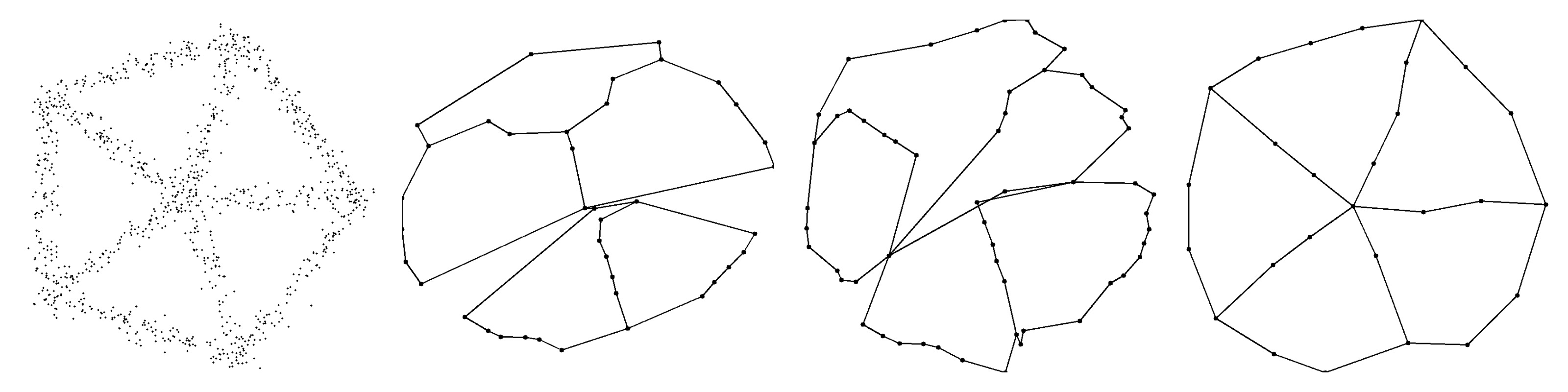}
\caption{\textbf{1st}: a cloud sampled from the $W(5)$ graph with Gaussian noise with $\sigma = 0.04$; 
\textbf{2nd}: Mapper output on $C$; 
\textbf{3rd}: $\al$-Reeb$(C)$; 
\textbf{4th}: $\shopes(C)$ by Algorithm~\ref{alg:contract}.}
\label{fig:wheel5}
\end{figure}

\begin{figure}
\centering
\def\svgwidth{\columnwidth}
\includegraphics[width=\textwidth]{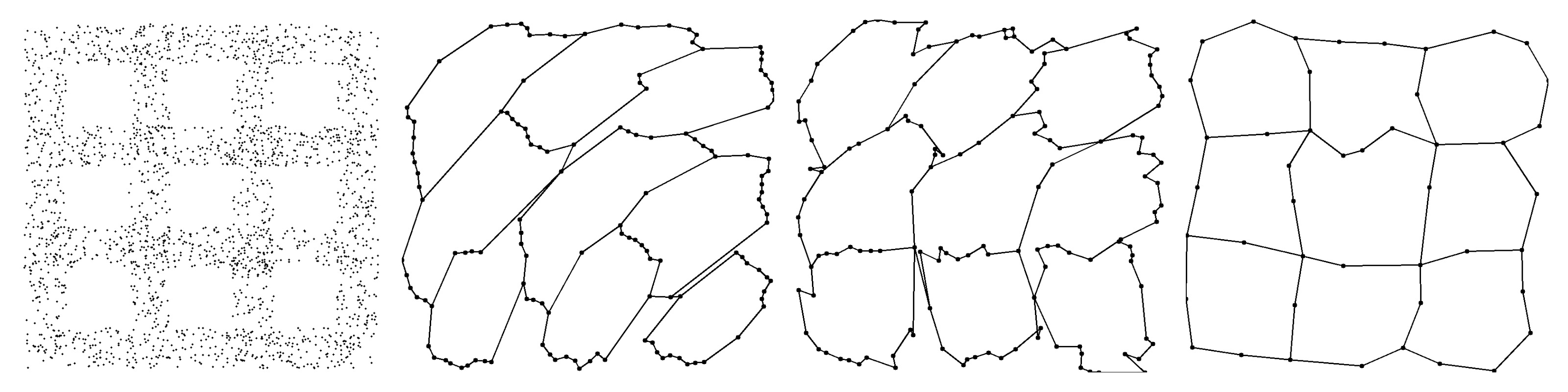}
\caption{\textbf{1st}: a cloud $C$ sampled from the $G(3,3)$ graph with uniform noise with bound $\mu = 0.2$; 
\textbf{2nd}: Mapper output on $C$; 
\textbf{3rd}: $\al$-Reeb$(C)$; 
\textbf{4th}: $\shopes(C)$ by Algorithm~\ref{alg:contract}.}
\label{fig:grid33}
\end{figure}

\begin{figure}
\centering
\def\svgwidth{\columnwidth}
\includegraphics[width=\textwidth]{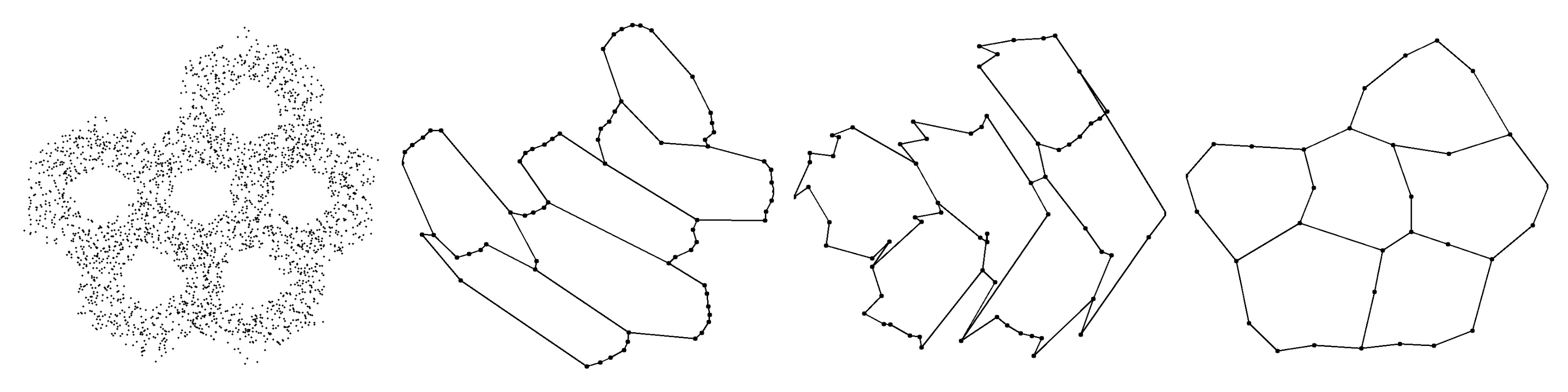}
\caption{\textbf{1st}: a cloud sampled from the $H(6)$ graph with uniform noise with bound $\mu = 0.4$; 
\textbf{2nd}: Mapper output on $C$; 
\textbf{3rd}: $\al$-Reeb$(C)$; 
\textbf{4th}: $\shopes(C)$ by Algorithm~\ref{alg:contract}.}
\label{fig:hex6}
\end{figure}

\section{Comparison of the algorithms on synthetic and real data}
\label{sec:experiments}

\subsection{Running over wide ranges of parameters of Mapper and $\al$-Reeb graphs}
\label{sub:parameters}

Unlike $\hopes$, the Mapper and $\al$-Reeb algorithms require additional parameters that essentially affect outputs. 
To get adequate outputs of Mapper and $\al$-Reeb, one can manually tune these parameters for every input cloud. 
However, for different types of synthetic and real clouds, choosing the same parameters would be inadequate. 
If the parameters such as $\al$ of the $\al$-Reeb algorithm and $\ep$, used for DBSCAN in the Mapper algorithm, are too small, then noisy cycles will also be captured in addition to the dominant cycles. 
If these parameters are too large, then even the dominant cycles will not be captured. 
$\hopes$ is always run for the two heuristics: the 1st widest diagonal gap to the derived skeleton $\hopes_{1,1}(C)$, which is simplified with the threshold $\ep$ equal to the max death by Algorithm~\ref{alg:contract}. 
The parameters of the Mapper and $\al$-Reeb in the experiments below were searched over wide ranges to separately optimise each measure.
Hence only best outputs of Mapper and $\al$-Reeb were counted.     
\smallskip

For Mapper, the amount of overlap of the intervals in the covering of the range of the filter function was fixed at 50\%. 
For a cloud of $n$ points, the number of intervals was chosen as $\frac{tn}{100}$, rounded to the nearest integer, where $t$ took $10$ values between $1.5$ and $3.3$ in $0.2$ increments.
The clustering parameter $\ep$ also took $10$ values between $0.05$ and $0.5$ in $0.05$ increments. 
So in total we have used $100$ configurations of the parameters of Mapper. 
For the $\al$-Reeb algorithm, the scale $\al$ took $10$ values between $0.15$ and $0.6$ in $0.05$ increments.

\subsection{Four measures for the experimental comparison of the algorithms}
\label{sec:measures}

\noindent
For each fixed graph, the measures below are averaged over 200 noisy samples.  
\smallskip

\noindent
\textbf{The Betti success rate} (in percentages) counts the cases when an output contains the expected number of independent cycles, which is called the first Betti number and equal to the rank of the homology group $H_1$.
The $k$-wheel graph $W(k)$ and $k$-hexagons graph $H(k)$ have the first Betti number $k$.
\smallskip

\noindent
\textbf{Homeomorphism success rate} (in percentages) is the empirical  conditional probability that the output skeleton is homeomorphic to an underlying graph $G$ from which a cloud was sampled assuming the first Betti number was correct. 
\smallskip

\noindent
\textbf{The root mean square (RMS) distance} measures how close geometrically an output is to the cloud $C$. 
For each point $p\in C$, the distance $d(p,S)$ is computed from $p$ to the closest straight-line edge of the output $S$. 
Then the \textit{RMS error} is $\sqrt{\sum_{p \in C} d(p,S)^2}$. 
For each type of cloud, we average the RMS distance over all outputs that are successful on the first Betti number. 
\smallskip

\noindent
\textbf{Running time (ms)} is the time averaged over all clouds and parameters.
\smallskip

Each point in Fig.~\ref{fig:wheel_uniform}-\ref{fig:hexagons_Gaussian} has the $y$-coordinate equal to the average value over 200 clouds (with optimal parameters of Mapper and $\al$-Reeb for each cloud) sampled around a graph $G$, the $x$-coordinate is the first Betti number of $G$.

\begin{figure}
\centering
\def\svgwidth{\columnwidth}
\includegraphics[scale = 0.38]{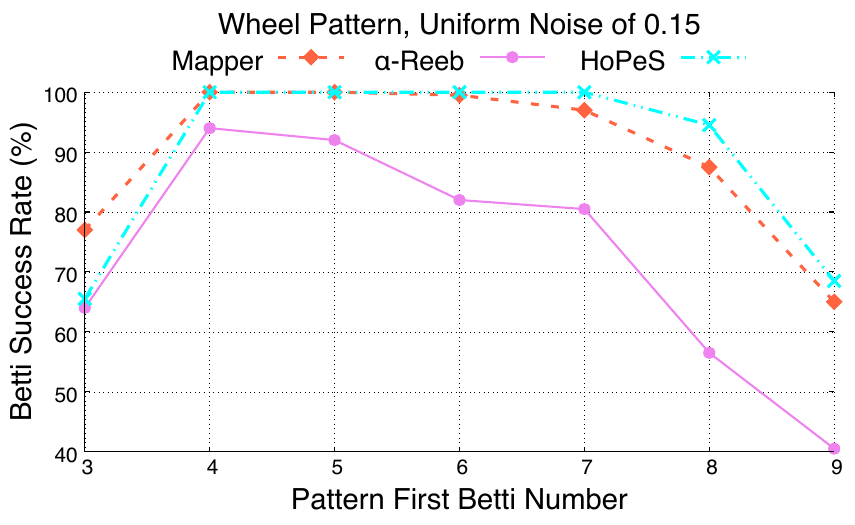} \hspace{1em}
\includegraphics[scale = 0.38]{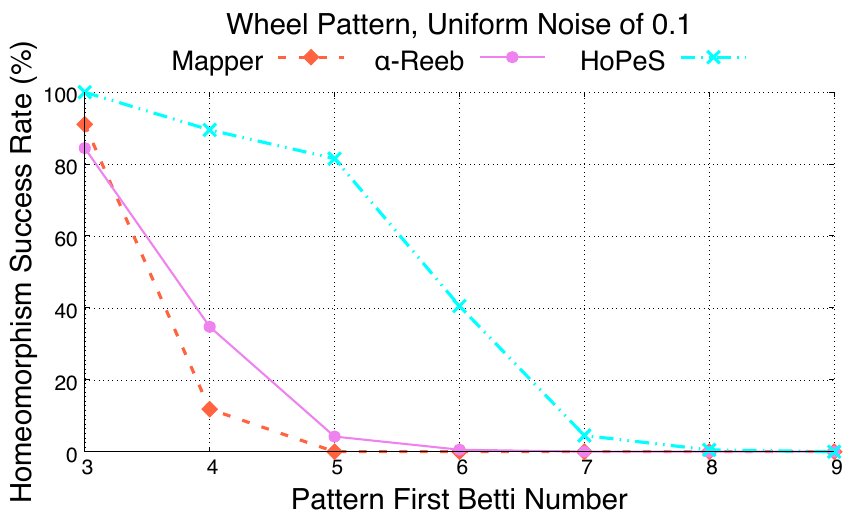}
\includegraphics[scale = 0.38]{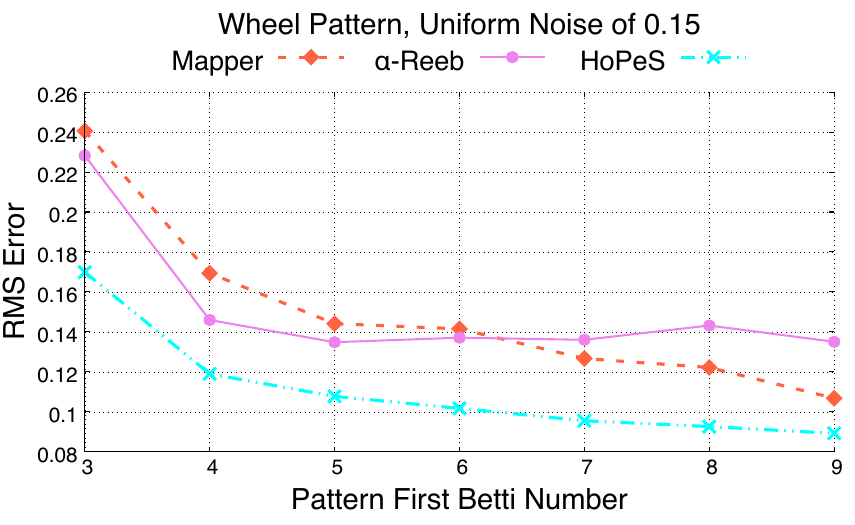} \hspace{1em}
\includegraphics[scale = 0.38]{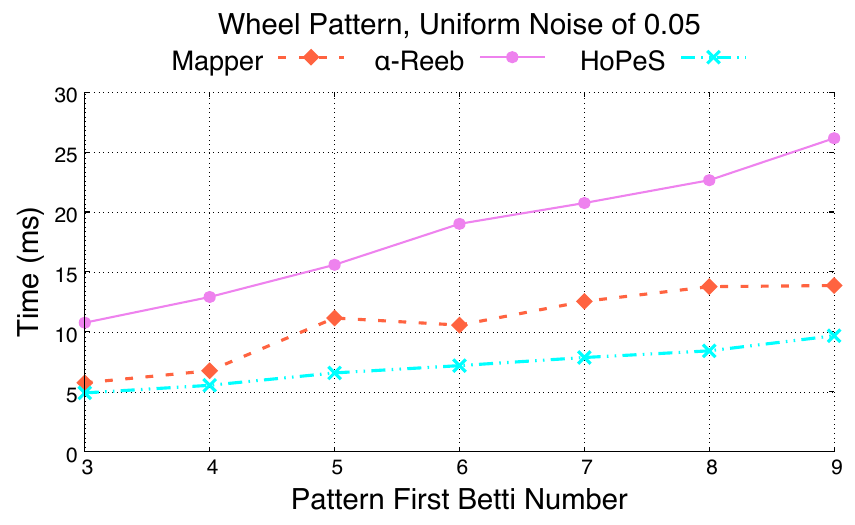}
\caption{All clouds are sampled with uniform noise around wheel graphs $W(k)$, $k=3,\dots,9$.
}
\label{fig:wheel_uniform}
\end{figure}

\begin{figure}
\centering
\def\svgwidth{\columnwidth}
\includegraphics[scale = 0.38]{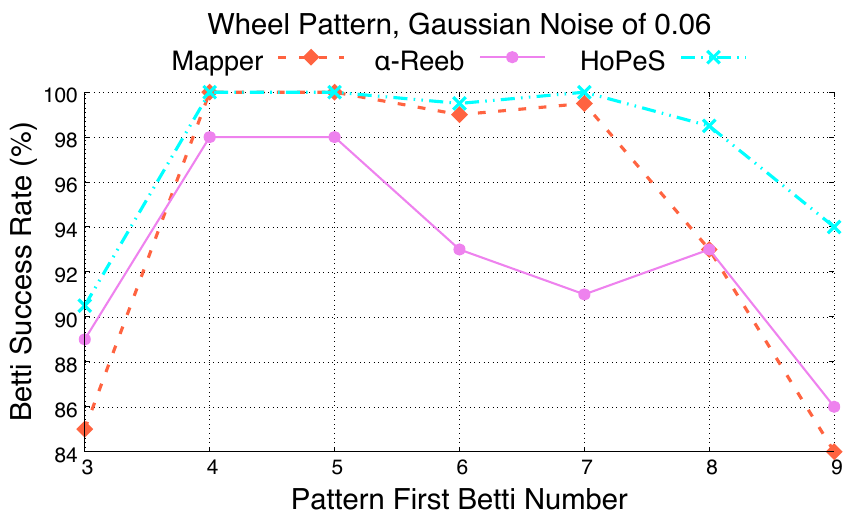} \hspace{1em}
\includegraphics[scale = 0.38]{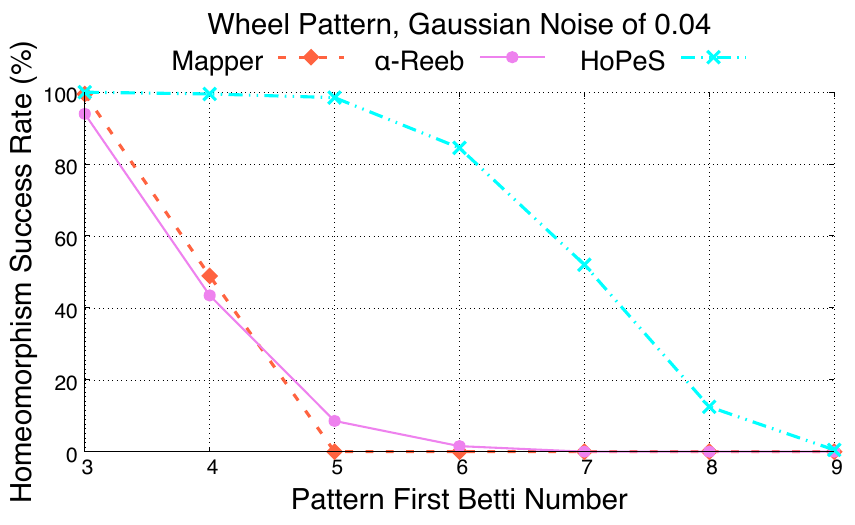}
\includegraphics[scale = 0.38]{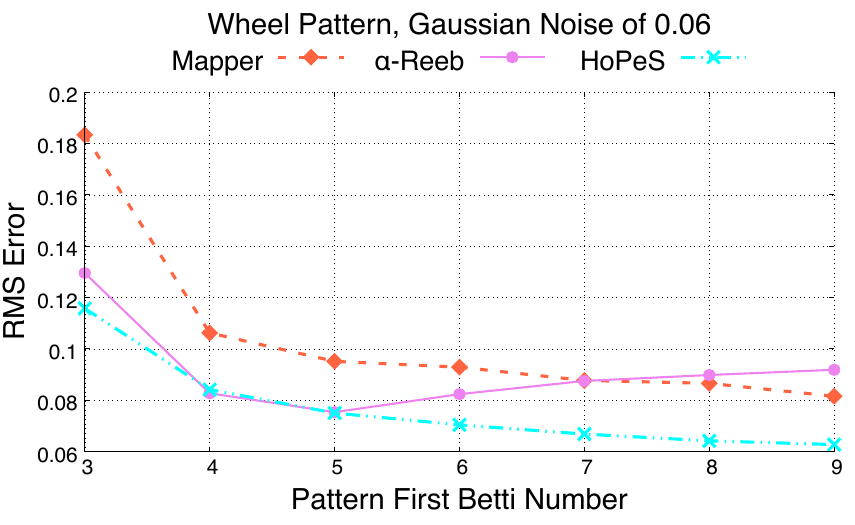} \hspace{1em}
\includegraphics[scale = 0.38]{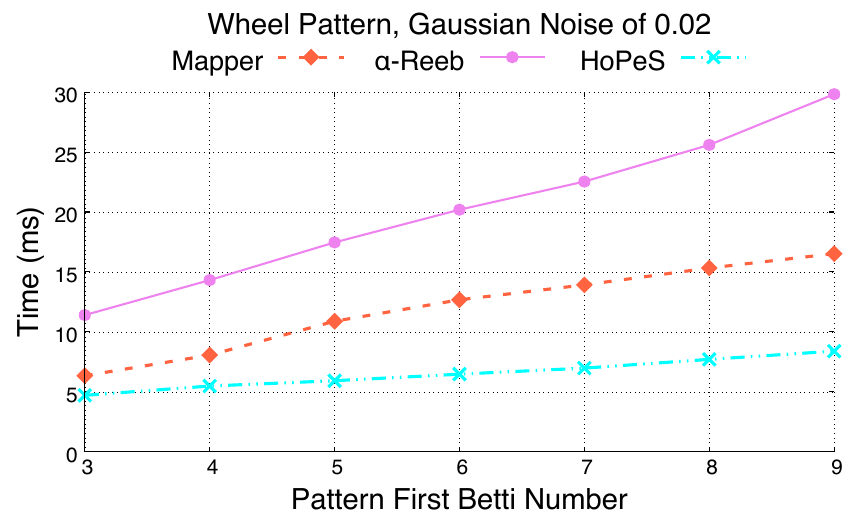}
\caption{All clouds are sampled with Gaussian noise around wheel graphs $W(k)$, $k=3,\dots,9$}
\label{fig:wheelg}
\end{figure}

\begin{figure}
\centering
\def\svgwidth{\columnwidth}
\includegraphics[scale = 0.38]{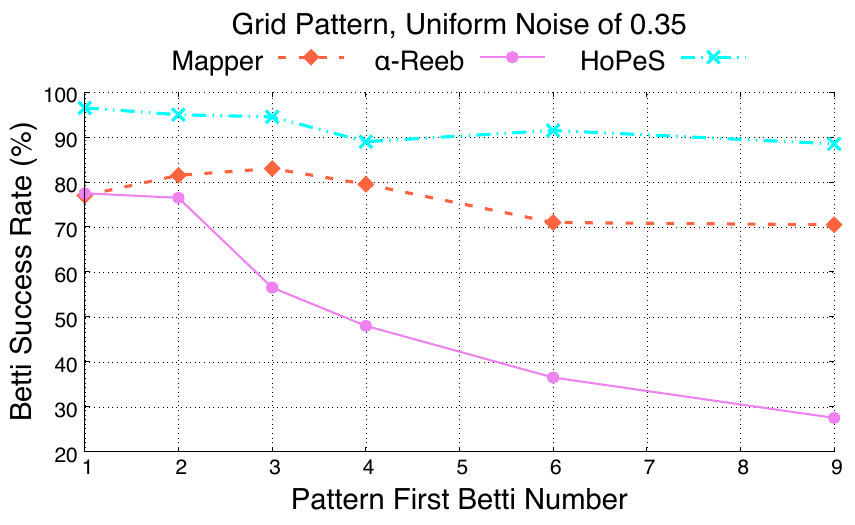} \hspace{1em}
\includegraphics[scale = 0.38]{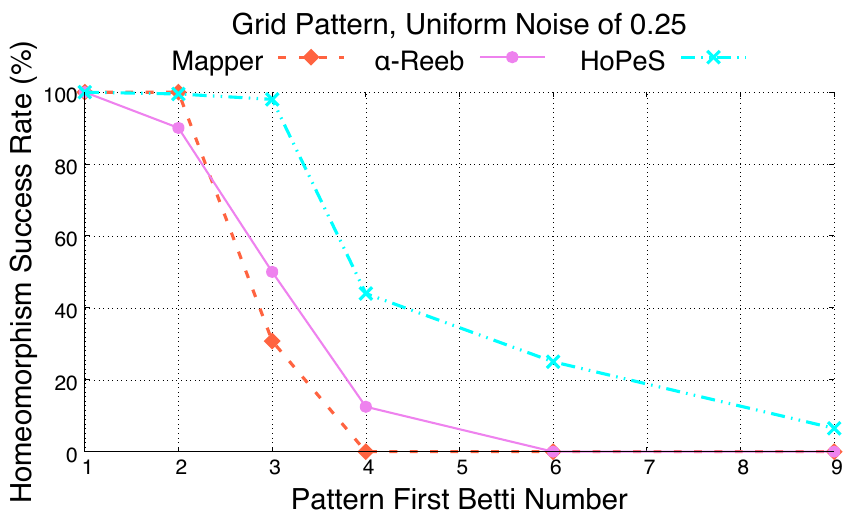}
\includegraphics[scale = 0.38]{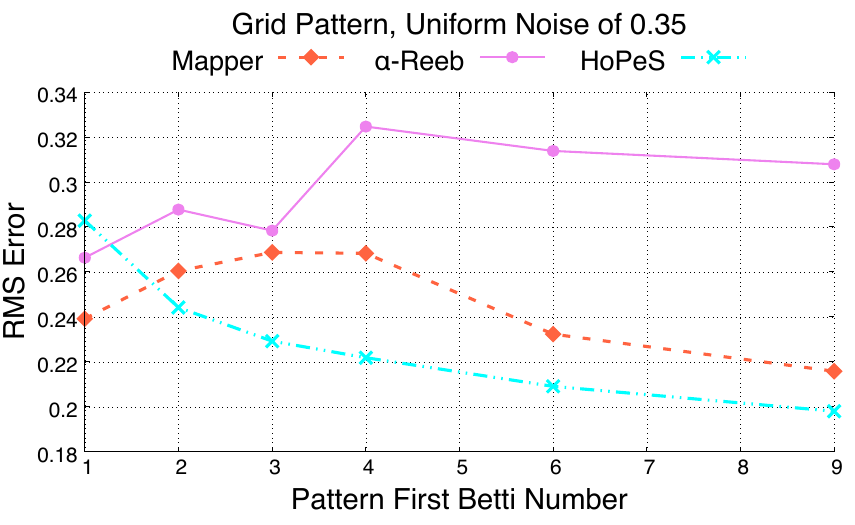} \hspace{1em}
\includegraphics[scale = 0.38]{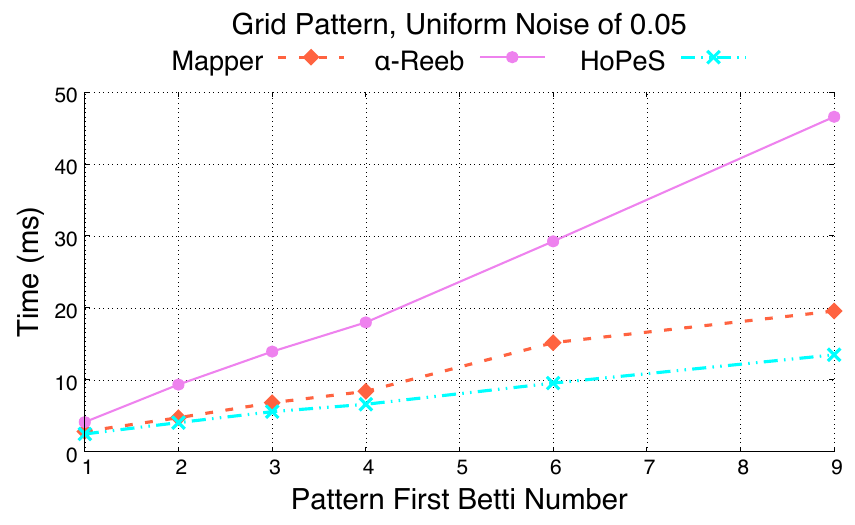}
\caption{All clouds are sampled with uniform noise around grid graphs $G(k,l)$, $1\leq k,l\leq 3$.}
\label{fig:gridu}
\end{figure}

\begin{figure}
\centering
\def\svgwidth{\columnwidth}
\includegraphics[scale = 0.38]{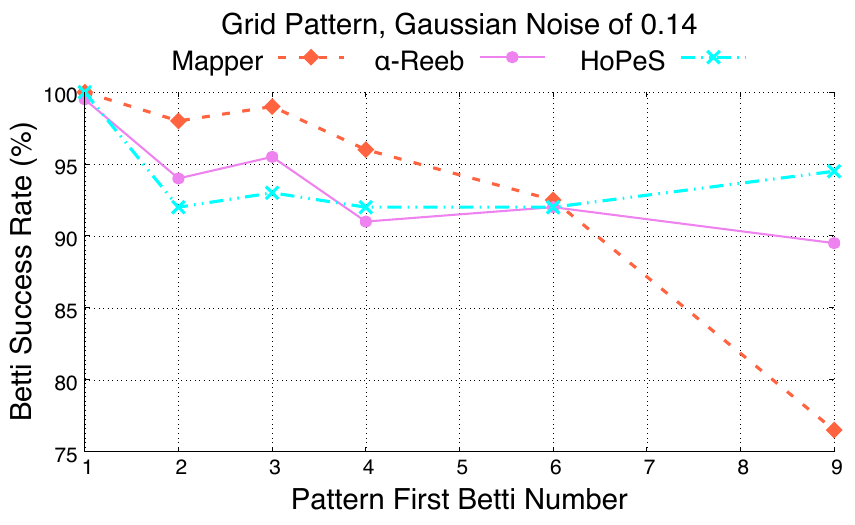} \hspace{1em}
\includegraphics[scale = 0.38]{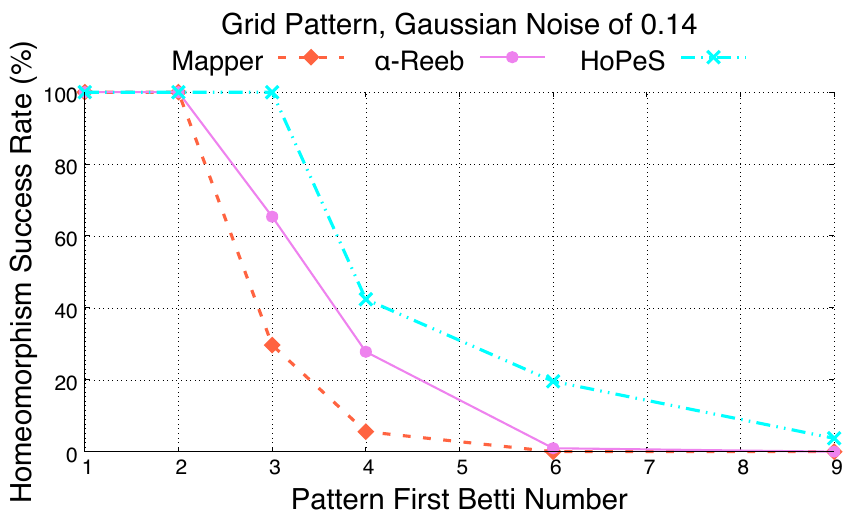}
\includegraphics[scale = 0.38]{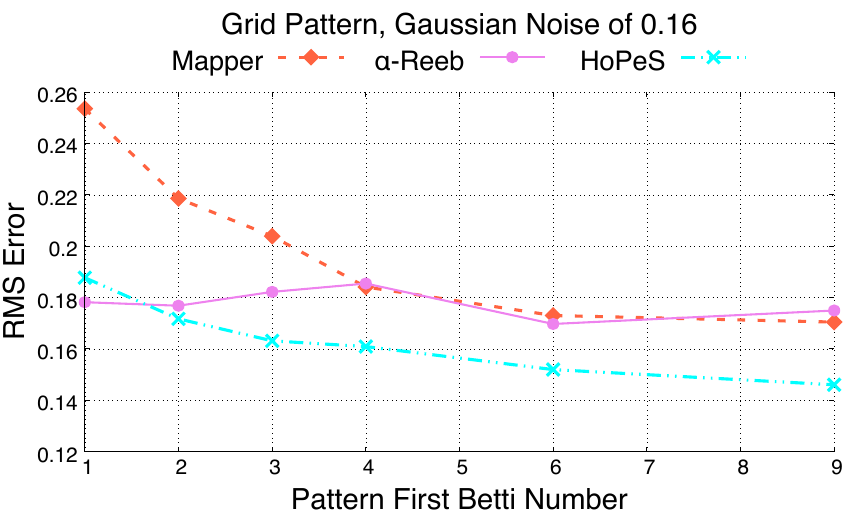} \hspace{1em}
\includegraphics[scale = 0.38]{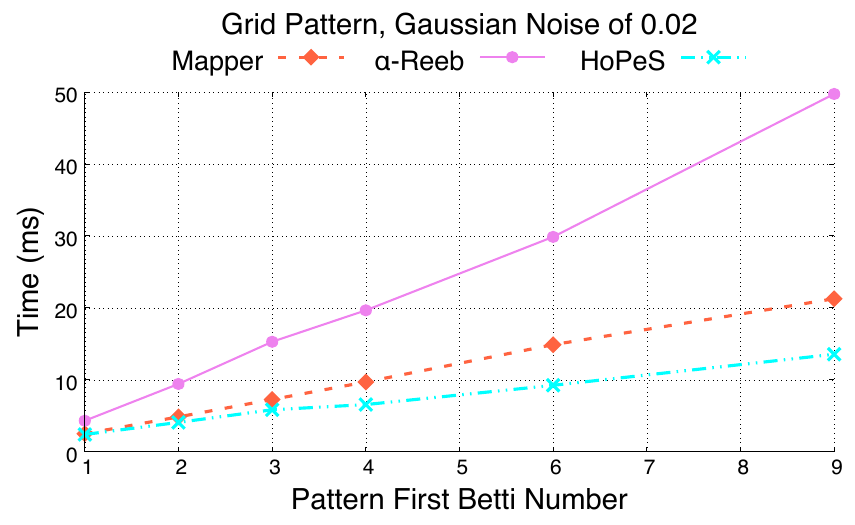}
\caption{All clouds are sampled with Gaussian noise around grid graphs $G(k,l)$, $1\leq k,l\leq 3$.}
\label{fig:gridg}
\end{figure}

\begin{figure}
\centering
\def\svgwidth{\columnwidth}
\includegraphics[scale = 0.38]{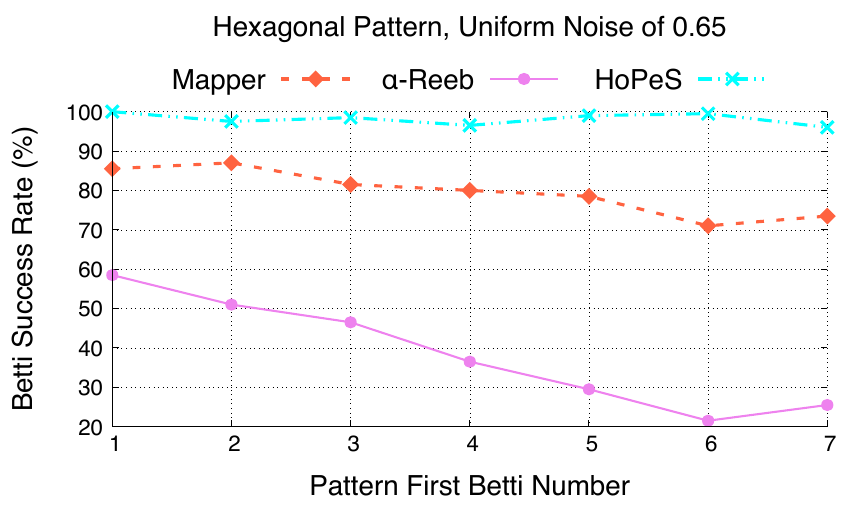} \hspace{1em}
\includegraphics[scale = 0.38]{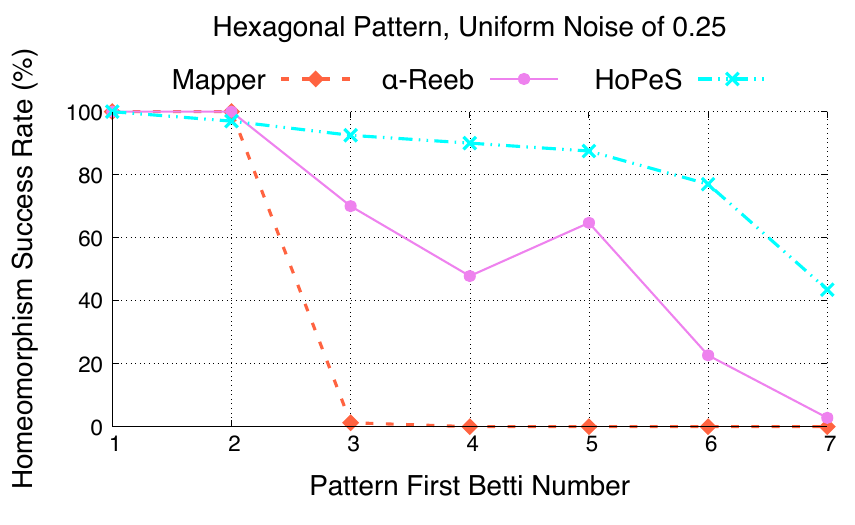}
\includegraphics[scale = 0.38]{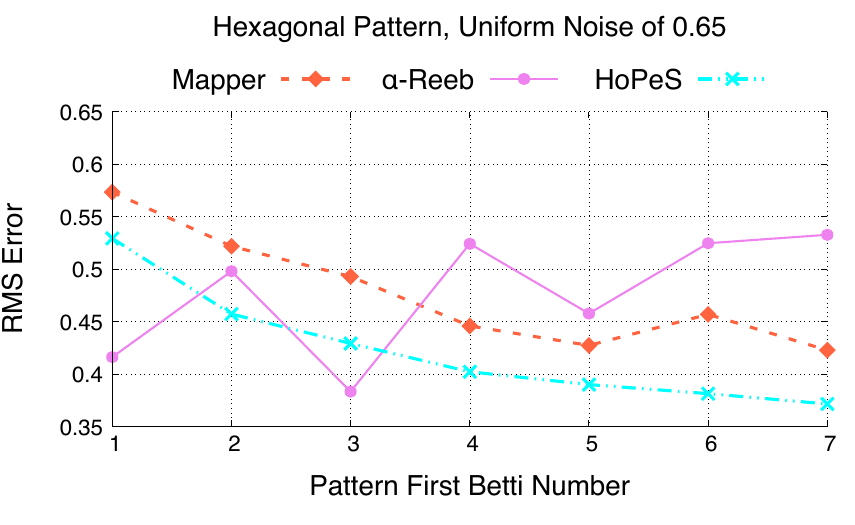} \hspace{1em}
\includegraphics[scale = 0.38]{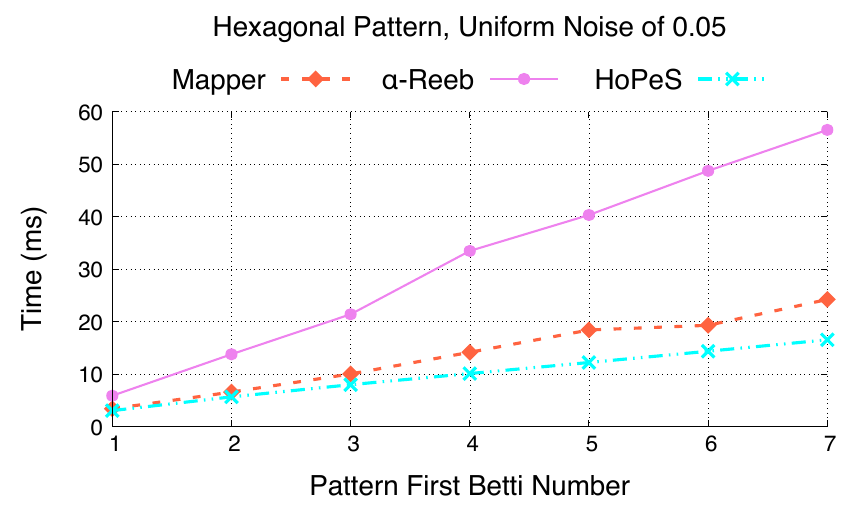}
\caption{Clouds are sampled with uniform noise around hexagon graphs $H(k)$, $k=1,\dots,7$.}
\label{fig:hexagons_uniform}
\end{figure}

\begin{figure}
\centering
\def\svgwidth{\columnwidth}
\includegraphics[scale = 0.38]{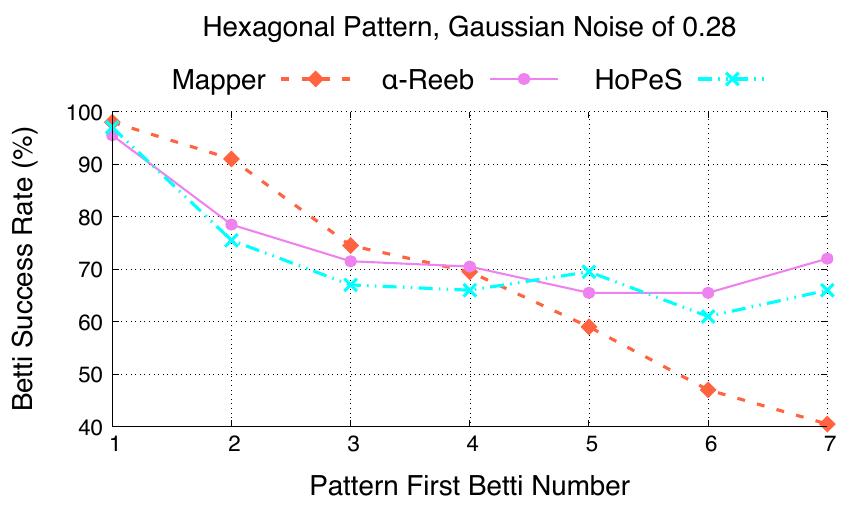} \hspace{1em}
\includegraphics[scale = 0.38]{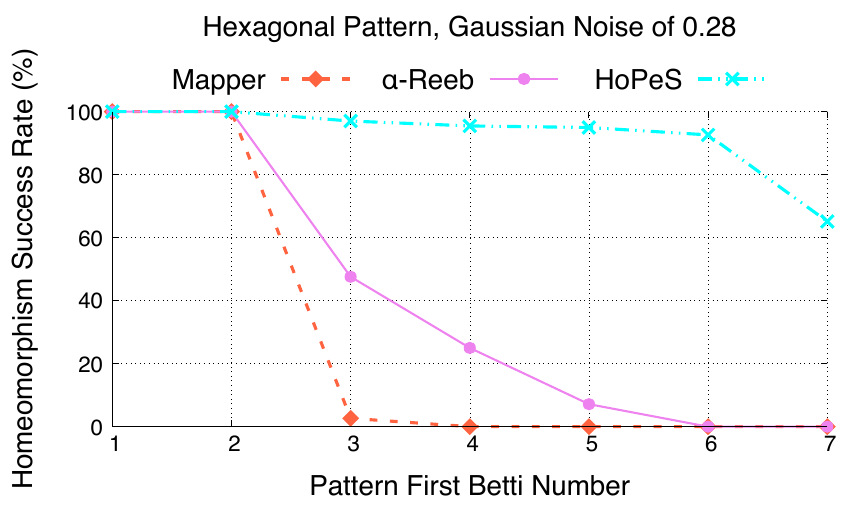}
\includegraphics[scale = 0.38]{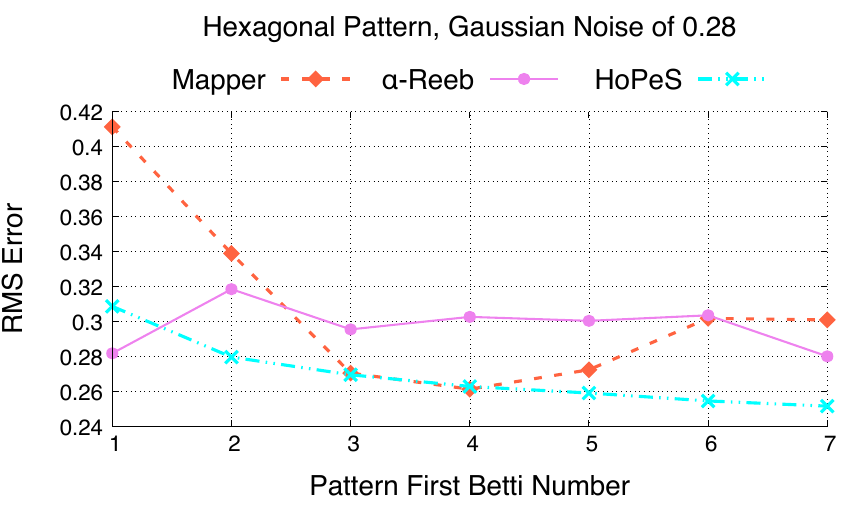} \hspace{1em}
\includegraphics[scale = 0.38]{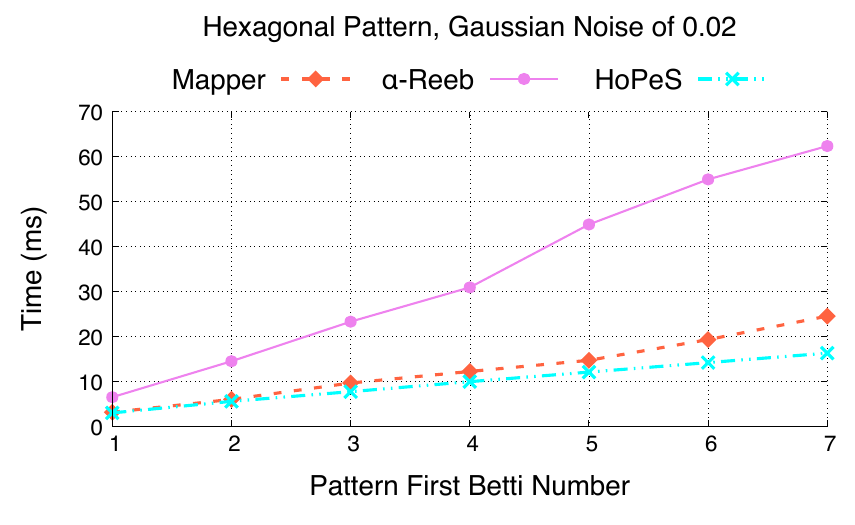}
\caption{Clouds are sampled with Gaussian noise around hexagon graphs $H(k)$, $k=1,\dots,7$.}
\label{fig:hexagons_Gaussian}
\end{figure}

\subsection{Comparison of geometric errors on clouds of edge pixels in BSD images}
\label{sub:edge_clouds}

We ran the algorithms over the clouds $C$ of Canny edge pixels extracted from 500 real images in the Berkeley Segmentation Database (BSD500).
The last pictures in Fig.~\ref{fig:car}, \ref{fig:plane}, \ref{fig:woman} show the derived skeleton $\hopes_{1,1}$ with critical edges in red.
Table~\ref{tab:bsd} shows the RMS distance  from the output graphs to $C$ (in pixels, averaged over images and parameters of Mapper and $\al$-Reeb). 

\begin{center}
\captionof{table}{Averages of 3 algorithms on Canny edge pixels over 500 real images in BSD500}
\label{tab:bsd}
\begin{tabular}{|c|c|c|c|c|}
\hline
Measures / Algorithms & Mapper & $\al$-Reeb & $\shopes$ & $\hopes$ \\
\hline
RMS distance (pixels) & 10.726 & 6.48247 & 5.61771 & 0 \\
Max distance (pixels) & 55.892 & 45.0883 & 29.1306 & 0 \\
Time (milliseconds) & 310 & 4110 & 1256 & 88 \\
\hline
\end{tabular}
\end{center}

\begin{figure}[H]
\centering
\def\svgwidth{\columnwidth}
\includegraphics[scale = 0.45]{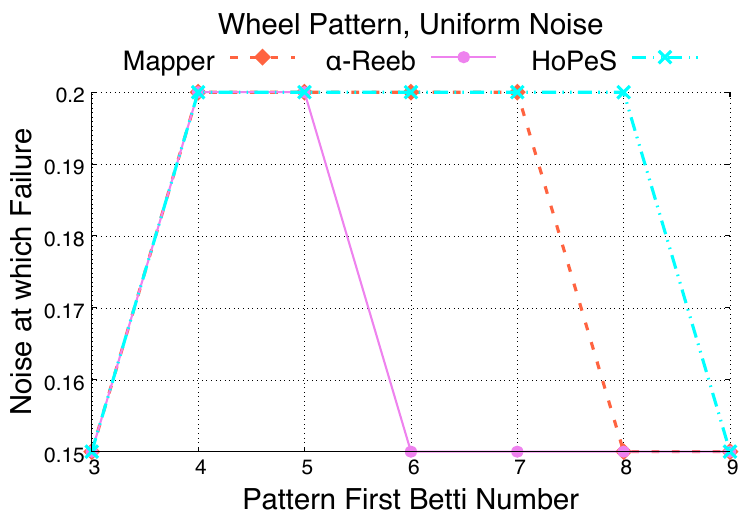} \hspace{1em}
\includegraphics[scale = 0.45]{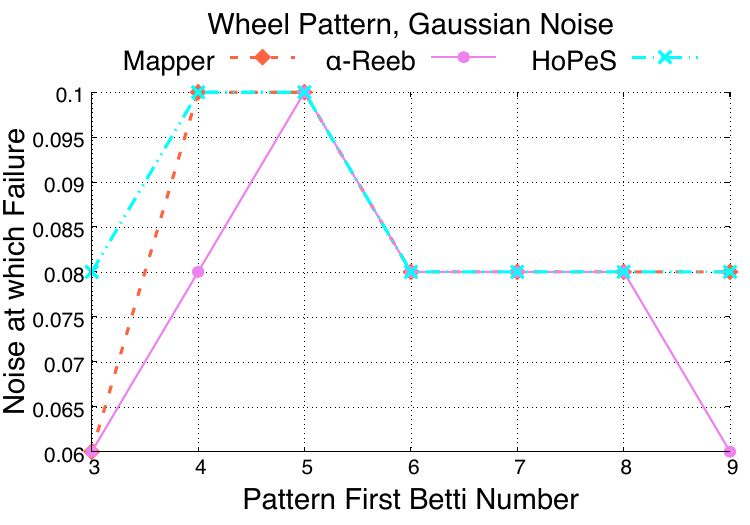}
\includegraphics[scale = 0.45]{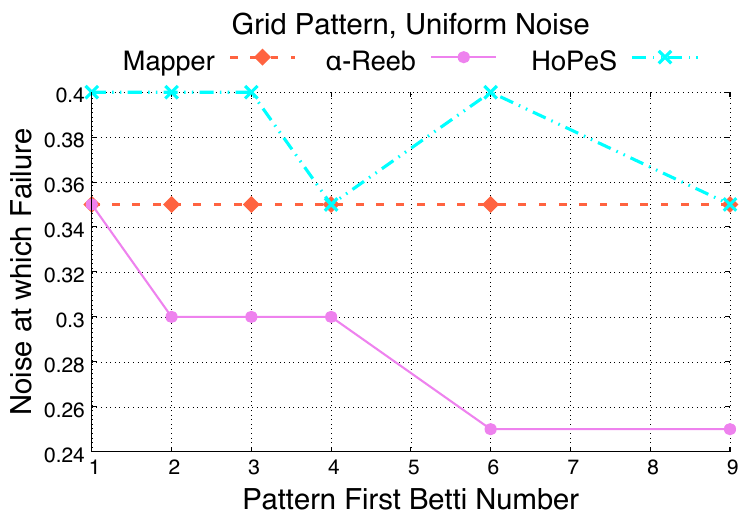} \hspace{1em}
\includegraphics[scale = 0.45]{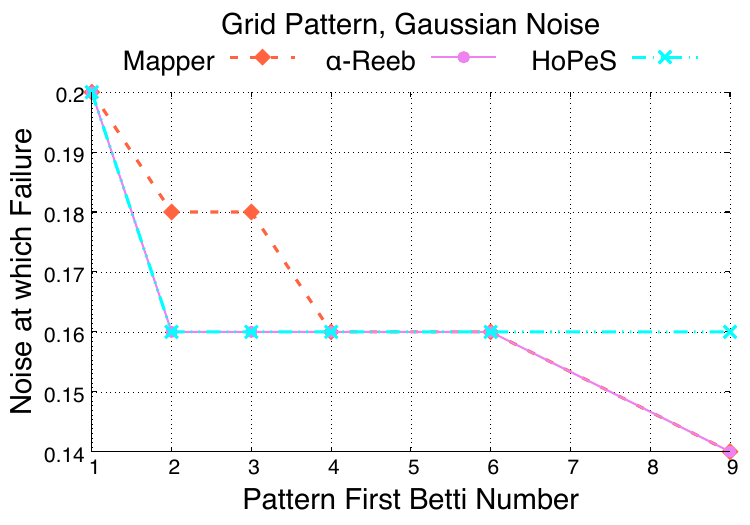}
\includegraphics[scale = 0.45]{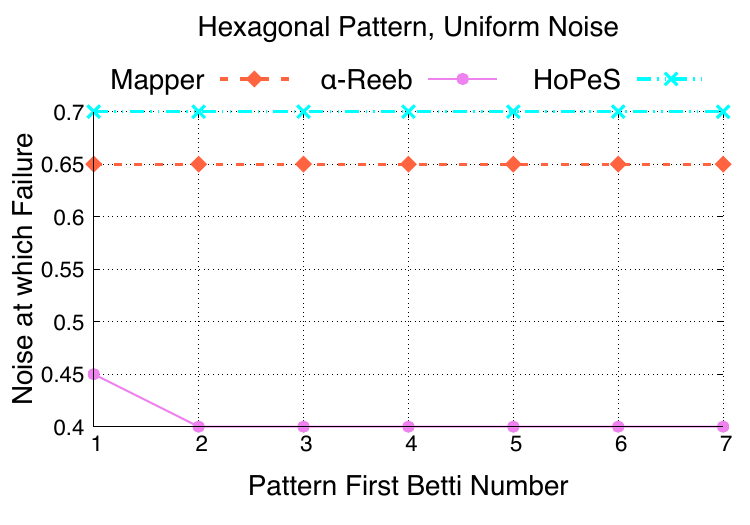} \hspace{1em}
\includegraphics[scale = 0.45]{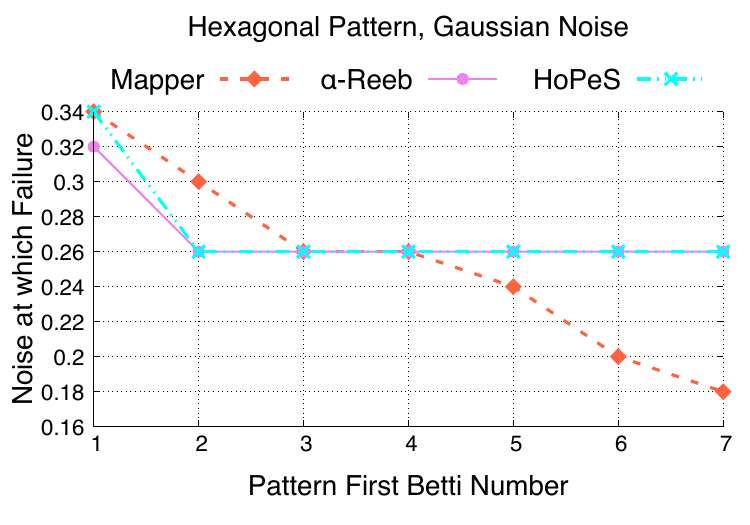}
\caption{The magnitude of noise at which the Betti success rate drops below 90\%. Top-left: wheel pattern with uniform noise; Top-right: wheel pattern with Gaussian noise; Middle-left: grid pattern with uniform noise; Middle-right: grid pattern with Gaussian noise; Bottom-left: hexagons pattern with uniform noise; Bottom-right: hexagons pattern with Gaussian noise.}
\label{fig:noise90}
\end{figure}

\begin{figure}[H]
	\centering
	\def\svgwidth{\columnwidth}
	\includegraphics[height=19mm]{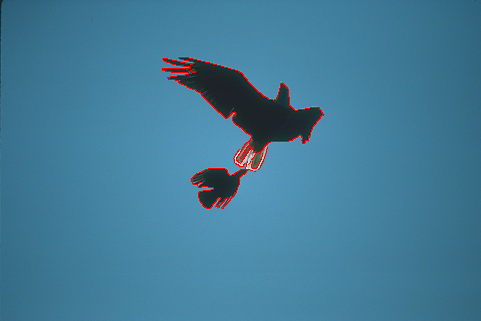}
	\includegraphics[height=19mm]{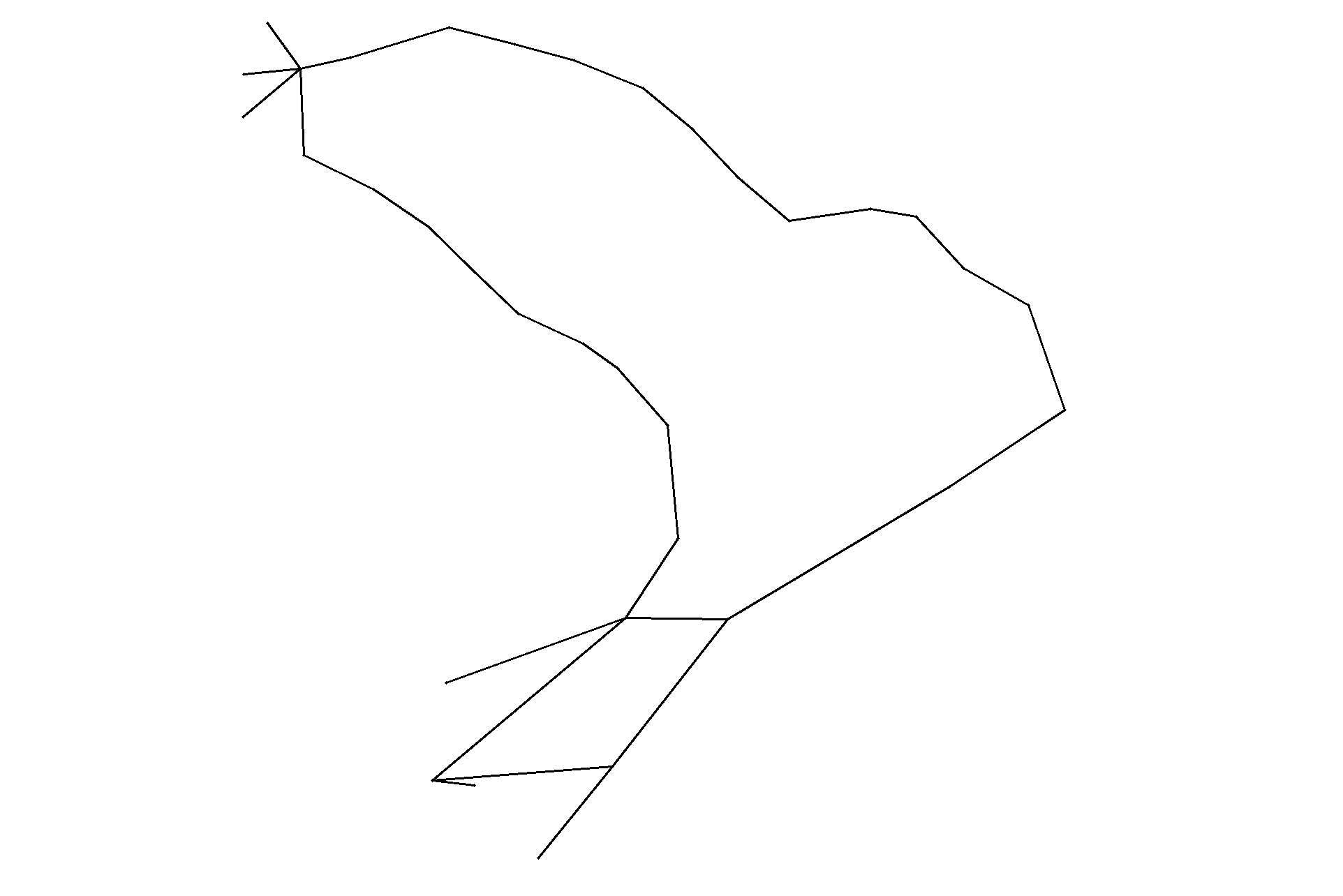}
	\includegraphics[height=19mm]{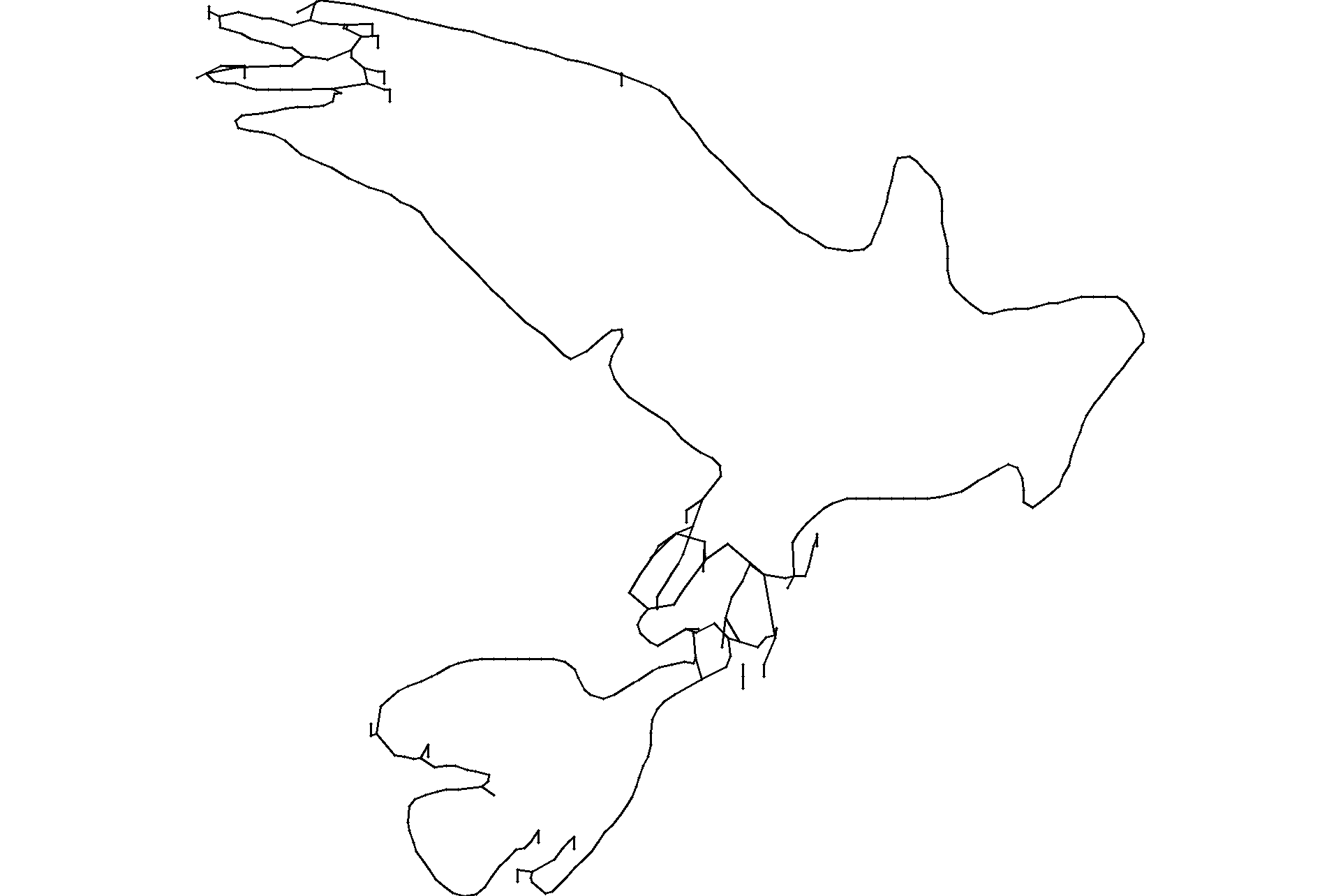}
	\includegraphics[height=19mm]{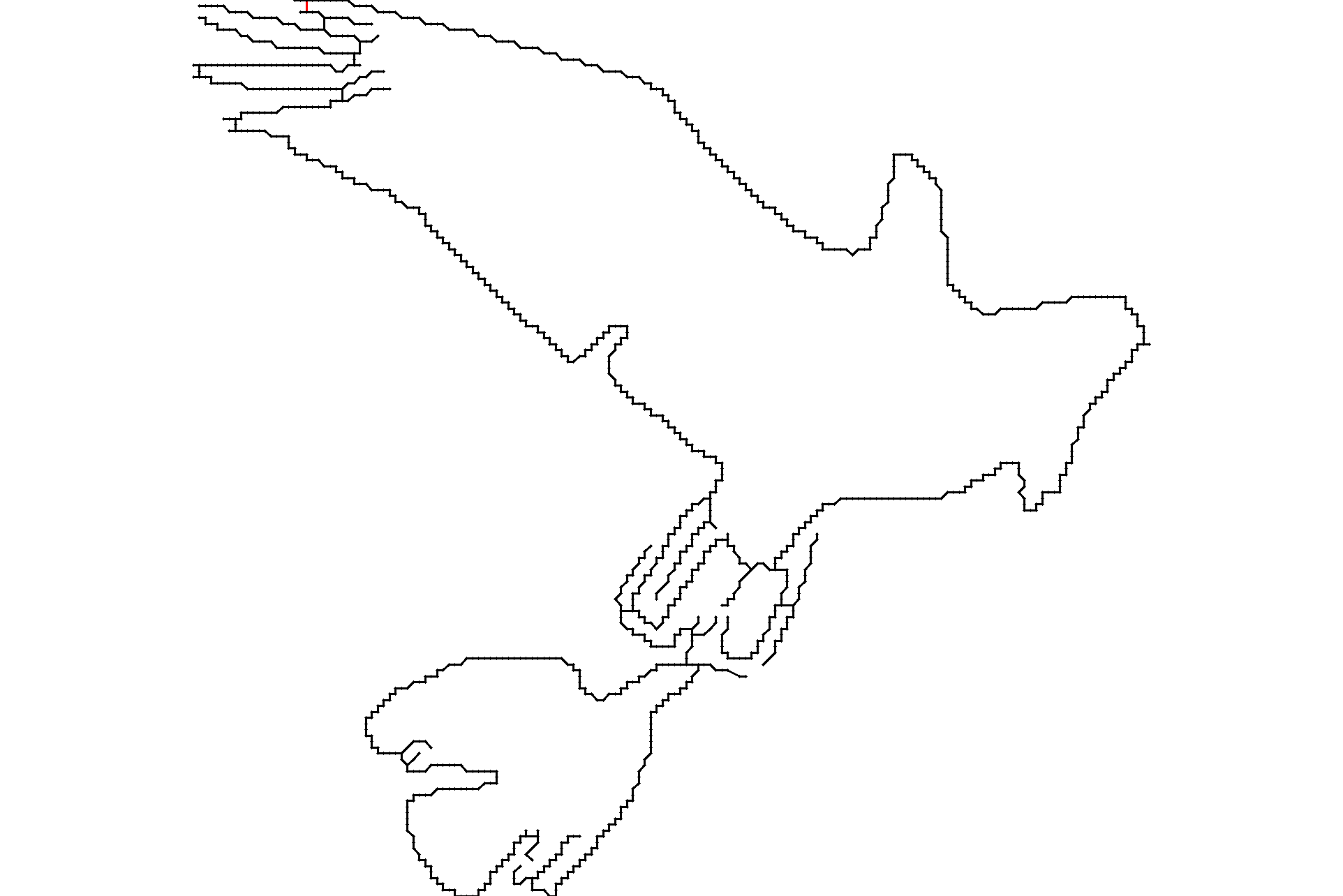}
	\caption{\BSDcaption{135069}}
	\label{fig:bird_2}
\end{figure}

\begin{figure}[H]
	\centering
	\def\svgwidth{\columnwidth}
	\includegraphics[scale = 0.45]{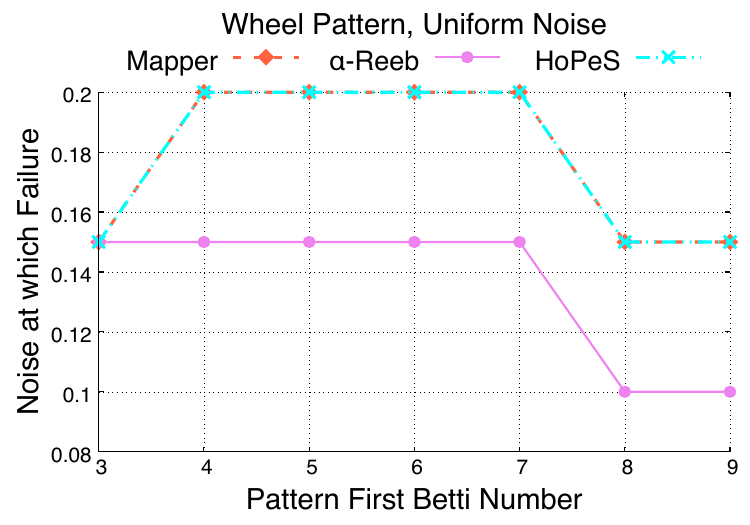} \hspace{1em}
	\includegraphics[scale = 0.45]{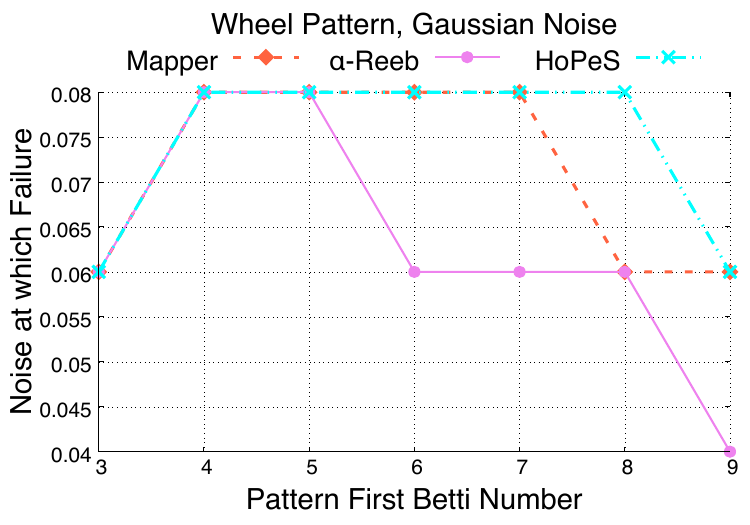}
	\includegraphics[scale = 0.45]{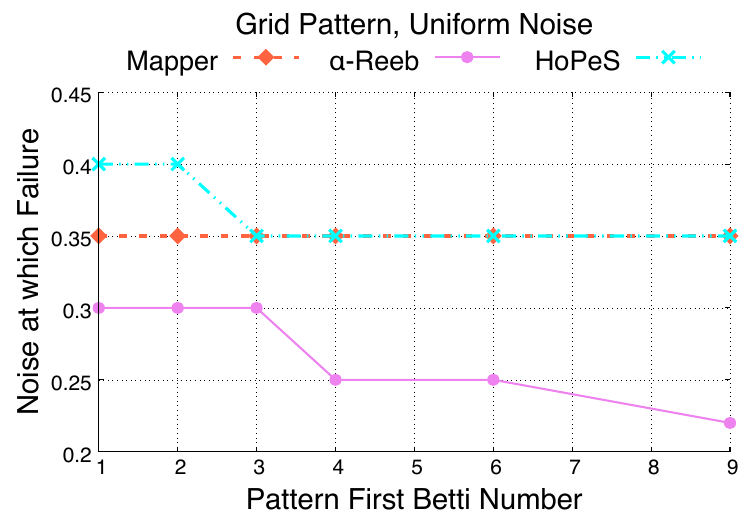} \hspace{1em}
	\includegraphics[scale = 0.45]{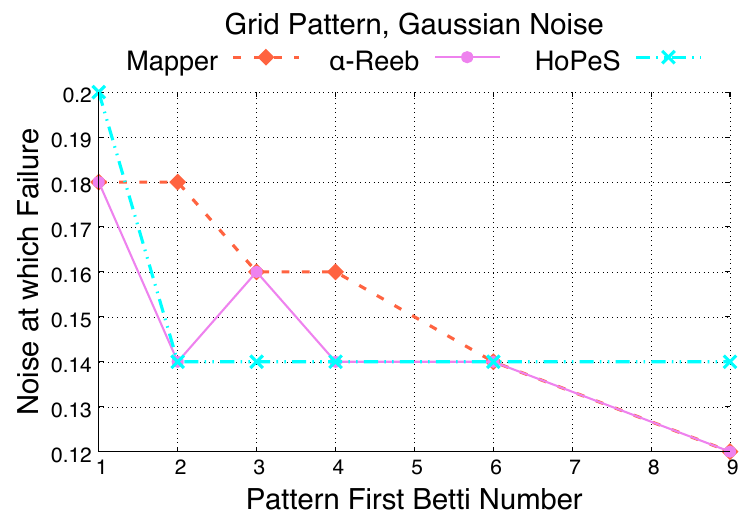}
	\includegraphics[scale = 0.45]{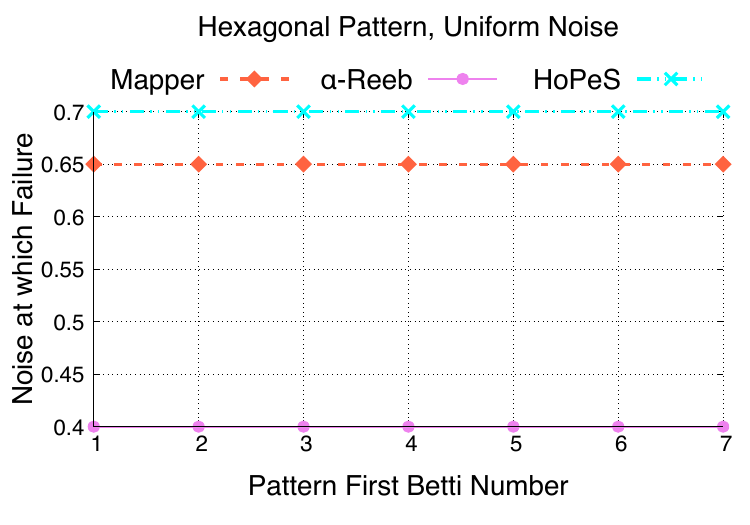} \hspace{1em}
	\includegraphics[scale = 0.45]{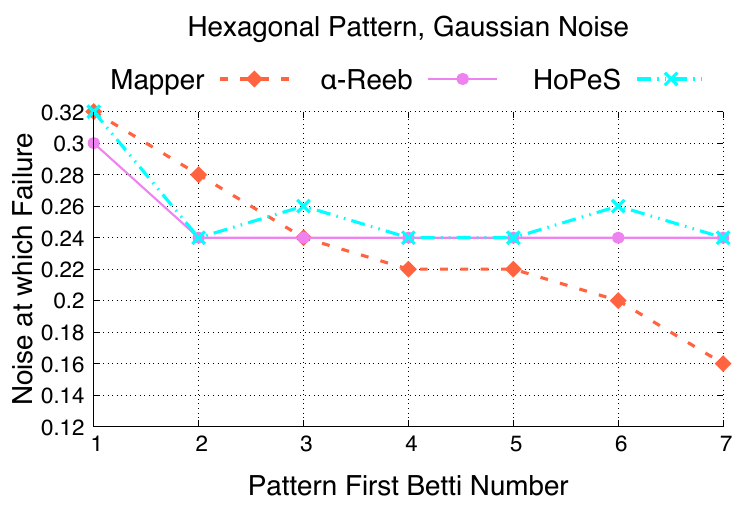}
	\caption{The magnitude of noise at which the Betti success rate drops below 95\%. Top-left: wheel pattern with uniform noise; Top-right: wheel pattern with Gaussian noise; Middle-left: grid pattern with uniform noise; Middle-right: grid pattern with Gaussian noise; Bottom-left: hexagons pattern with uniform noise; Bottom-right: hexagons pattern with Gaussian noise.}
	\label{fig:noise95}
\end{figure}

\begin{figure}[H]
	\centering
	\def\svgwidth{\columnwidth}
	\includegraphics[height=19mm]{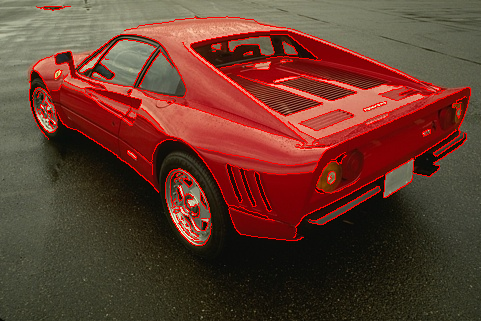}
	\includegraphics[height=19mm]{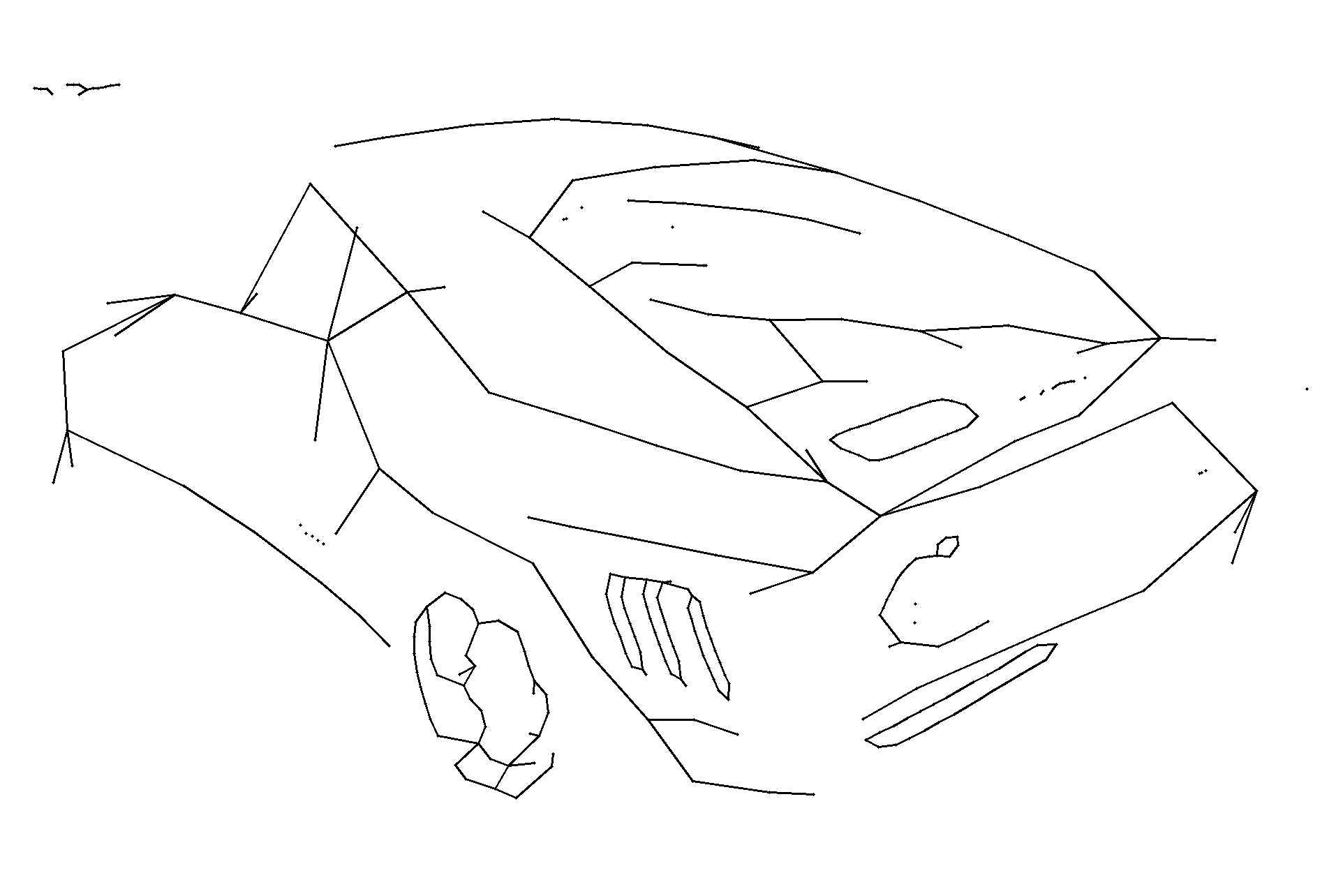}
	\includegraphics[height=19mm]{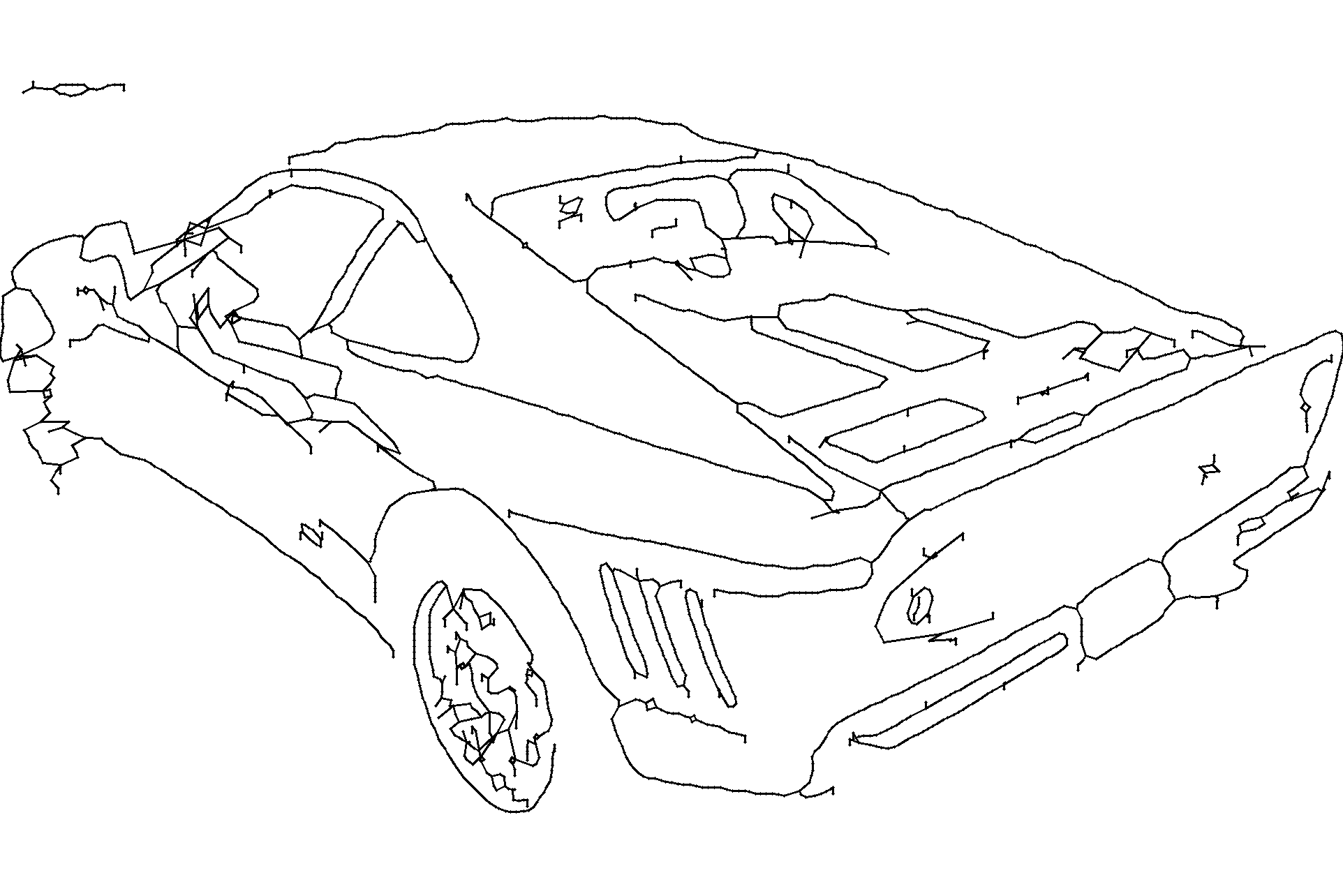}
	\includegraphics[height=19mm]{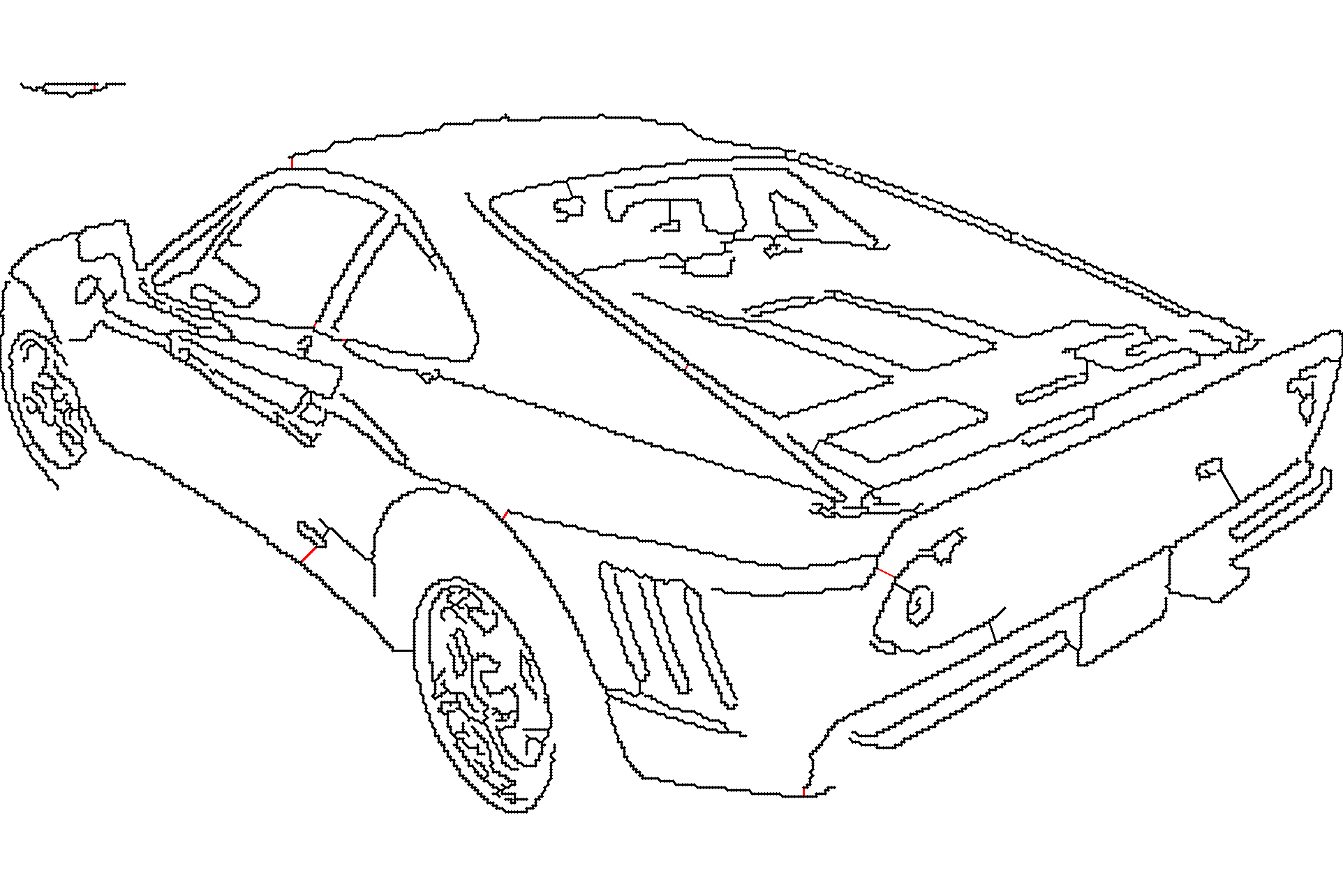}
	\caption{\BSDcaption{29030}}
	\label{fig:car}
\end{figure}

\begin{figure}[H]
\centering
\def\svgwidth{\columnwidth}
\includegraphics[height=19mm]{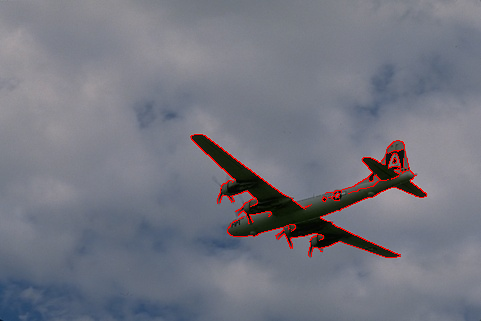}
\includegraphics[height=19mm]{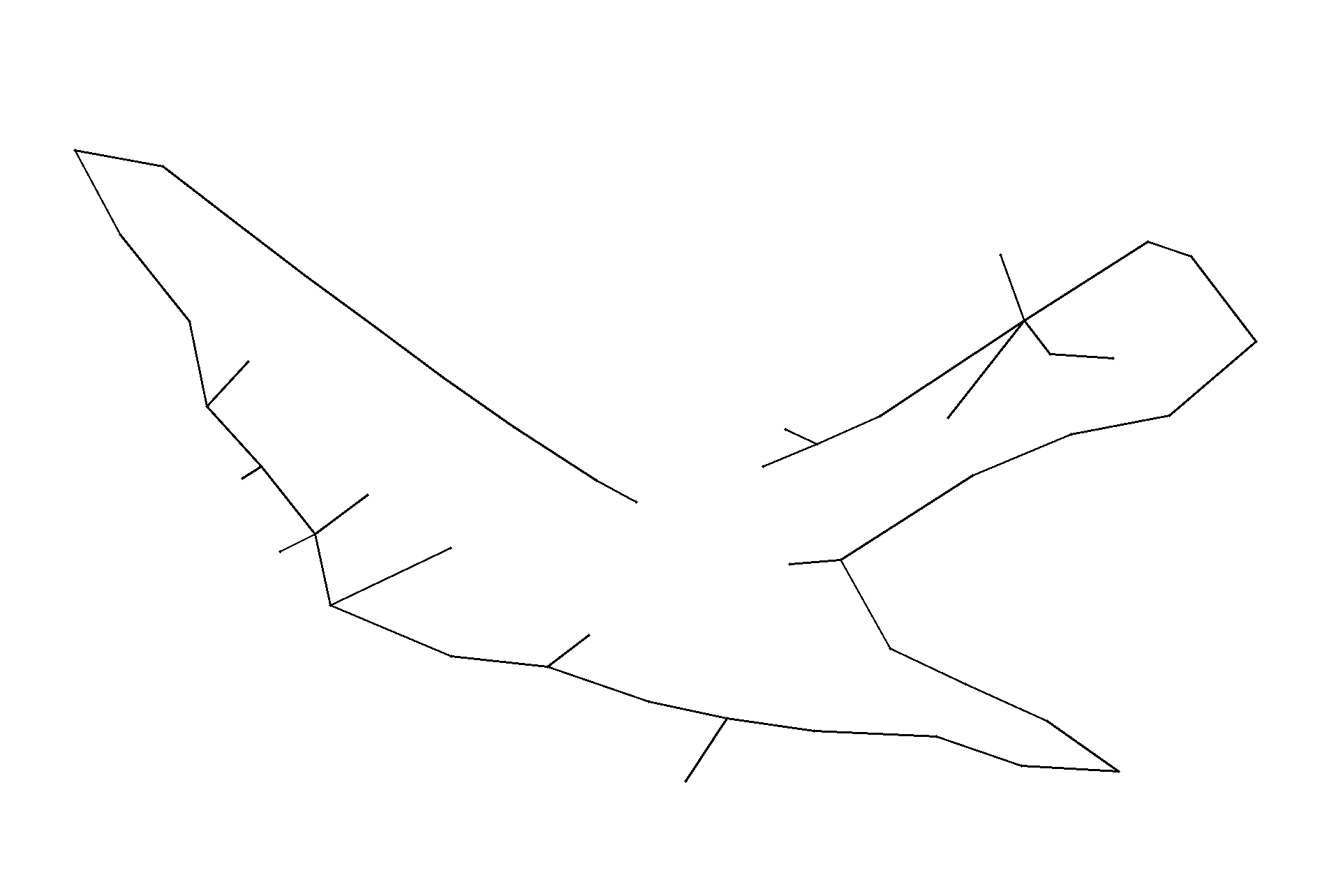}
\includegraphics[height=19mm]{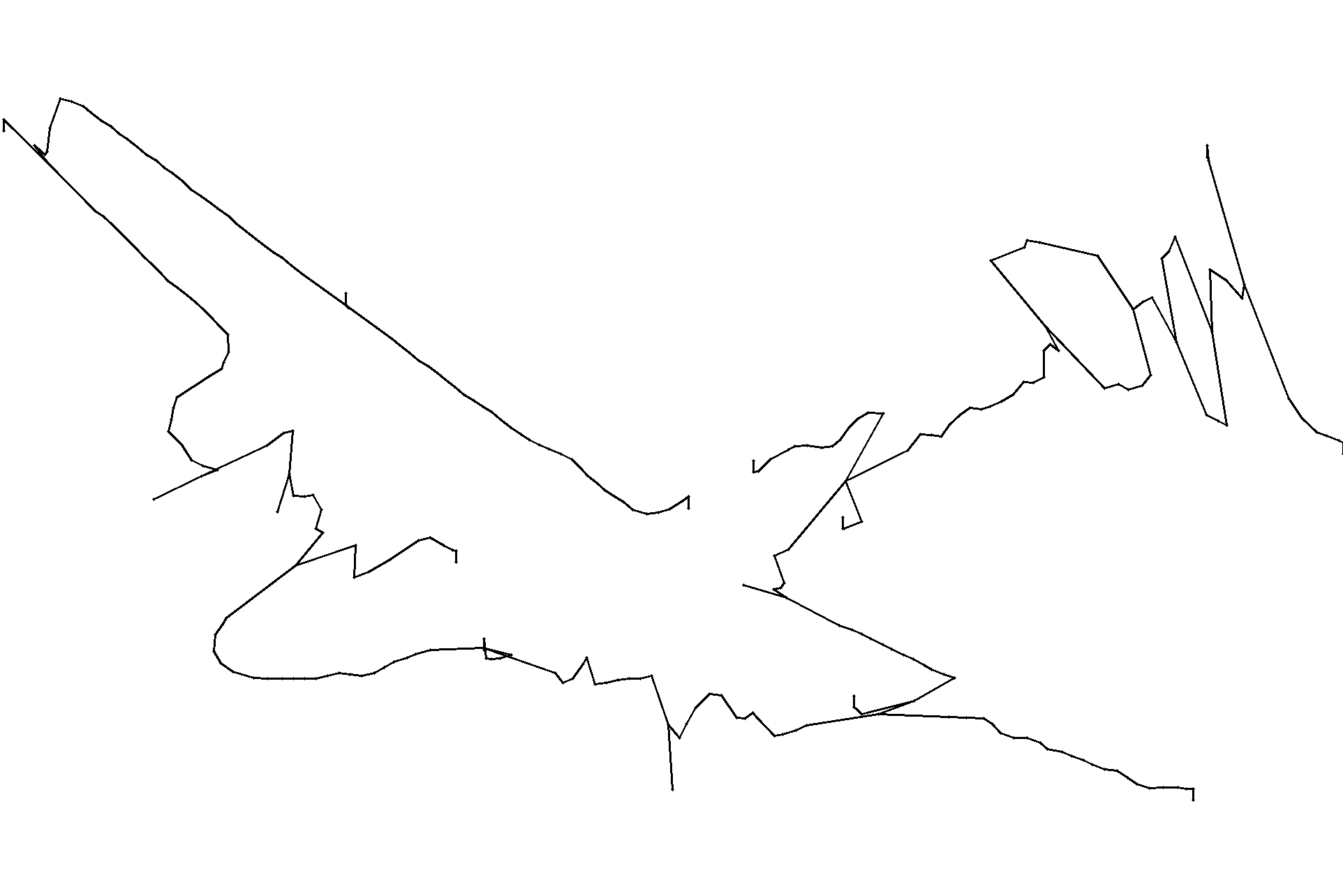}
\includegraphics[height=19mm]{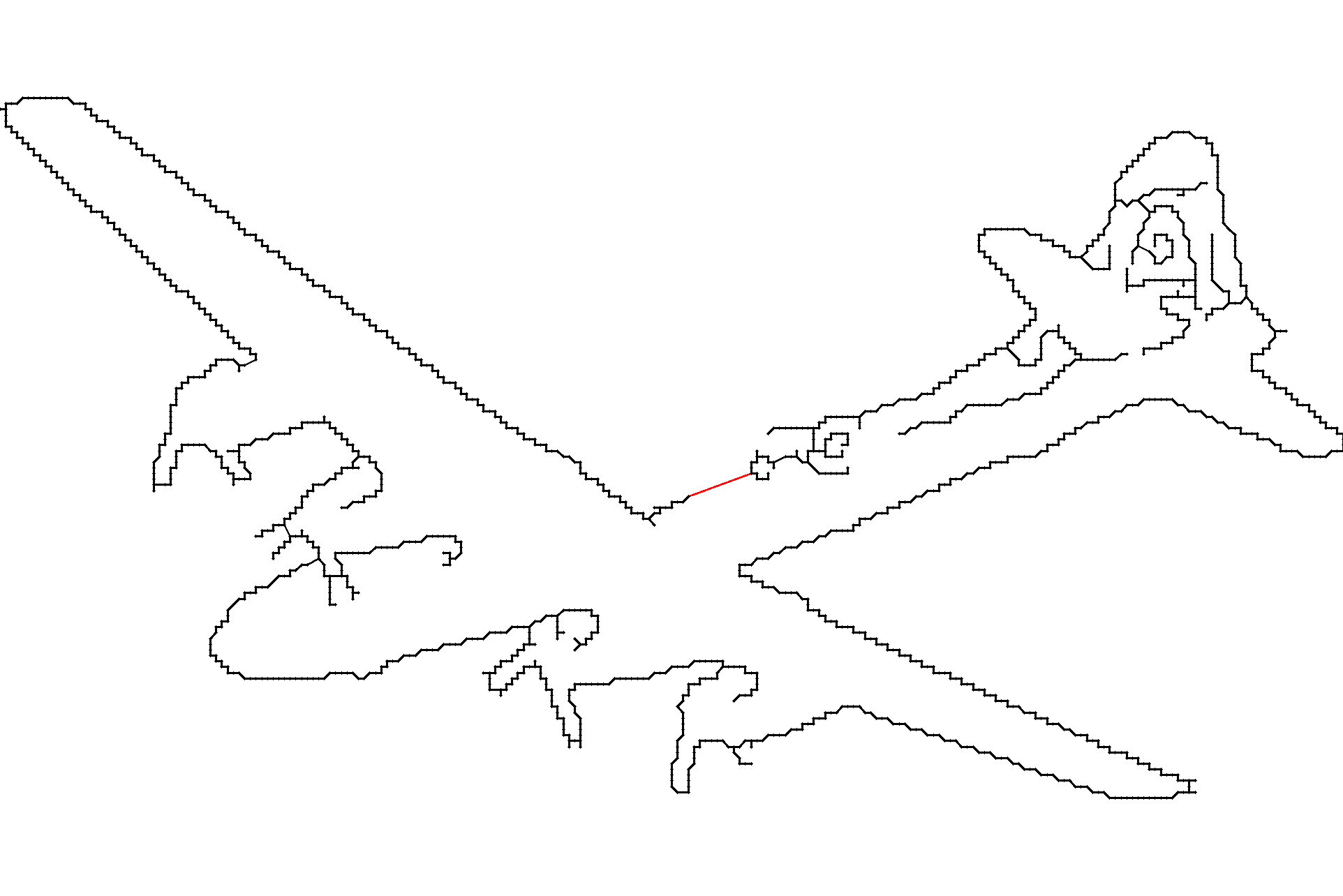}
\caption{Image 3096 with edge pixels in red, the outputs of Mapper, $\al$-Reeb, $\hopes_{1,1}$.}
\label{fig:plane}
\end{figure}
\vspace*{-4mm}

\begin{figure}[H]
\centering
\def\svgwidth{\columnwidth}
\includegraphics[scale = 0.26]{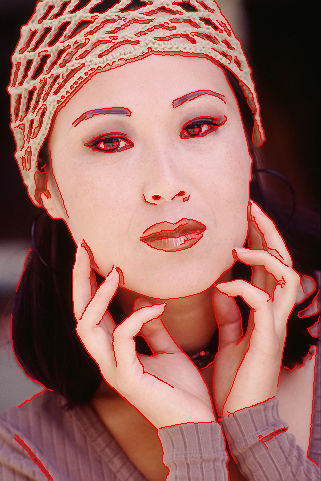}
\includegraphics[scale = 0.0652]{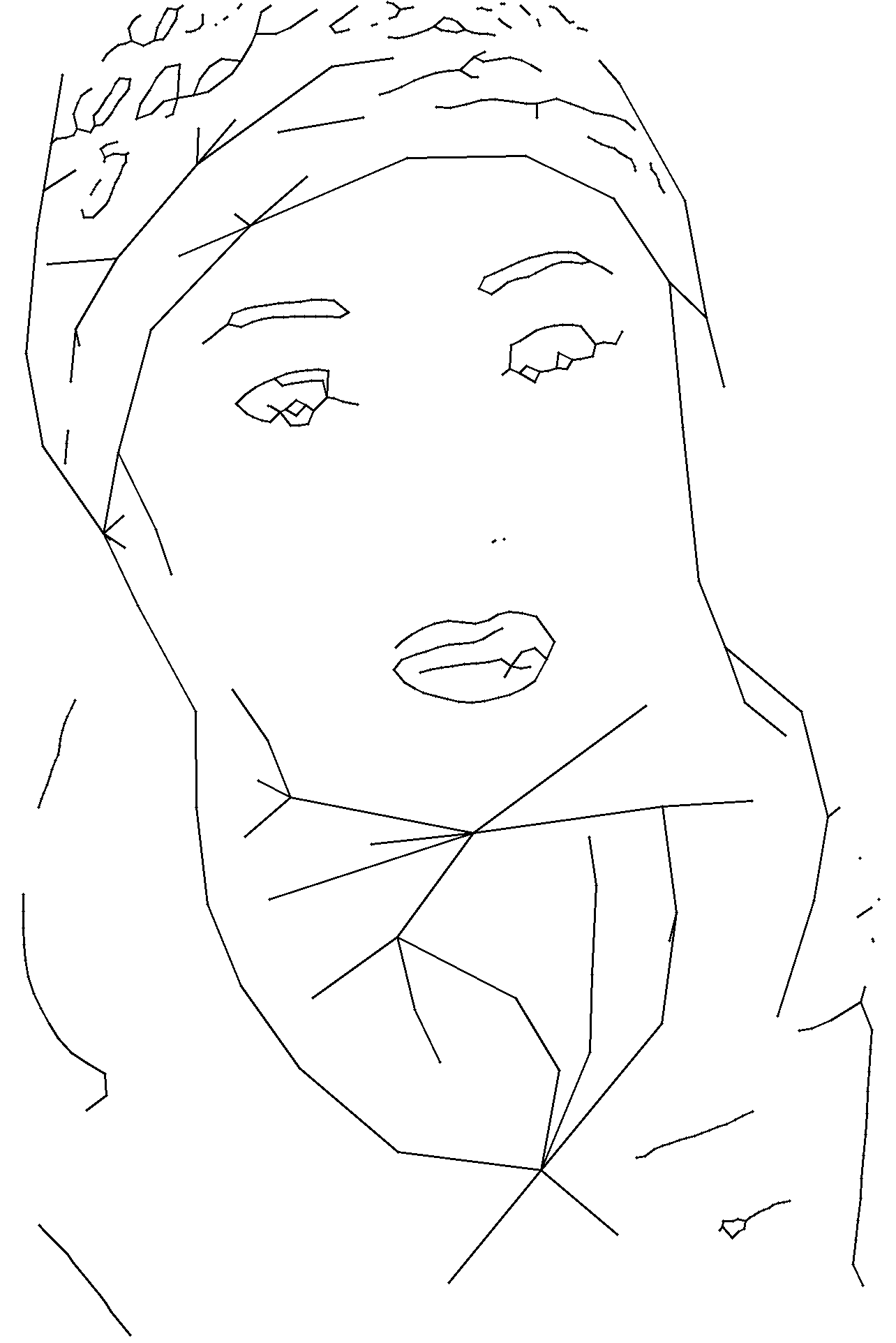}
\includegraphics[scale = 0.0652]{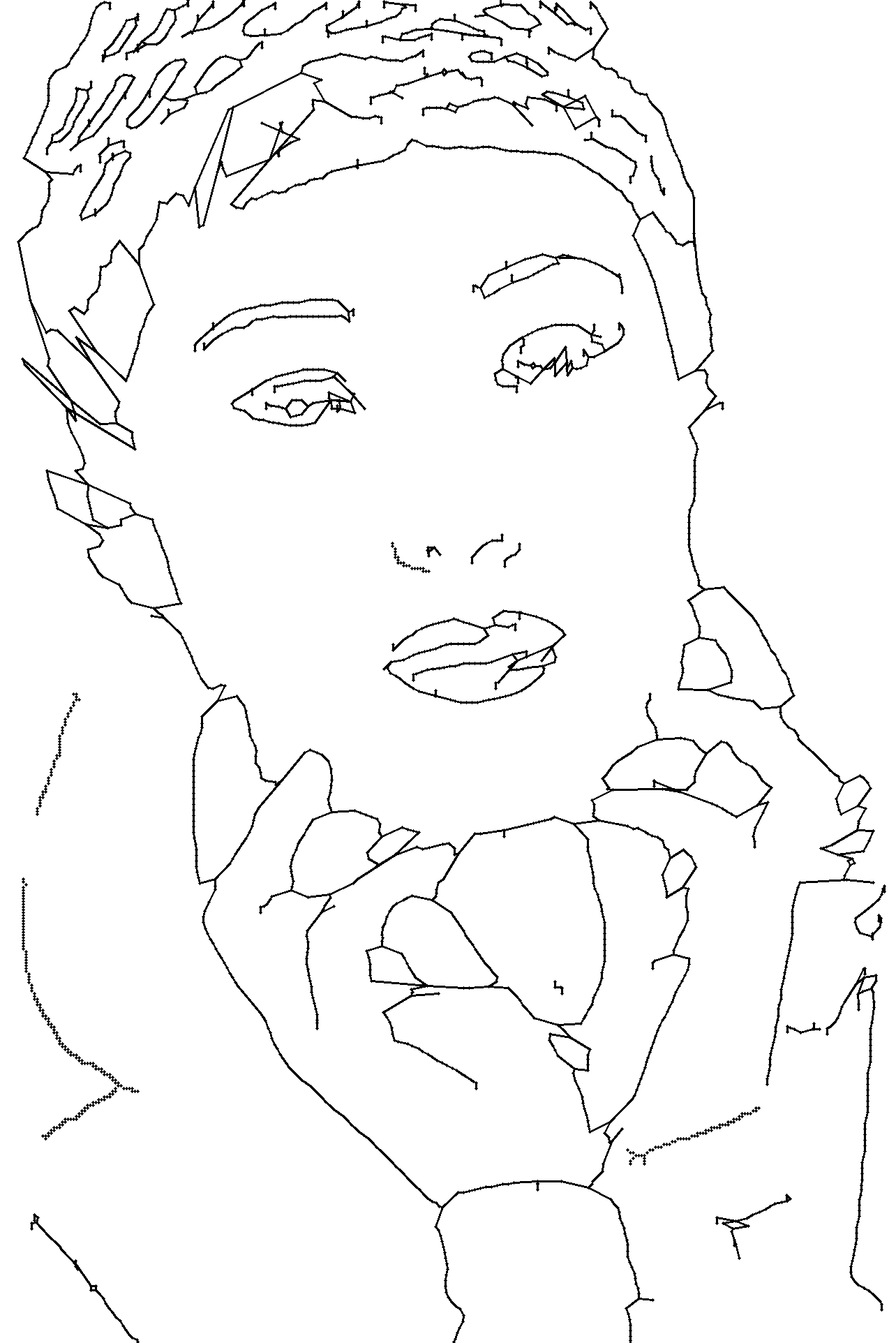}
\includegraphics[scale = 0.0652]{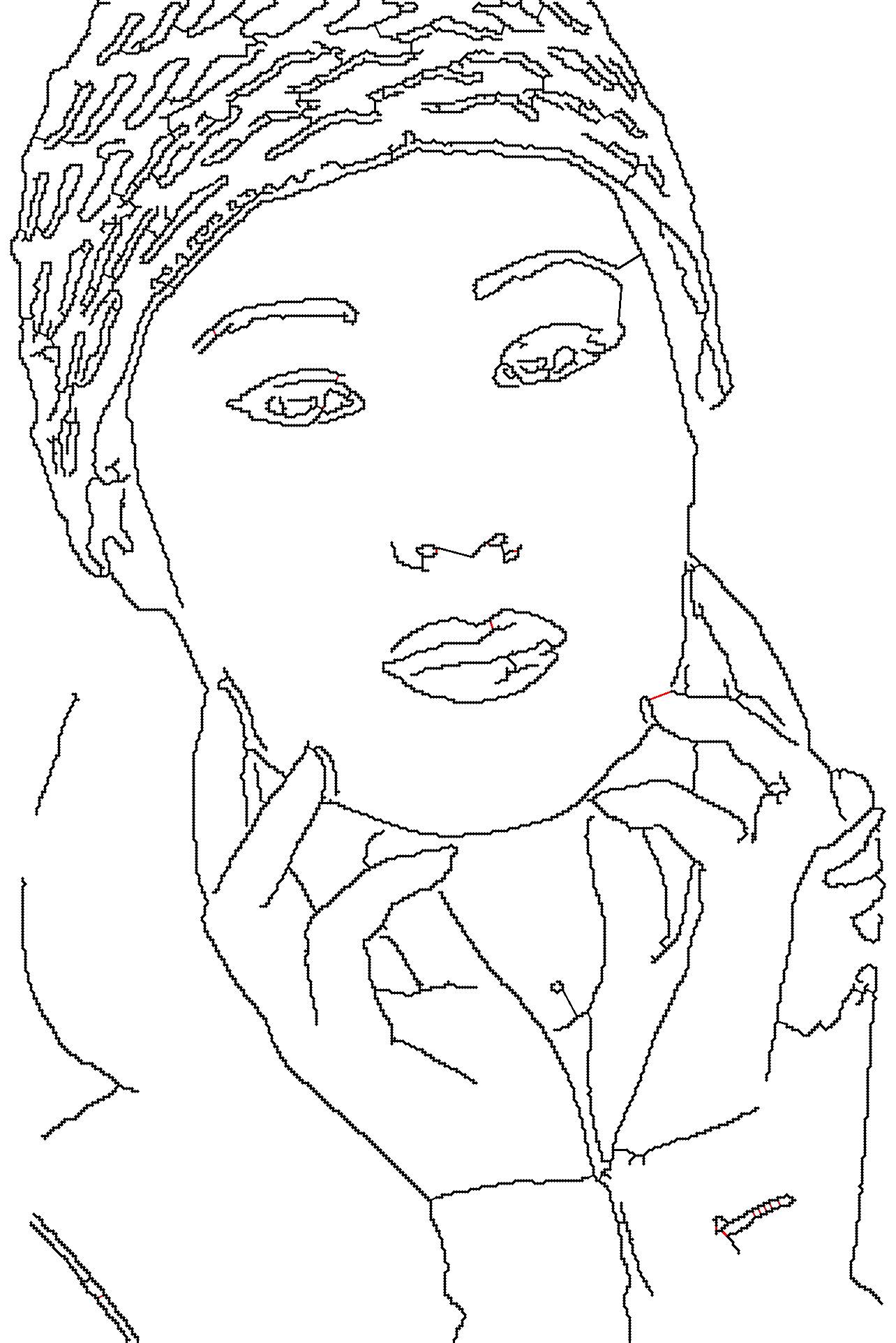}
\caption{Image 302003 with edge pixels in red, the outputs of Mapper, $\al$-Reeb, $\hopes_{1,1}$.}
\label{fig:woman}
\end{figure}
\vspace*{-4mm}


The key advantage of the Mapper algorithm is its flexibility due to many parameters, e.g. any clustering algorithm can be combined with Mapper.
When a suitable scale parameter $\al$ for a given $n$-point cloud $C$ can be quickly guessed, the $\al$-Reeb graph is computed with the very fast time $O(n\log n)$.
The experiments on 79K clouds and 500 BSD images in section~\ref{sec:experiments} have optimised the Mapper and $\al$-Reeb graphs over wide ranges of their important parameters.

Practically, by always using the justified heuristics (the 1st widest diagonal gap for $\hopes_{1,1}(C)$ and $\ep$ equal to the max death for simplification), the skeleton $\shopes(C)$ has comparable or better results without optimising any parameters in Fig.~\ref{fig:noise90}, \ref{fig:noise95} and Table~\ref{tab:bsd}.
All algorithms can be improved by a further minimisation of the RMS distance from a skeleton to a cloud $C\subset\R^d$.

Theoretically, a Homologically Persistent Skeleton $\hopes(C)$ has the advantage of being a parameter-free graph that is also embedded into the space of a cloud $C$.
Hence the time-consuming tuning of parameters is avoided and no intersections of edges are guaranteed. 
At the same time $\hopes$ remains flexible in the sense that after the full $\hopes(C)$ is found, one can quickly extract reduced subgraphs (for a fixed scale $\al$) and derived subgraphs $\hopes_{k,l}(C)$.

The paper gives detailed proofs of Optimality Theorem~\ref{thm:optimality} and Reconstruction Theorems~\ref{thm:reconstruction_1stgap},~\ref{thm:reconstruction_derived} for the first time.
The last two results are yet to be extended to higher dimensions.
Corollaries~\ref{cor:stability_hopes} and~\ref{cor:simhopes_homology} are new results.

The C++ code for all three algorithms is available at
https://github.com/Phil-Smith1/Cloud\_Skeletonization\_3.git.
The dataset of 79K point clouds (2GB) for comparison with other skeletonisation algorithms is available by request.

\bibliography{skeletonization-algorithms}

\newpage

\begin{center}
{\Large Supplementary materials: appendices A, B, C}
\end{center}

\setcounter{section}{0}
\renewcommand{\thesection}{\Alph{section}}
\renewcommand{\thesubsection}{\Alph{section}.\arabic{subsection}}

\section{Basic definitions and results from geometry and topology}
\label{sec:definitions}

\begin{definition}[metric space]
\label{dfn:metric}
A {\em metric space} $M$ is any set of points with a distance function  $d(p,q)$ between any points $p,q$ that satisfies these axioms:

\noindent 
$\bullet$ 
Positiveness: $d(p,q) \geq 0$ with equality if and only if $p = q$. 

\noindent 
$\bullet$ 
Symmetry: $d(p, q) = d(q, p)$ for any points $p,q\in C$.

\noindent 
$\bullet$ 
The triangle inequality: $d(p, q) + d(q, r) \geq d(p, r)$ for any $p,q,r\in C$.
\smallskip

\noindent 
For any subset $C$ of a metric space $M$ and a parameter $\al\geq 0$, the $\al$-offset $C^{\al}=\{p\in M \mid d(p,q)\leq\al \mbox{ for some }q\in M\}$ is the union of closed balls with the radius $\al$ and centres at all points of $C$, e.g. $C^0=C$. 
If a metric space $M$ consists of only finitely many points, this space will be called a {\em point cloud}.
\bsquare
\end{definition}

The Euclidean distance for points $p,q\in\R$ is $d(p,q)=|p-q|$. 
The function $(p-q)^2$ fails the triangle inequality above for the points $p=-1,q=0,r=1$. 

\begin{definition}[metric graph and neighbourhood graph]
\label{dfn:metric graph}
A \textit{graph} is a finite set of vertices with edges being unordered pairs of vertices.
A \textit{metric graph} is a graph that has a length assigned to each edge. 
The distance between two vertices is the minimum total length of any path (if it exists) from one vertex to the other.
An example of a metric graph is the \textit{neighbourhood graph} $N(C, \ep)$ of a point cloud $C$ with a threshold $\ep$. 
The vertex set of the graph $N(C, \ep)$ is the cloud $C$.
Two vertices $p,q\in C$ are joined by an edge in $N(C, \ep)$ if the distance $d(p,q)\leq\ep$. 
The length of this edge is fixed as the distance $d(p,q)$.
\bsquare
\end{definition}

\begin{definition}[graph homeomorphism]
\label{dfn:homeomorphism}
A graph $G$ is {\em connected} if any two vertices of $G$ are connected by a sequence of edges such that any successive edges share a common vertex.
Connected graphs $G,H$ are {\em homeomorphic}, $G\simeq H$, if, ignoring all vertices of degree 2, there exists a bijection $\psi$ between the vertices of $G$ and $H$ that respects edges, i.e. any vertices $u,v\in G$ are connected by $m$ edges in $G$ if and only if $\psi(u),\psi(v)\in H$ are connected by $m$ edges in $H$.  
\bsquare
\end{definition}

The graphs $W(4)$ and $G(2,2)$ in Fig.~\ref{fig:graphs} are homeomorphic, because four corner vertices of degree~2 in $G(2,2)$ are topologically trivial so that $G(2,2)$ can be mapped to $W(4)$ via a continuous bijection whose inverse is continuous.
\smallskip

The vertex set $C$ of any connected metric graph $G$ is a point cloud in the sense of Definition~\ref{dfn:metric}, where the distance $d(p,q)$ between any vertices $p,q\in C$ is the total length (sum of edge-lengths) of a shortest path from $p$ to $q$ in $G$.

\begin{definition}[simplicial complex]
\label{dfn:complex} 
A {\em simplicial complex} $Q$ with a finite vertex set $C$ is a collection of subsets $\{v_0, \dots, v_k\} \subset C$ called \textit{simplices} so that 
\smallskip

\noindent $\bullet$ any subset (called a \textit{face}) of a simplex is also a simplex and is included in $Q$; \smallskip

\noindent $\bullet$ any non-empty intersection of simplices is their common face included in $Q$. \medskip

\noindent Any simplex on $k+1$ vertices has the {\em dimension} $k$ and {\em geometric} realisation:
\[\De^k = \{(t_1, \dots, t_{k}) \in \R^{k} \mid t_1 + \dots + t_{k} \leq 1, \; t_i \geq 0\} \subset \R^{k}.\]
The \textit{geometric realisation} of $Q$ is obtained by gluing realisations of all its simplices along their common faces. Hence $Q$ inherits the Euclidean topology. The dimension of a complex $Q$ is the maximum dimension of its simplices.
\bsquare
\end{definition}

A 1-dimensional complex $Q$ is a graph $G$ without loops (edges connecting a vertex to itself) and multiple edges that have the same endpoints.
A geometric realisation of $G$ is a continuous function $f:G\to\R^k$ mapping edges to straight line segments in $\R^k$ that can meet only at common endpoints, see Fig.~\ref{fig:graphs}. 

\begin{definition}[\Cech and Vietoris-Rips complexes, Delaunay triangulations and $\al$-complexes]
\label{dfn:Cech+VR}
Let $C$ be any set in a metric space $M$. 
For abstract data, the space $M$ can coincide with a given cloud $C$.
The \textit{\Cech} complex $\Ch(C, M; \al)$ has a simplex on $v_0, \dots, v_k \in C$ if the full intersection of $k + 1$ closed balls with a radius $\al$ and centres $v_0, \dots, v_k$ is not empty, i.e. contains a point from the ambient space $M$, see \cite[section~4.2.3]{CdSO14} and Fig.~\ref{fig:Cech+VR}.
The \textit{Vietoris-Rips} complex $VR(C, M; \al)$ has a simplex on vertices $v_0, \dots, v_k \in C$ if each pairwise intersection of the above $k + 1$ balls is not empty in $M$.
For a cloud $C\subset\R^d$, a \emph{Delaunay} triangulation $\Del(C)$ consists of simplices (with vertices at points of $C$) whose minimal open circumballs contain no points of $C$.
For $\al\geq 0$, the \emph{$\al$-complex} $C(\al)\subset\Del(C)$ consists of all Delaunay simplices whose circumradii are at most $\al$, i.e. $C(\al)$ grows from $C$ at $\al=0$ to $\Del(C)$ when $\al\to+\infty$.
\bsquare
\end{definition}

Lemma~\ref{lem:nerve} says that the \Cech complex correctly represents the geometric shape of any union of balls.
The Vietoris-Rips complex is determined by its 1-dimensional skeleton, hence requires much less storage than the \Cech complex.
The $\al$-complex $C(\al)$ is preferred for low $d$, because any $C(\al)$ lives in $\R^d$, while \Cech and VR complexes can be only projected to $\R^d$ (often with intersections).

\begin{lemma}
\label{lem:nerve}
\cite[Theorem~3.2]{edelsbrunner1995union} 
Let $C$ be a subspace of a metric space $M$. Then any $\al$-offset $C^{\al} \subset M$ is homotopy equivalent to the \Cech complex $\Ch(C, M; \al)$.
For a finite cloud of points $C\subset\R^d$ in a Euclidean space, any $\al$-offset $C^{\al}\subset\R^d$ is homotopy equivalent to the $\al$-complex $C(\al)\subset\R^d$.
\end{lemma}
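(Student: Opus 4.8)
Both assertions are instances of the Nerve Theorem, so the plan is to exhibit, in each case, a cover of the offset $C^{\al}$ whose nerve is exactly the complex in question and which is \emph{good}, meaning every nonempty finite intersection of its members is contractible; the homotopy equivalence then follows from \cite[Corollary~4G.3]{Hat01}.

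For the first statement, take the cover of $C^{\al}$ by the closed balls $U_c = \{p \in M : d(p,c) \le \al\}$, $c \in C$. By Definition~\ref{dfn:metric} we have $C^{\al} = \bigcup_{c\in C} U_c$, and by Definition~\ref{dfn:Cech+VR} a subset $\{v_0,\dots,v_k\} \subset C$ spans a simplex of $\Ch(C,M;\al)$ precisely when $U_{v_0} \cap \dots \cap U_{v_k} \ne \emptyset$; hence $\Ch(C,M;\al)$ is, by construction, the nerve of $\{U_c\}_{c\in C}$. What remains is goodness of this cover, which is where the metric structure is actually used: I would verify it in the settings where the lemma is applied (in particular $M=\R^d$, where closed balls are compact convex, so all nonempty finite intersections are compact convex, hence contractible). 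Since the closed-cover form of the Nerve Theorem needs care, I would either invoke its version for covers by convex sets, or, more robustly, pass to open balls of radius $\al+\varepsilon$: for $\varepsilon$ small enough these still have nerve $\Ch(C,M;\al)$ (finiteness of $C$ keeps the intersection pattern stable), the open cover is good, and $C^{\al}$ is a deformation retract of $C^{\al+\varepsilon}$.

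For the second statement, I would refine the ball cover of $C^{\al}\subset\R^d$ by intersecting with Voronoi cells, following \cite[Theorem~3.2]{edelsbrunner1995union}. Let $\mathrm{Vor}(c)$ be the closed Voronoi cell of $c \in C$ and put $R_c = U_c \cap \mathrm{Vor}(c)$. A point $x$ with $d(x,C)\le\al$ lies in the Voronoi cell of a nearest sample and within $\al$ of it, so $\{R_c\}_{c\in C}$ covers $C^{\al}$; each $R_c$ is compact convex (a ball intersected with a polyhedron), and so is any nonempty finite intersection, giving a good closed cover after the usual $\varepsilon$-thickening. A subset $\{c_0,\dots,c_k\}$ lies in the nerve exactly when some point is simultaneously in $\mathrm{Vor}(c_i)$ and within $\al$ of $c_i$ for all $i$, i.e. when there is a point equidistant and nearest in $C$ to $c_0,\dots,c_k$ at distance at most $\al$; this is precisely the condition for $\mathrm{conv}(c_0,\dots,c_k)$ to be a Delaunay simplex of circumradius at most $\al$, so the nerve is $C(\al)$ as in Definition~\ref{dfn:Cech+VR}, and the Nerve Theorem yields $C^{\al}\simeq C(\al)$.

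The main obstacle is not the homotopy-theoretic core, which is the classical Nerve Theorem, but two bookkeeping points: first, justifying goodness of a \emph{closed} ball cover in the stated generality, since closed metric balls and their intersections need not be contractible in an arbitrary $M$, so the clean statement really rests on hypotheses implicit in the cited references (or on restricting to the geodesic/Euclidean settings where it is used), and the $\varepsilon$-thickening argument must be executed with some care; and second, matching the combinatorics of the Voronoi-refined nerve with the given definition of $C(\al)$ in degenerate (cospherical) configurations, where circumcenters lie on the boundary of several Voronoi cells, which I would handle either by a generic perturbation of $C$ or by consistently working with closed cells throughout.
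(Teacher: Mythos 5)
The paper offers no proof of this lemma, only citations to Hatcher's Nerve Theorem and Edelsbrunner's union-of-balls result, and your proposal correctly reconstructs exactly those two arguments: the ball cover whose nerve is the \Cech complex, and the Voronoi-refined cover whose nerve is the $\al$-complex, together with the standard caveats about closed versus open good covers and degenerate cospherical configurations. This is the same route as the cited sources, so nothing further is needed.
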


\begin{figure}
\centering
\begin{minipage}{0.6\textwidth}
\includegraphics[width=\textwidth]{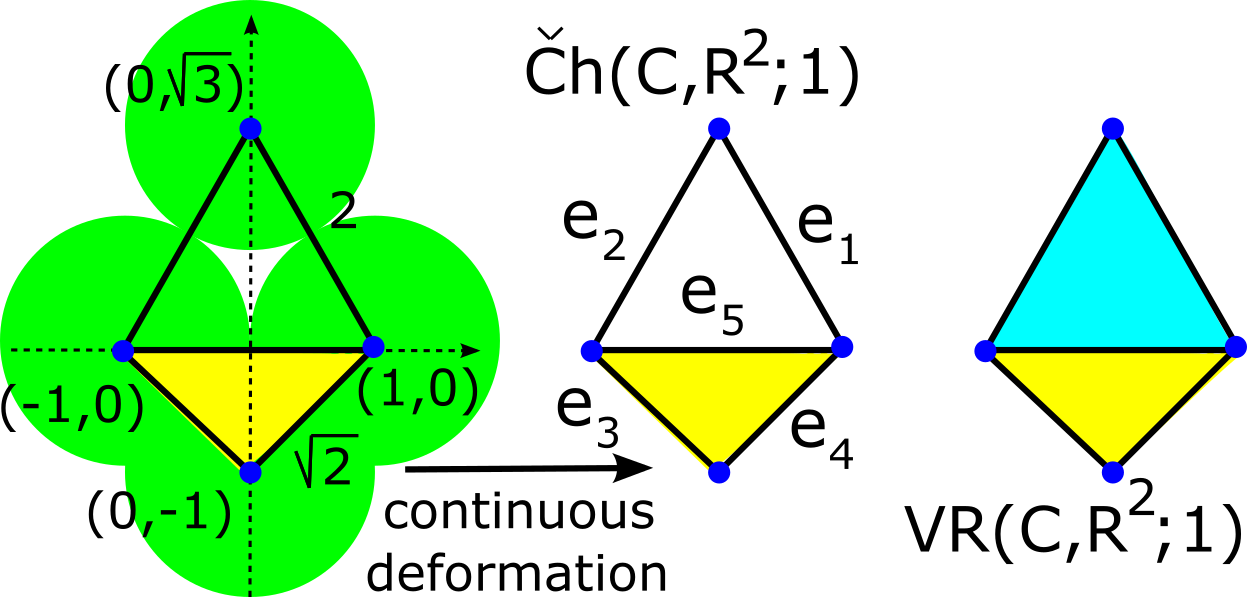}
\captionof{figure}{The offset $C^1$ of the 4-point cloud $C$ deforms to the \Cech complex, not to the Vietoris-Rips complex at scale 1, which coincides with $\Del(C)\subset\R^2$ in this case.}
\label{fig:Cech+VR}
\end{minipage} \hspace{1em}
\begin{minipage}{0.32\textwidth}
\includegraphics[width=\textwidth]{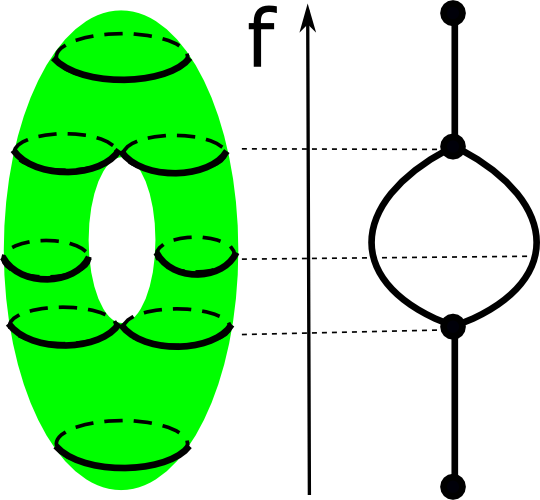}
\captionof{figure}{The Reeb graph of the height function on a torus.}
\label{fig:Reeb_graph}
\end{minipage}
\end{figure}

\begin{definition}[Reeb graphs]
\label{dfn:Reeb}
For a simplicial complex $Q$ and a function $f : Q \rightarrow \R$, the {\em level set} of $f$ corresponding to a value $t \in \R$ is the set of points $L_t(f) = \{x \in Q \mid f(x) = t\}$.
We define, for points $x, y \in Q$, the equivalence relation $\sim$ such that $x \sim y$ if and only if $f(x) = f(y)$ and $x$ and $y$ are in the same connected component of the level set of $f(x)$, see Fig.~\ref{fig:Reeb_graph}.
The \textit{Reeb graph} $\Reeb(Q, f)$ is the quotient space of $Q$ formed by mapping all points that are equivalent to each other under the relation $\sim$ to a single point.
\end{definition}

Definition~\ref{dfn:Reeb} makes sense when $Q$ is a topological space, which is more general than a simplicial complex.
A Homologically Persistent Skeleton ($\hopes$) is essentially based on the concept of homology introduced below.
A 1-dimensional cycle in a complex $Q$ is a sequence of edges $e_1, \dots, e_k$ such that $e_i$ and $e_{i + 1}$ have a common endpoint, where $e_{k + 1} = e_1$. 
We define the first homology group $H_1(Q)$ of a simplicial complex $Q$ only with coefficients in the group $\Z_2 = \Z / 2\Z = \{0, 1\}$.

\begin{definition}[homology $H_1(Q)$ of a complex]
\label{dfn:1dim-homology}
Cycles of a complex $Q$ can be algebraically written as finite linear combinations of edges (with coefficients only $0$ or $1$) and generate the vector space $C_1$ of cycles. The boundaries of all triangles in $Q$ are cycles of 3 edges and generate the subspace $B_1 \subset C_1$. The quotient space $C_1 / B_1$ is the \emph{homology} group $H_1(Q)$. 
The operation is the addition of cycles, the empty cycle is the zero. We define the $k$-th \emph{Betti number} of a complex $Q$ to be the rank of the $k$-th homology group. 
For instance, the first Betti number of a complex $Q$ is the rank of the first homology group, and is equivalent to the number of linearly independent 1-dimensional cycles in $Q$.
\end{definition}

The complex $Q=\Ch(C, \R^2; \al)$ in Fig.~\ref{fig:Cech+VR} has 5 edges $e_i$, $i=1,\dots,5$. The vector space $C_1$ is generated by two triangular cycles $e_1+e_2+e_5$ and $e_3+e_4+e_5$.
The subspace $B_1$ is generated by the boundary cycle $e_3+e_4+e_5$ of the yellow triangle.
Hence $H_1(Q)=C_1/B_1=\Z_2$ is generated by the cycle $e_1+e_2+e_5$.

\section{A detailed review of DBSCAN, Mapper and $\al$-Reeb algorithms}
\label{sec:algorithms}

\subsection{DBSCAN: the clustering algorithm used in our Mapper implementation}
\label{sub:DBSCAN}
	
DBSCAN is a density-based spatial clustering for applications with noise. 
Mapper in subsection~\ref{sub:Mapper} should work for any number of clusters in subclouds, hence we can not use clustering that needs a specified number of clusters, e.g. $k$-means. 
We have also tried the single-edge clustering and found that DBSCAN works better for noisy samples from section~\ref{sec:dataset}.
DBSCAN requires two parameters: (1) $\ep$ is the radius around a point within which we search for neighbours; and (2) minPts is the number of points required within a neighbourhood of a point before a cluster can be formed. 
Here are the stages of DBSCAN.
\smallskip
	
	\noindent
	$\bullet$
	\textbf{Stage 1:} A single point $p_1$ is selected at random, and we compute the set $\Nbhd(p_1)$ of all points within a distance $\ep$ of $p_1$. If $|\Nbhd(p_1)| < \text{minPts}$, then $p_1$ is labelled as noise. If however $|\Nbhd(p_1)| \geq \text{minPts}$, we label all points in $\Nbhd(p_1)$ (not already belonging to another cluster, but may have previously been labelled as noise) as belonging to the cluster of $p_1$. \smallskip
	
	\noindent
	$\bullet$
	\textbf{Stage 2:} Then, looping over all points $p_i$ in $\Nbhd(p_1)$ (apart from $p_1$), we compute $\Nbhd(p_i)$. All points in $\Nbhd(p_i)$ (not already belonging to another cluster) are labelled as belonging to the cluster of $p_1$, and, if $|\Nbhd(p_i)| \geq \text{minPts}$, we add all points in $\Nbhd(p_i)$ to $\Nbhd(p_1)$. Hence, this loop will finish only when all points in $p_1$'s cluster have been identified. \smallskip
	
	\noindent
	$\bullet$
	\textbf{Stage 3:} A new point $p_2$, that is not already labelled, is selected, and we continue as in the first two stages for $p_2$ instead of $p_1$. The process continues until all points are assigned to a cluster or are labelled as noise. \smallskip

\subsection{The Mapper graph: a network of interlinked clusters}
\label{sub:Mapper}

\noindent 
The Mapper algorithm is a skeletonisation framework introduced by G.~Carlsson et al. \cite{singh2007topological}, which aims to give a simple description of a dataset by a network of interlinked clusters. 
Mapper takes as input a point cloud $C$ given by pairwise distances with extra parameters below, and outputs a simplicial complex. 
\smallskip

\noindent
\textbf{A filter function} is a function $f : C \to Y$ whose value should be given for all points in a data cloud $C$. 
The parameter space $Y$ will often be $\R$, but could also be $\R^2$ or $S^1$. The type of a filter function is up to the user, with common examples being a density estimator, the Euclidean distance from a base point, or the distance from a root point within a neighbourhood graph on $C$. 
\smallskip

\noindent
\textbf{A covering of the range of $f$}. 
The range of the filter function $f$ must be covered by overlapping regions, e.g. line intervals for $Y=\R$. 
A cover usually has two parameters -- the number of regions and the ratio of overlap -- which can be used to control the resolution of the output simplicial complex. 
\smallskip

\noindent
\textbf{A clustering algorithm}. 
The Mapper algorithm requires subsets of the input point cloud $C$ to be clustered.
So a clustering algorithm must be chosen, with a good choice again being dependent on the type of data being analysed. 
\smallskip

\noindent
The main stages of the Mapper algorithm are now described below. \smallskip

\noindent
\textbf{Stage 1}. 
If values of a given filter function $f$ are not yet given explicitly, then the function $f$ is computed on all points of a point cloud $C$, see Fig.~\ref{fig:Mapper}.
\smallskip

\noindent
\textbf{Stage 2}. 
For each region $I$ in the covering of the range of $f$, we cluster the preimage $f^{-1}(I)=\{p\in C \mid f(p)\in I\}$ according to the given clustering algorithm. 
Each cluster is represented as a vertex in the output complex $Q$. 
\smallskip

\noindent
\textbf{Stage 3}. 
If any $k\geq 2$ resulting clusters (across all regions in the covering) share a common point of $C$, the corresponding $k$ vertices (representatives of $k$ clusters) span a $(k-1)$-dimensional simplex in the Mapper complex $Q$.
\smallskip

A simple way to visualise the Mapper complex $Q$ is to project $Q$ to the space where the cloud $C$ lives by mapping every vertex to the average point of the corresponding cluster.
If edges are drawn as straight-line segments, e.g. in $\R^2$, then intersections may be unavoidable, see the 2nd picture in Fig.~\ref{fig:outputsg32}.
\smallskip

Mapper is a versatile tool that if used rightly can be a useful method in visualising large datasets. A significant difficulaty of the method however is that with so many choices to be made, the user often requires existing knowledge of the dataset in order to select parameters that will give meaningful outputs.
\smallskip

The experiments in Section~\ref{sec:experiments} use a filter function mapping a point cloud $C$ to $\R$, and therefore Mapper outputs a one-dimensional complex, which is a graph. 
The filter function is the Euclidean distance from a base point, which is the most distant from a random point of $C$. 
The overlap percentage of two adjacent intervals in the covering is $50\%$.
The clustering algorithm is DBSCAN,
which has two parameters: 1) a radius $\ep$ around a point within which we search for neighbours, and 2) the minimum number of points in a cluster.
We found it acceptable to fix this minimum at 5 and optimised $\ep$ in the experiments.

\begin{figure}
	\centering
	\begin{minipage}{0.47\textwidth}
		\includegraphics[width=\textwidth]{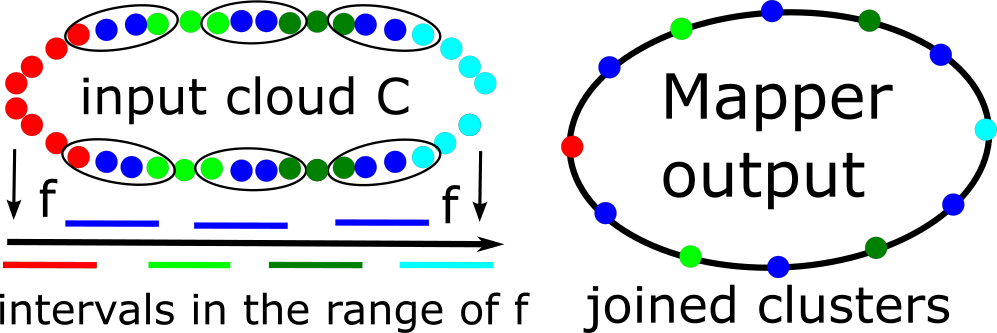}
		\captionof{figure}{\textbf{Left}: cloud $C$ with a filter function $f:C\to\R$ has range is covered by overlapped intervals. \textbf{Right}: Mapper outputs a graph whose nodes represent clusters and links join overlapped clusters.}
		\label{fig:Mapper}
	\end{minipage} \hspace{1em}
	\begin{minipage}{0.47\textwidth}
		\centering
		\includegraphics[width=\textwidth]{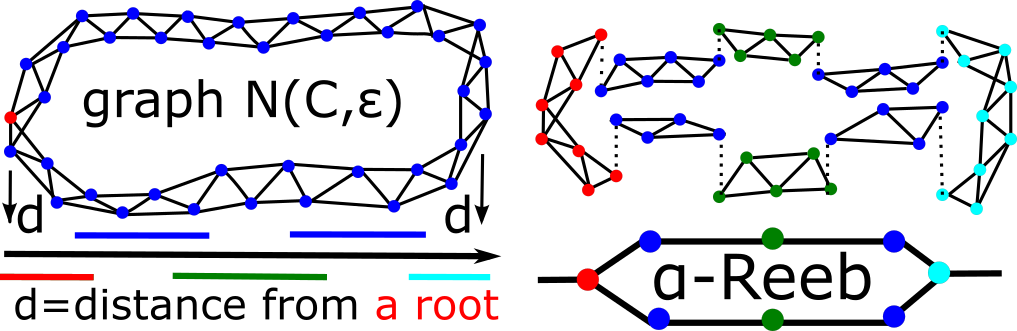}
		\captionof{figure}{\textbf{Left}: a neighbourhood graph $N(C,\ep)$ with a distance function $d$ from a root. \textbf{Right}: nodes of the $\al$-Reeb graph represent connected subgraphs in preimages $d^{-1}(I_i)$ of overlapped intervals $I_i$.}
		\label{fig:alpha-Reeb}
	\end{minipage}
\end{figure}

\subsection{The $\al$-Reeb graph is a parametric version of the Reeb graph}
\label{sub:Reeb}

The Reeb graph, introduced in Definition~\ref{dfn:Reeb}, is a simplified representation of a simplicial complex. If, instead of having an entire simplicial complex $Q$, we only have a finite number of points sampled from $Q$, even approximating the Reeb graph of $(Q, f)$ is not straightforward. 
Therefore, to bridge this barrier, F.~Chazal et al. introduced a parametric version called the $\al$-Reeb graph \cite{chazal2015gromov}.

\begin{definition}
	\label{dfn:alpha-Reeb}
	Let $(Q,d)$ be a simplicial complex $Q$ with a continuous function $d:Q\to\R$. 
	Let $\al > 0$ and let $\mathcal{I} = \{I_i\}$ be a covering of the range of $d$, where each $I_i$ is a closed interval of the length $\al$. 
	Consider the transitive closure of the following relation on points of $Q$:
	$x \sim_\al y$ if and only if $d(x) = d(y)$ and $x$ and $y$ are in the same connected component of $d^{-1}(I_i)= \{x \in Q \: | \: d(x) \in I_i\}$ for some interval $I_i \in \mathcal{I}$. 
	Then the $\al$-\emph{Reeb graph} associated to the covering $\mathcal{I}$ of a simplicial complex $Q$, is the quotient space formed from $Q$ by mapping all points that are equivalent to each other under $\sim_\al$ to a single point.
\end{definition}

Here are the stages of the $\al$-Reeb algorithm \cite[section~6]{chazal2015gromov}.
\smallskip

\noindent
\textbf{Stage 0}.
To get a simplicial complex $Q$ needed for the $\al$-Reeb graph in Definition~\ref{dfn:alpha-Reeb}, usually one creates a neighbourhood graph $N(C,\ep)$ where all points of $C$ at a distance less than $\ep$ are connected by an edge, see Fig.~\ref{fig:alpha-Reeb}. 
\smallskip

\noindent
\textbf{Stage 1}. 
Select a root vertex in $N(C,\ep)$, e.g. as the most distance vertex from a random one.
For each vertex, calculate its distance from the root within the graph, which gives the function $d : V \to \R$ on the vertex set $V$ of $N(C, \ep)$. 
\smallskip

\noindent
\textbf{Stage 2}. 
Let $\max(d)$ be the maximal value of $d:V\to\R$. 
For any $0<\al<\max(d)$, the range of $d$ is covered by the overlapped intervals $\mathcal{I} = \{I_i\}_{0 \leq i \leq m}$, where $I_i = \{[\frac{i\alpha}{2}, \frac{i\alpha}{2} + \alpha]\}$, $m$ is the smallest integer such that $m \geq \frac{2(\max(d) - \al)}{\al}$. \smallskip

\noindent
\textbf{Stage 3}.
For each interval in the covering $\mathcal{I}$, we consider its preimage $d^{-1}(I_i)$ in the vertex set $V$.
We add an edge between two vertices in the preimage if an edge also existed between these vertices in the neighbourhood graph $N(C,\ep)$. 
So each interval $I_i$ gives rise to a subgraph of $N(C,\ep)$. 
Every connected component of these subgraphs corresponds to a vertex in an intermediate graph $G$. 
\smallskip

\noindent
\textbf{Stage 4}. 
Two vertices in the intermediate graph $G$ are connected by an edge if their corresponding components share a common vertex in $N(C,\ep)$.
Such common vertices are joined by dotted arcs in the last picture of Fig.~\ref{fig:alpha-Reeb}.
\smallskip

\noindent
\textbf{Stage 5}. 
For each vertex $v$ in our intermediate graph $G$, we take a copy of the interval $J_v$, which are split by $v$ into two halves and partially ordered according to $\mathcal{I} = \{I_i\}_{0 \leq i \leq m}$. 
The $\alpha$-Reeb graph is the quotient of the disjoint union of these intervals $J_v$. 
The top half of $J_v$ is identified with the bottom half of all higher intervals $J_u$ such that the vertices $u,v$ are joined by an edge of $G$. 
\smallskip

Similarly to Mapper, the $\al$-Reeb graph aims to join close clusters, however clustering in preimages of intervals is replaced by finding connected subgraphs.
Informally, the scale $\al$ helps to identify points that can be connected by paths of lengths at most $\al$ within a neighbourhood graph.
In the limit $\al\to 0$, the $\al$-Reeb graph becomes the Reeb graph from Definition~\ref{dfn:Reeb}.
Hence the $\al$-Reeb graph essentially depends on $\al$ whose wide range is explored in subsection~\ref{sub:parameters}.

\section*{Appendix C: more experimental results on synthetic and real data}

\subsection*{C.1. Comparisons of algorithms on noisy samples of graphs}

\begin{figure}[H]
	\centering
	\def\svgwidth{\columnwidth}
	\includegraphics[width=\textwidth]{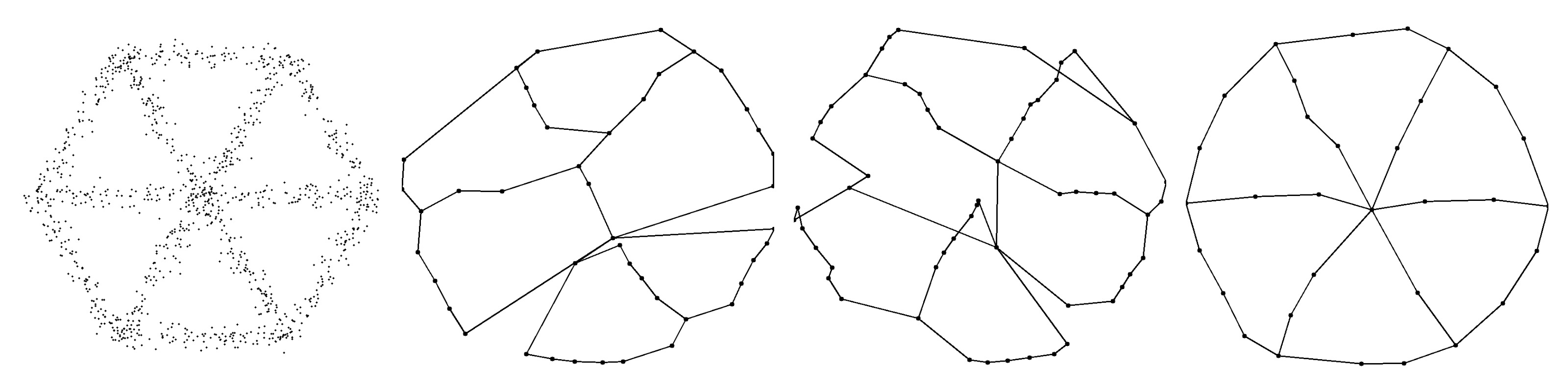}
	\caption{\textbf{1st}: a cloud $C$ sampled from the $W(6)$ graph with Gaussian noise with $\sigma = 0.04$; 
\textbf{2nd}: Mapper output on $C$; 
\textbf{3rd}: $\al$-Reeb$(C)$; 
\textbf{4th}: $\shopes(C)$ by Algorithm~\ref{alg:contract}.}
	\label{fig:wheel6_Gaussian}
\end{figure}

\begin{figure}[H]
	\centering
	\def\svgwidth{\columnwidth}
	\includegraphics[width=\textwidth]{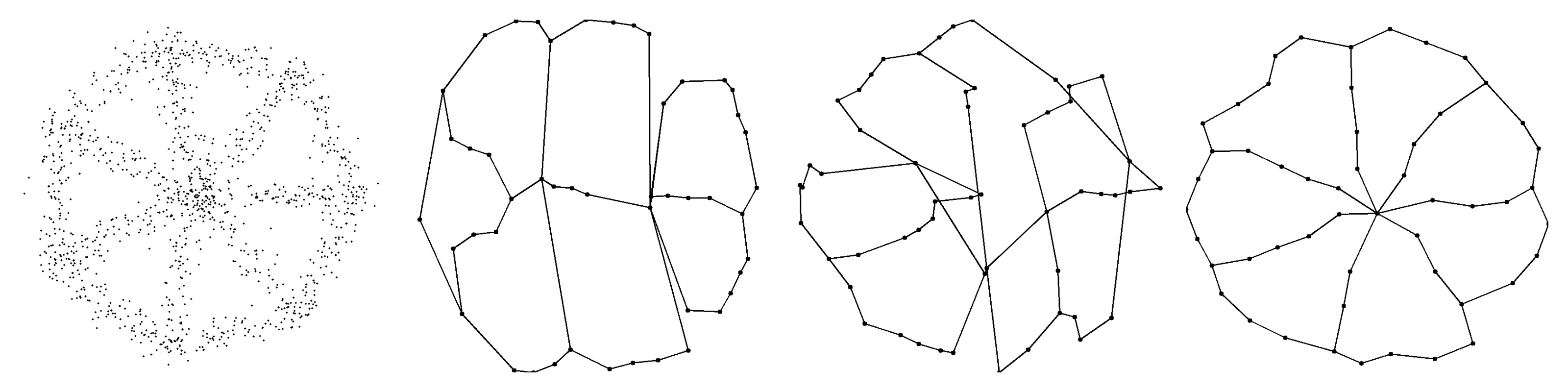}
	\caption{\textbf{1st}: a cloud $C$ sampled from the $W(7)$ graph with Gaussian noise with $\sigma = 0.06$; 
\textbf{2nd}: Mapper output on $C$; 
\textbf{3rd}: $\al$-Reeb$(C)$; 
\textbf{4th}: $\shopes(C)$ by Algorithm~\ref{alg:contract}.}
	\label{fig:wheel7_Gaussian}
\end{figure}

\begin{figure}[H]
	\centering
	\def\svgwidth{\columnwidth}
	\includegraphics[width=\textwidth]{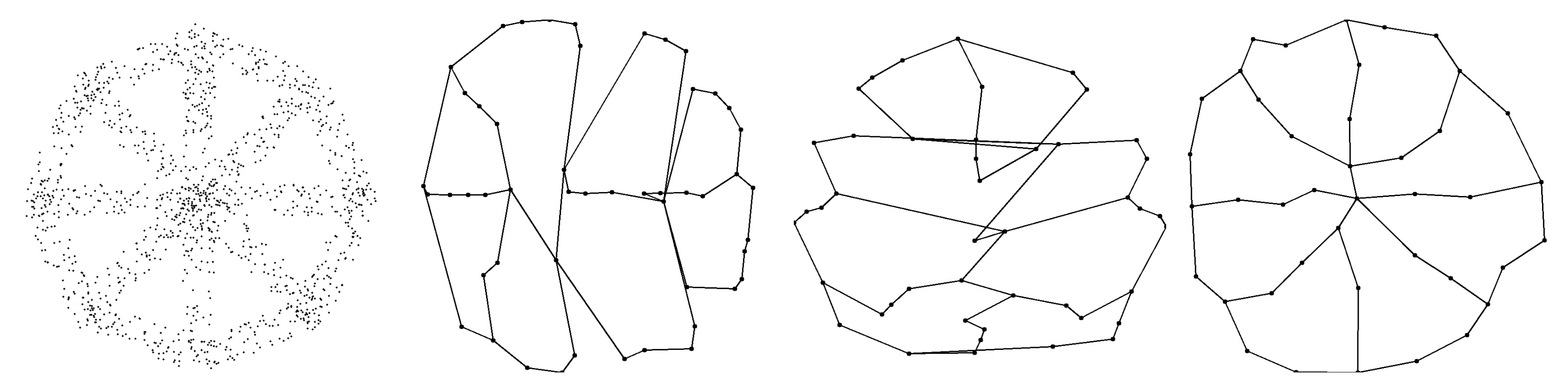}
	\caption{\textbf{1st}: a cloud $C$ sampled from the $W(8)$ graph with uniform noise with $\mu = 0.1$; 
\textbf{2nd}: Mapper output on $C$; 
\textbf{3rd}: $\al$-Reeb$(C)$; 
\textbf{4th}: $\shopes(C)$ by Algorithm~\ref{alg:contract}.}
	\label{fig:wheel8_uniform}
\end{figure}

\begin{figure}[H]
	\centering
	\def\svgwidth{\columnwidth}
	\includegraphics[width=\textwidth]{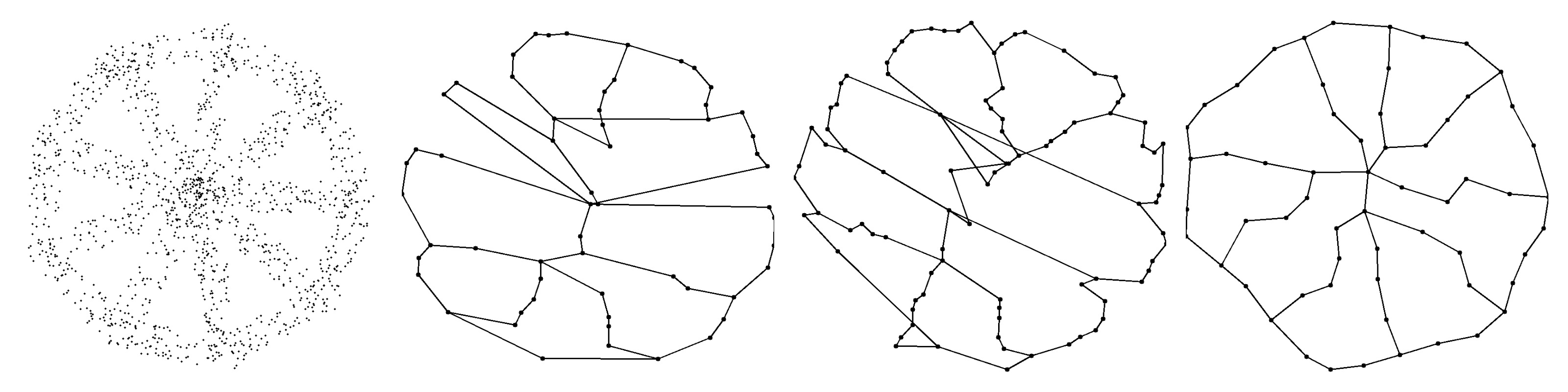}
	\caption{\textbf{1st}: a cloud $C$ sampled from the $W(9)$ graph with uniform noise with $\mu = 0.1$. 
\textbf{2nd}: Mapper output on $C$; 
\textbf{3rd}: $\al$-Reeb$(C)$; 
\textbf{4th}: $\shopes(C)$ by Algorithm~\ref{alg:contract}.}
	\label{fig:wheel9_uniform}
\end{figure}

\begin{figure}[H]
	\centering
	\def\svgwidth{\columnwidth}
	\includegraphics[width=\textwidth]{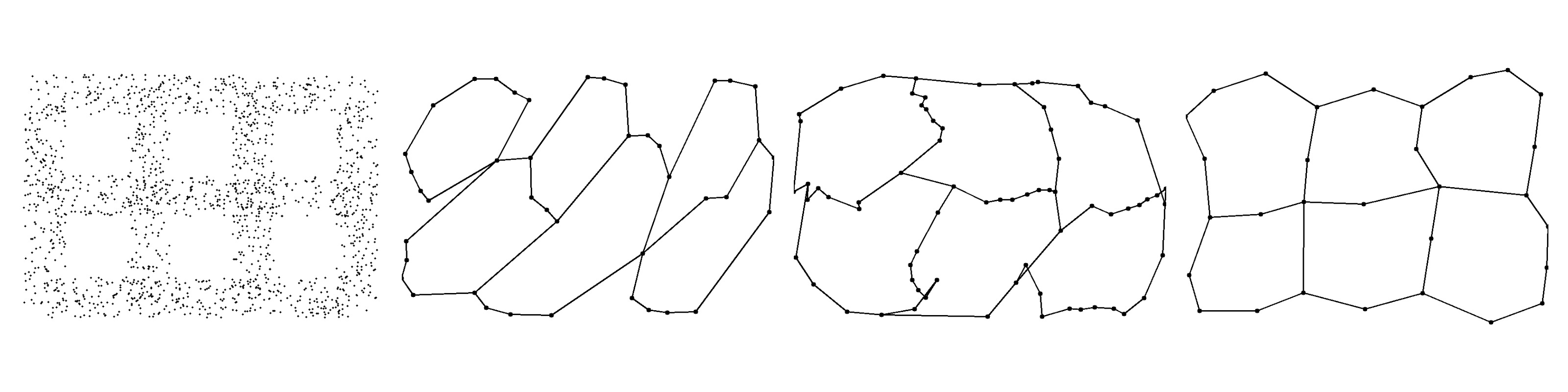}
	\caption{\textbf{1st}: a cloud $C$ sampled from the $G(3,2)$ graph with uniform noise with $\mu = 0.2$. 
\textbf{2nd}: Mapper output on $C$; 
\textbf{3rd}: $\al$-Reeb$(C)$; 
\textbf{4th}: $\shopes(C)$ by Algorithm~\ref{alg:contract}.}
	\label{fig:grid32_uniform}
\end{figure}

\begin{figure}[H]
	\centering
	\def\svgwidth{\columnwidth}
	\includegraphics[width=\textwidth]{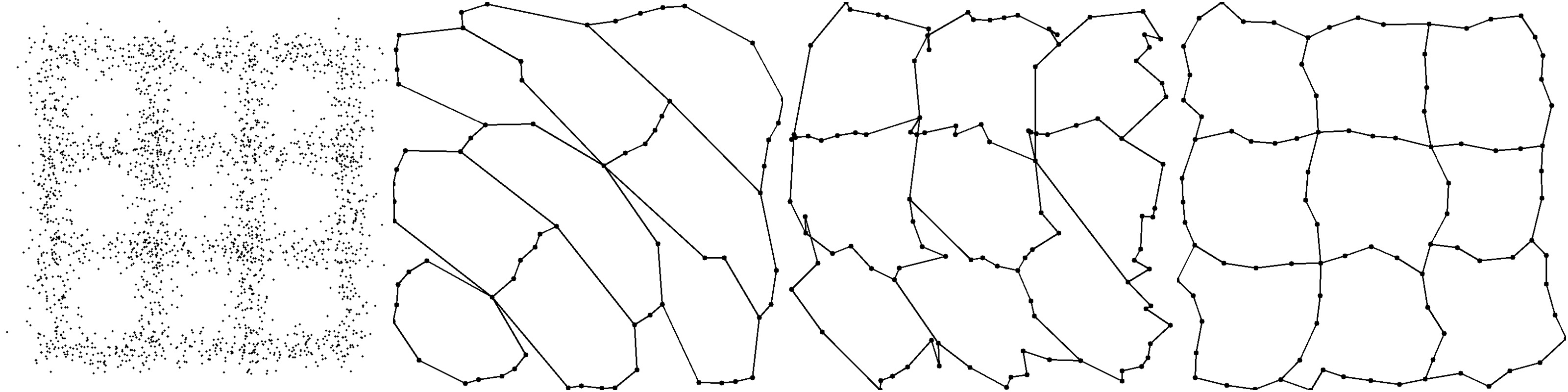}
	\caption{\textbf{1st}: a cloud $C$ sampled from the $G(3,3)$ graph with Gaussian noise with $\sigma = 0.12$. 
\textbf{2nd}: Mapper output on $C$; 
\textbf{3rd}: $\al$-Reeb$(C)$; 
\textbf{4th}: $\shopes(C)$ by Algorithm~\ref{alg:contract}.}
	\label{fig:grid33_Gaussian}
\end{figure}

\begin{figure}[H]
	\centering
	\def\svgwidth{\columnwidth}
	\includegraphics[width=\textwidth]{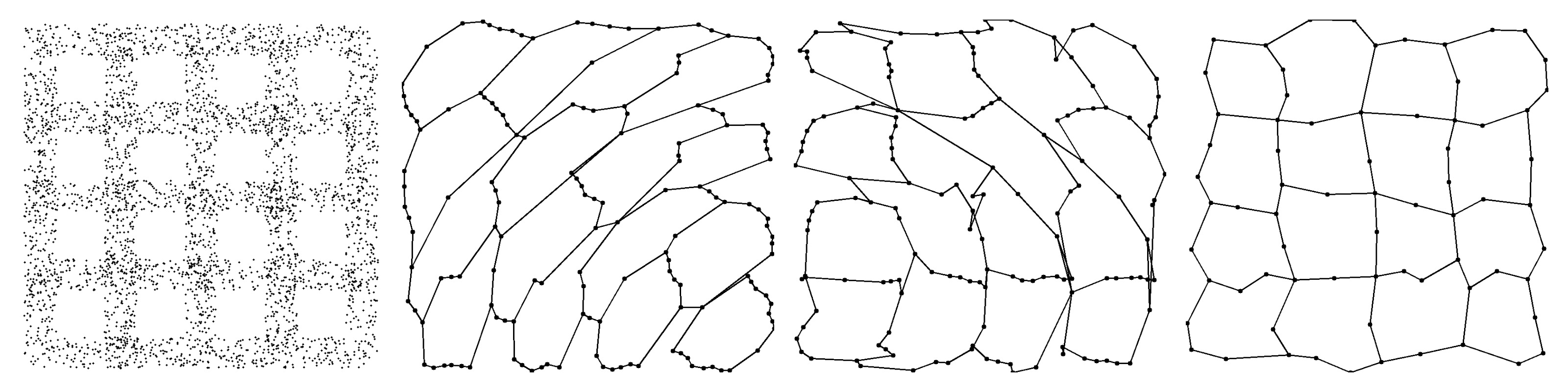}
	\caption{\textbf{1st}: a cloud $C$ sampled from the $G(4,4)$ graph with uniform noise with $\mu = 0.2$. 
\textbf{2nd}: Mapper output on $C$; 
\textbf{3rd}: $\al$-Reeb$(C)$; 
\textbf{4th}: $\shopes(C)$ by Algorithm~\ref{alg:contract}.}
	\label{fig:grid44_uniform}
\end{figure}

\begin{figure}[H]
	\centering
	\def\svgwidth{\columnwidth}
	\includegraphics[width=\textwidth]{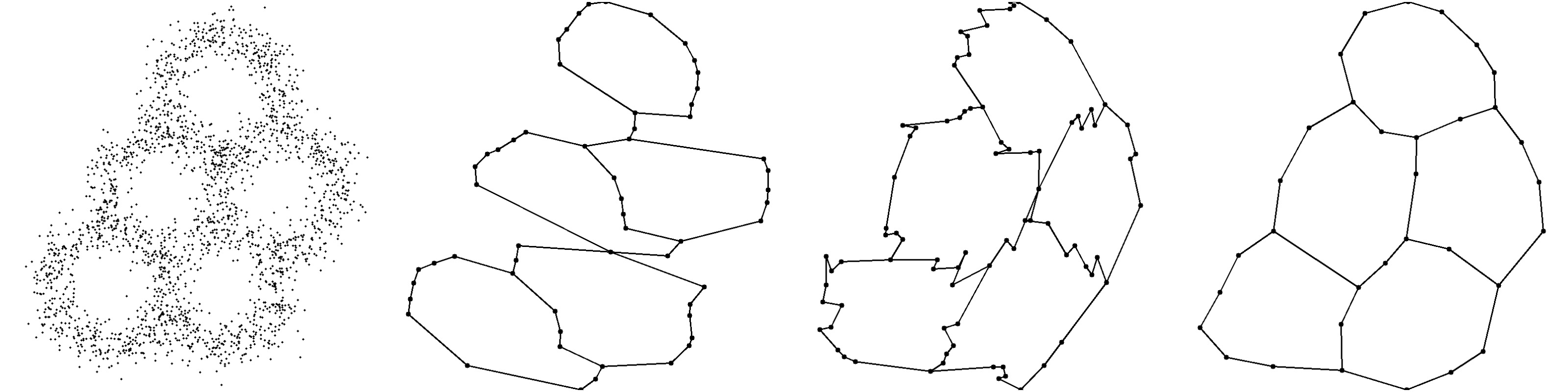}
	\caption{\textbf{1st}: a cloud $C$ sampled from the $H(5)$ graph with Gaussian noise with $\sigma = 0.2$. 
\textbf{2nd}: Mapper output on $C$; 
\textbf{3rd}: $\al$-Reeb$(C)$; 
\textbf{4th}: $\shopes(C)$ by Algorithm~\ref{alg:contract}.}
	\label{fig:hex5_Gaussian}
\end{figure}

\begin{figure}[H]
	\centering
	\def\svgwidth{\columnwidth}
	\includegraphics[width=\textwidth]{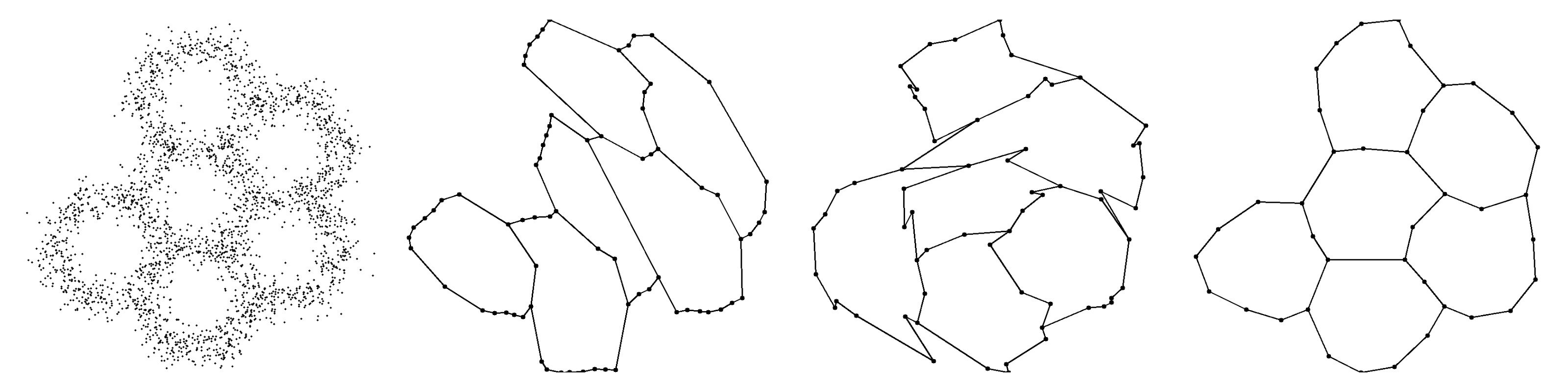}
	\caption{\textbf{1st}: a cloud $C$ sampled from the $H(6)$ graph with Gaussian noise with $\sigma = 0.18$. 
\textbf{2nd}: Mapper output on $C$; 
\textbf{3rd}: $\al$-Reeb$(C)$; 
\textbf{4th}: $\shopes(C)$ by Algorithm~\ref{alg:contract}.}
	\label{fig:hex6_Gaussian}
\end{figure}

\begin{figure}[H]
	\centering
	\def\svgwidth{\columnwidth}
	\includegraphics[width=\textwidth]{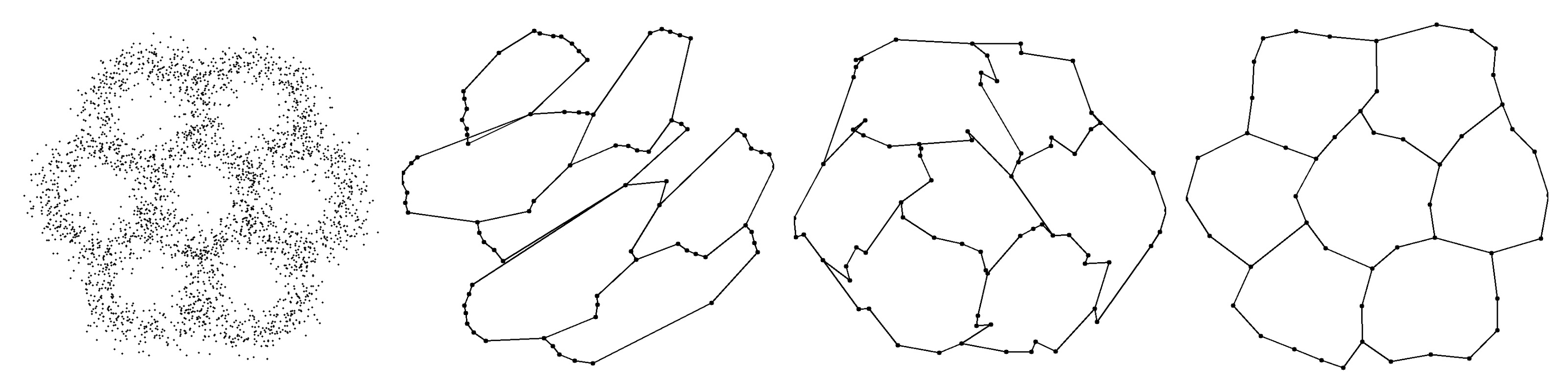}
	\caption{\textbf{1st}: a cloud $C$ sampled from the $H(7)$ graph with Gaussian noise with $\sigma = 0.2$. 
\textbf{2nd}: Mapper output on $C$; 
\textbf{3rd}: $\al$-Reeb$(C)$; 
\textbf{4th}: $\shopes(C)$ by Algorithm~\ref{alg:contract}.}
	\label{fig:hex7_Gaussian}
\end{figure}

\subsection*{C.2. Comparisons of algorithms on edge pixels in BSD images}

\begin{figure}[H]
	\centering
	\def\svgwidth{\columnwidth}
	\includegraphics[height=42mm]{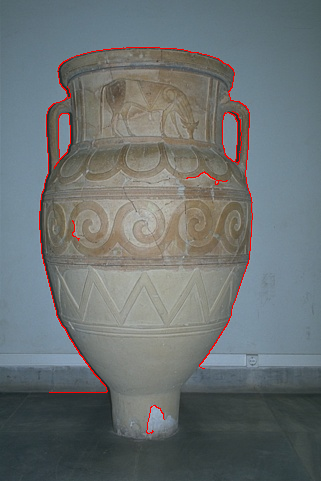}
	\includegraphics[height=42mm]{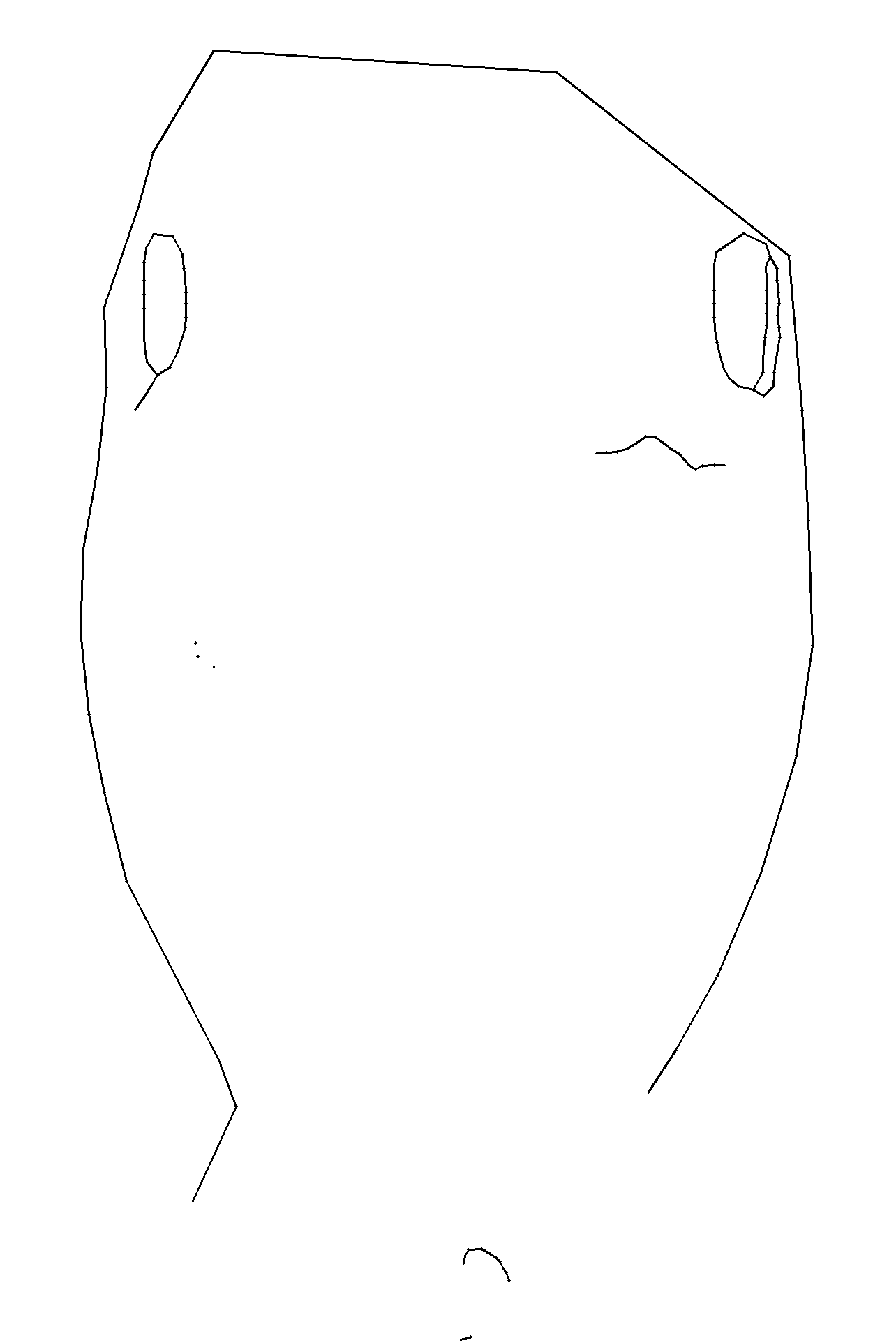}
	\includegraphics[height=42mm]{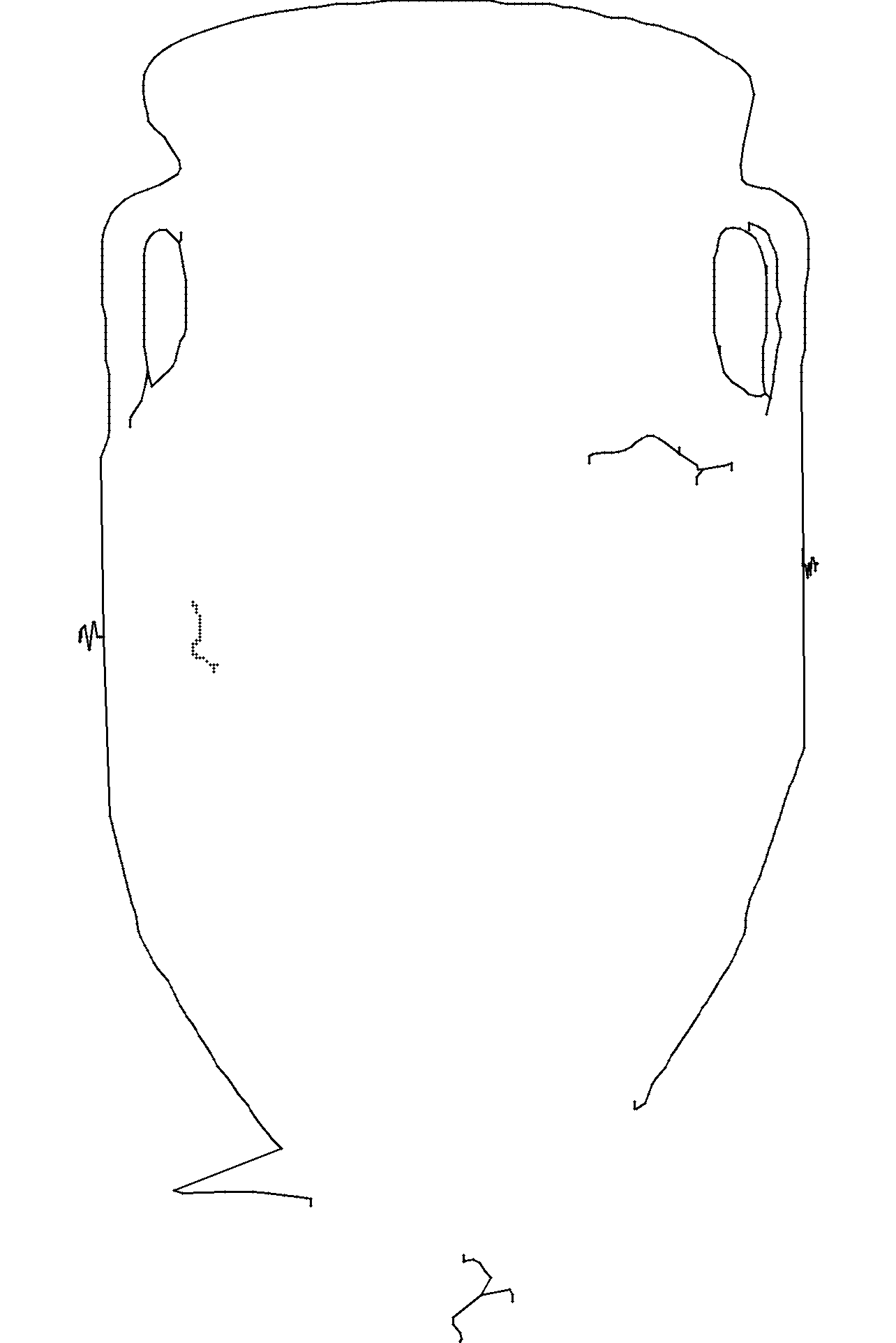}
	\includegraphics[height=42mm]{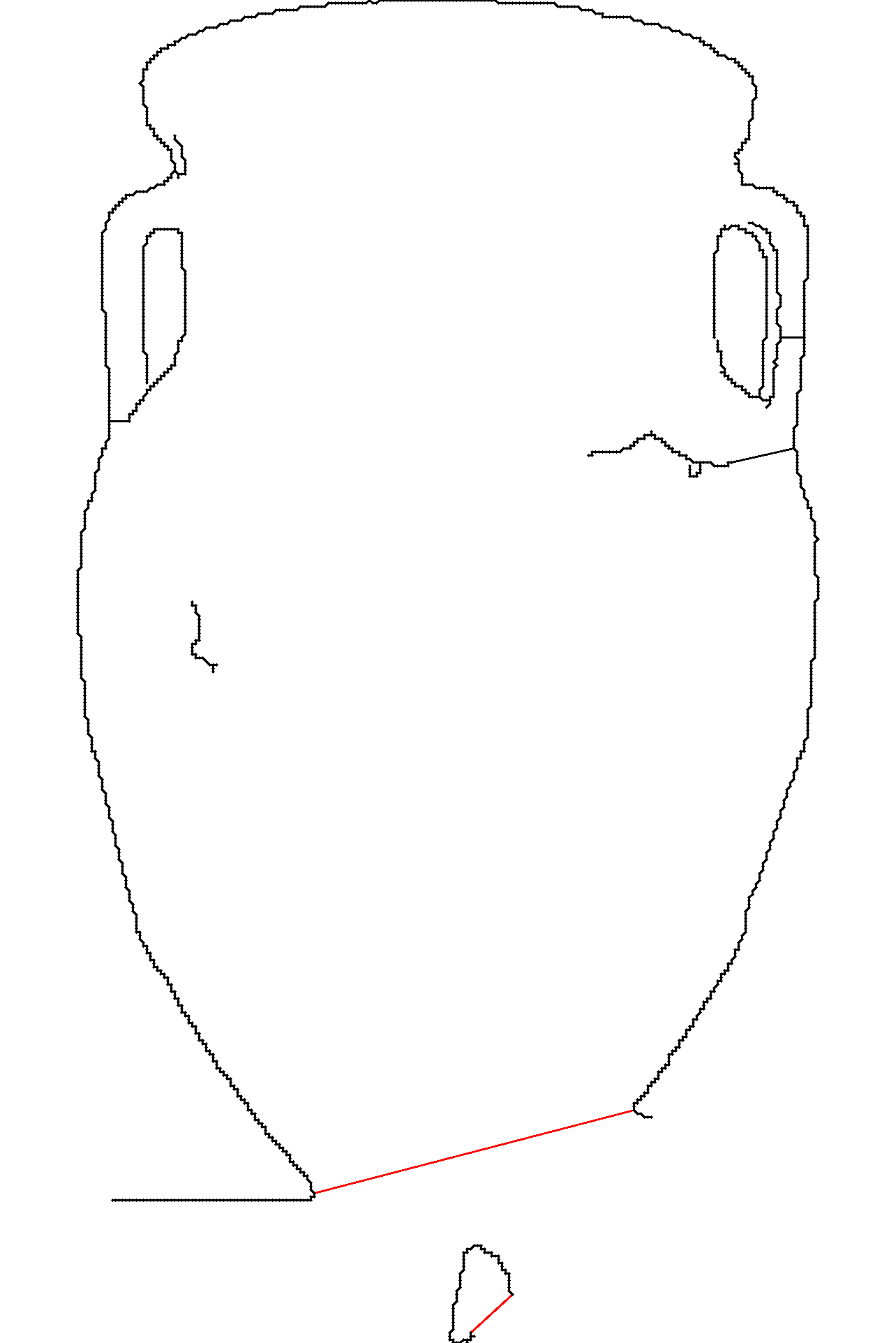}
	\caption{\BSDcaption{227092}}
	\label{fig:pottery}
\end{figure}

\begin{figure}[H]
	\centering
	\def\svgwidth{\columnwidth}
	\includegraphics[height=44mm]{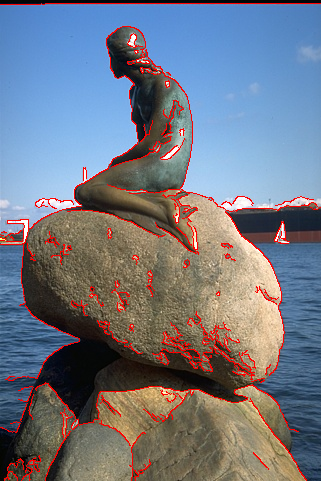}
	\includegraphics[height=44mm]{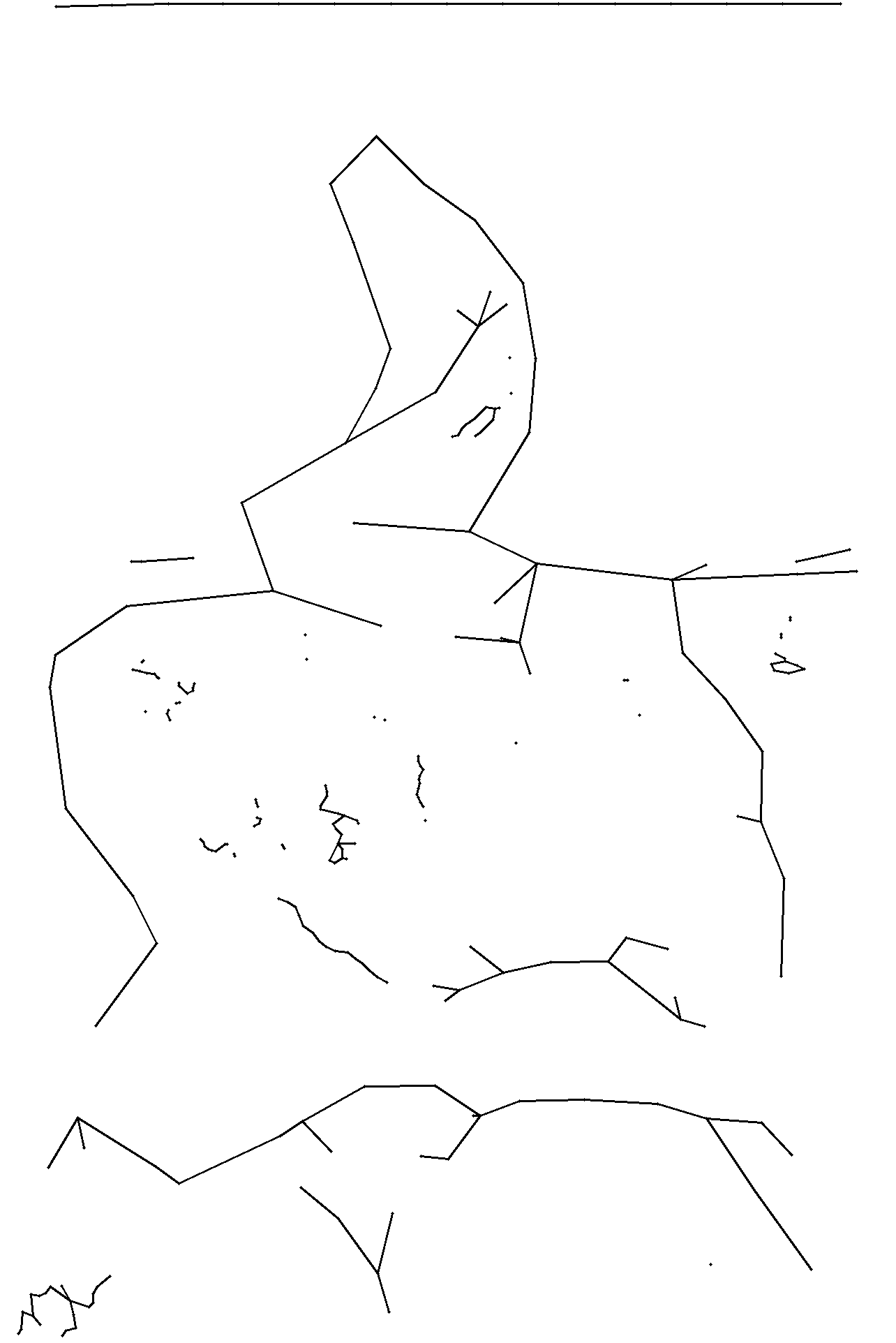}
	\includegraphics[height=44mm]{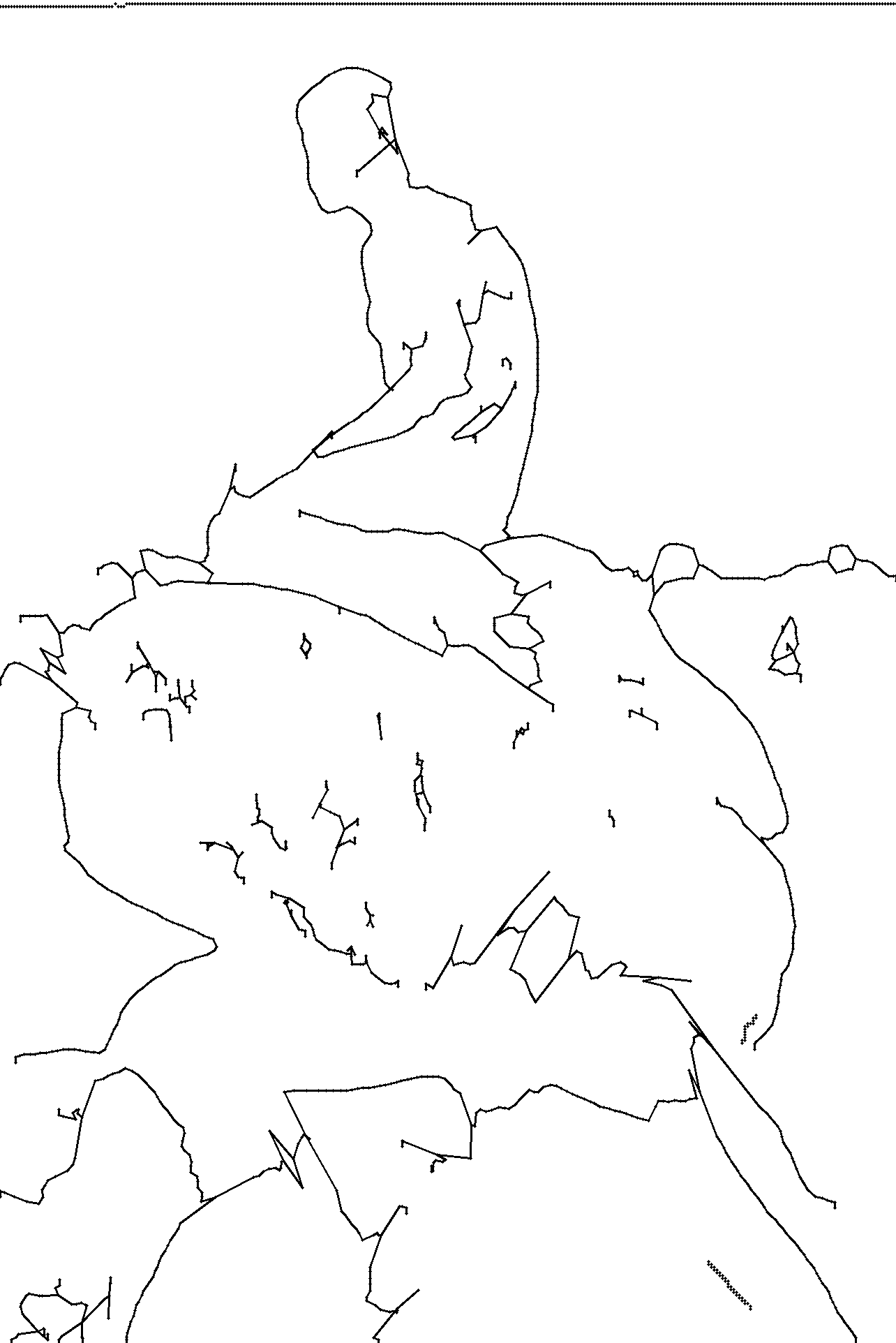}
	\includegraphics[height=44mm]{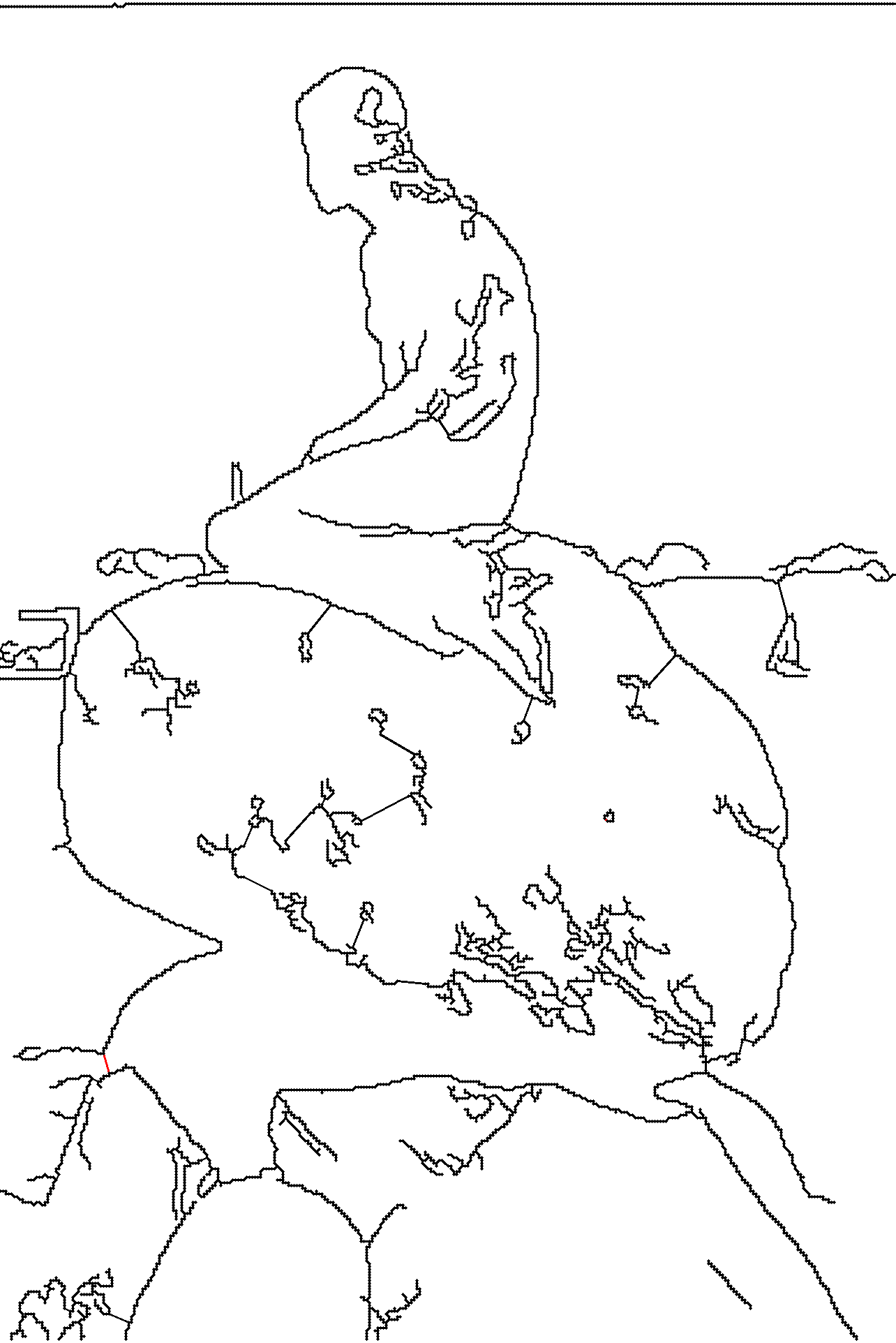}
	\caption{\BSDcaption{372019}}
	\label{fig:rock}
\end{figure}

\begin{figure}[H]
	\centering
	\def\svgwidth{\columnwidth}
	\includegraphics[height=39mm]{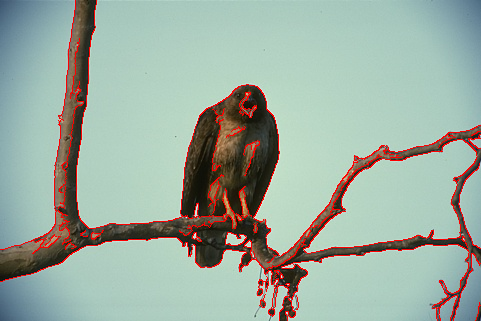}
	\includegraphics[height=39mm]{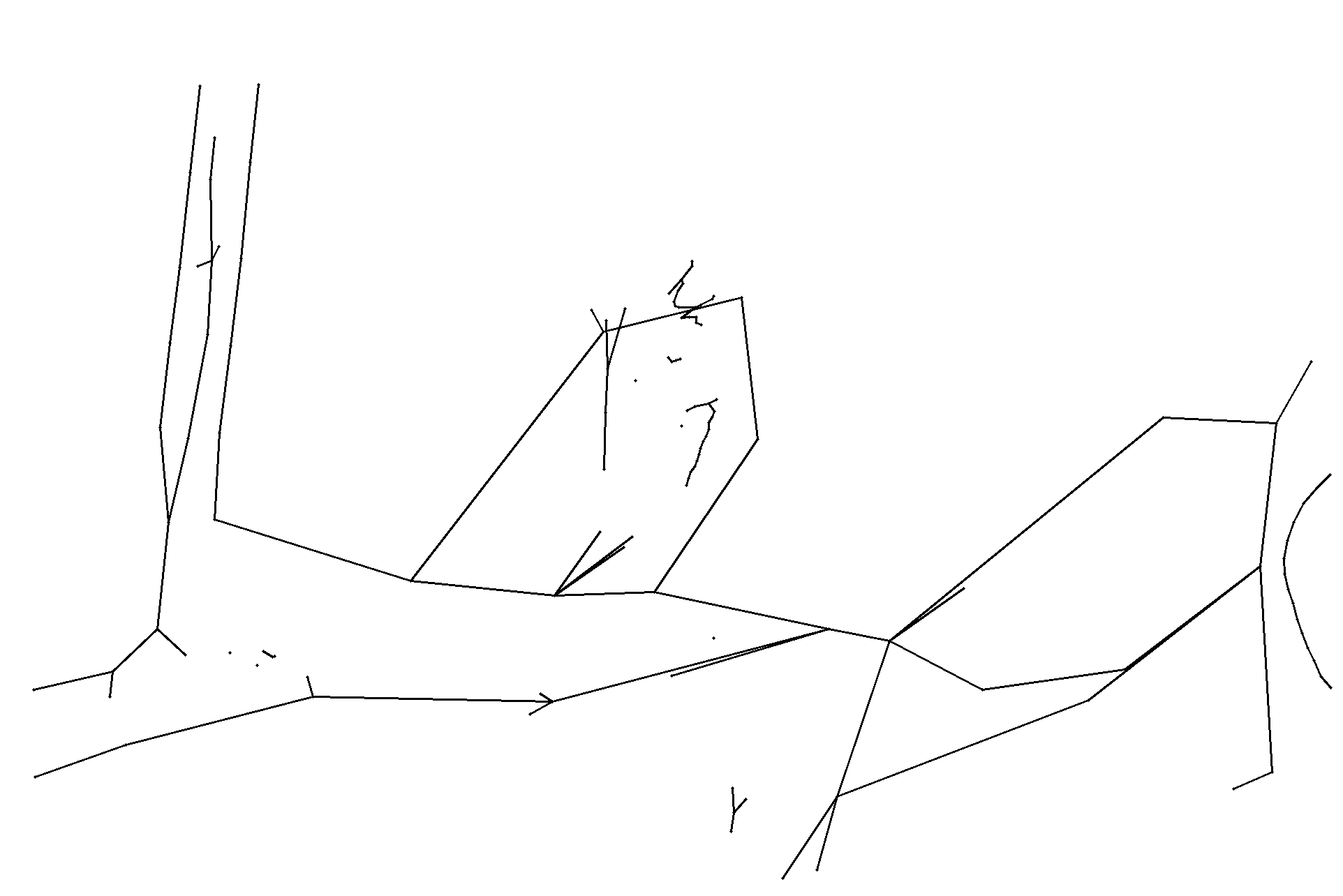}
	\includegraphics[height=39mm]{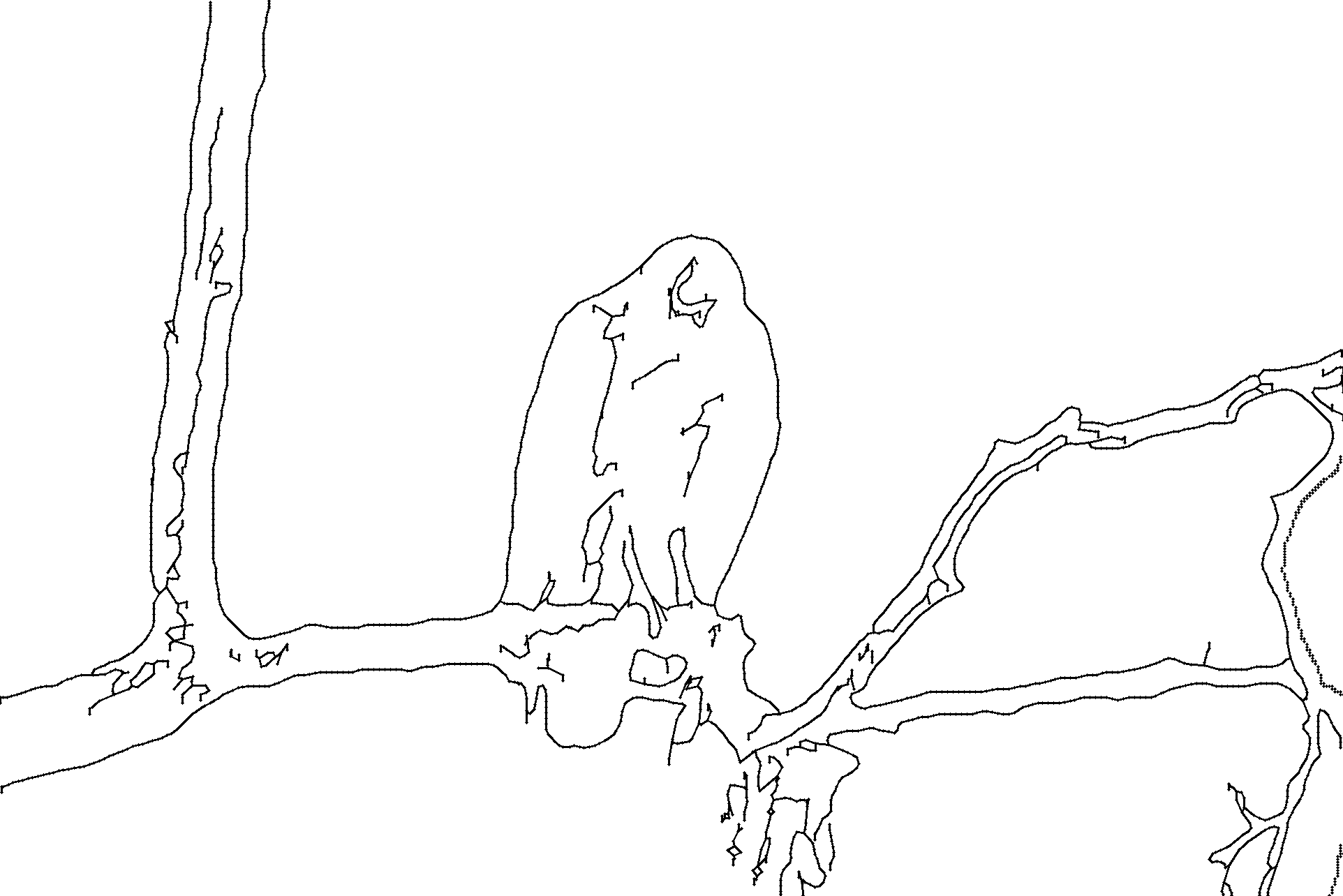}
	\includegraphics[height=39mm]{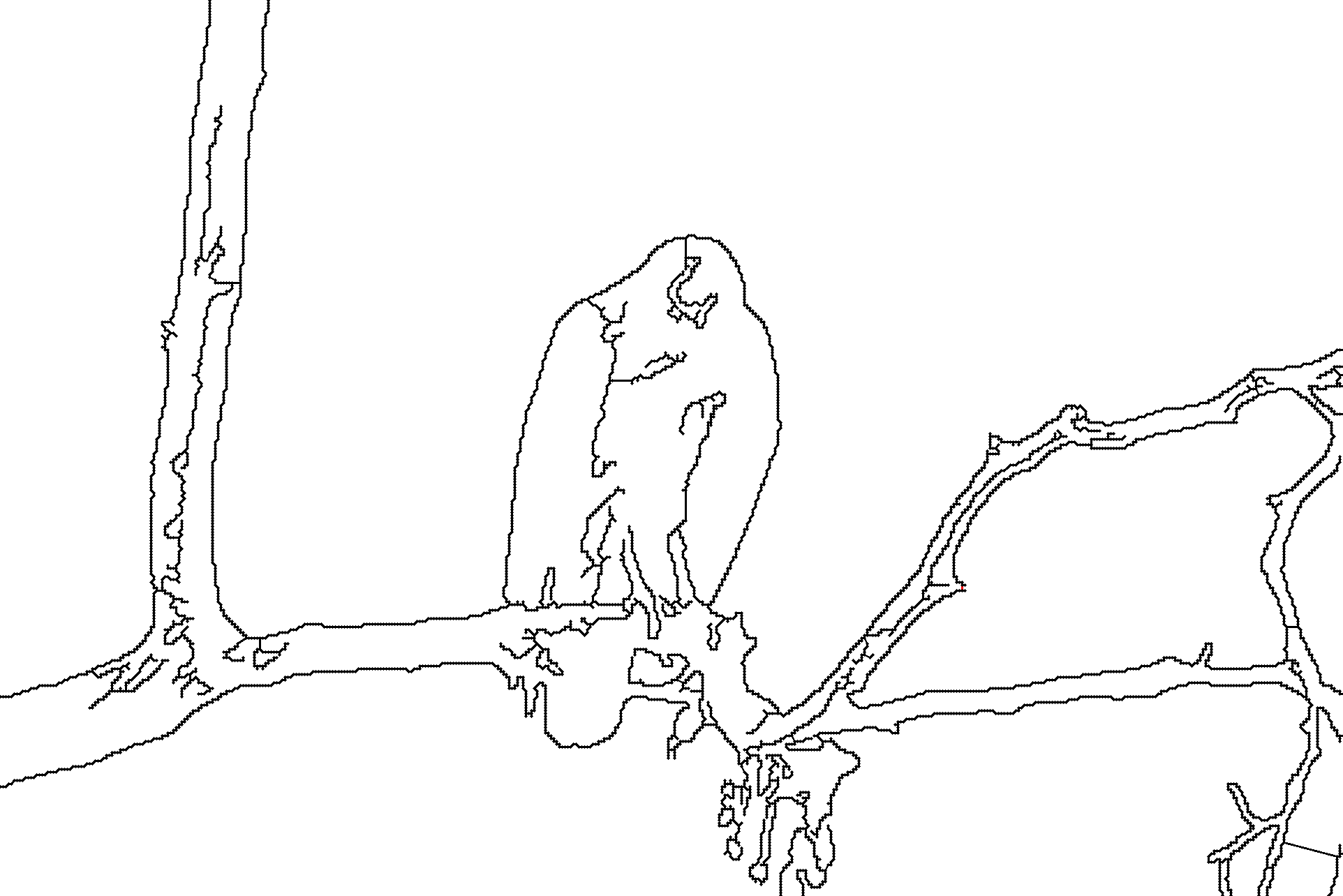}
	\caption{\BSDcaption{42049}}
	\label{fig:bird_1}
\end{figure}

\begin{figure}[!h]
\centering
\includegraphics[width=0.32\textwidth]{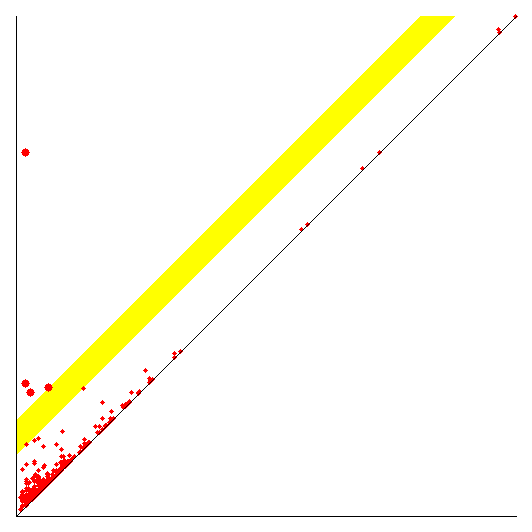}
\includegraphics[width=0.32\textwidth]{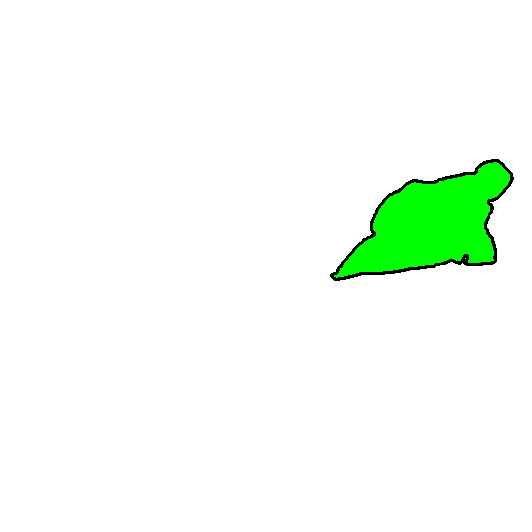}
\includegraphics[width=0.32\textwidth]{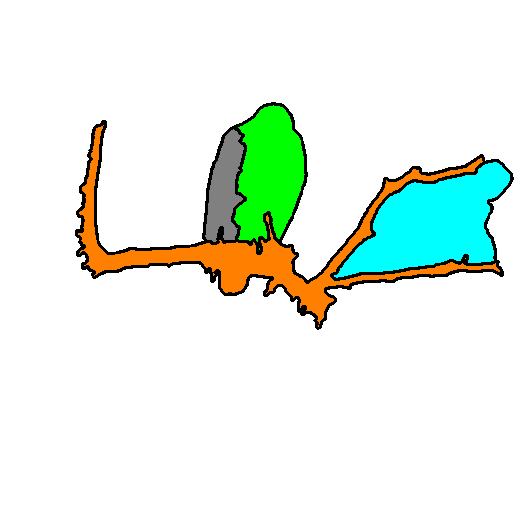}
\caption{\textbf{1st}: for the cloud $C$ from Fig.~\ref{fig:bird_1} the diagram $\PD\{C^{\al}\}$ has 4 dots above the 2nd widest gap. 
\textbf{2nd}: the region enclosed by the only cycle in $\hopes_{1,1}(C)$.
\text{3rd}: the 4 regions enclosed by the 4 cycles in $\hopes_{2,1}(C)$ corresponding to 4 dots above the 2nd widest gap.}
\label{fig:42049_hopes}
\end{figure}

Final Fig.~\ref{fig:42049_hopes} highlights the applicability of $\hopes(C)$, which is computed for a cloud $C$ without any parameters.
Then a user may explore derived skeletons $\hopes_{k,l}(C)$ only by looking at gaps in the persistence diagram $\PD\{C^{\al}\}$ and choosing a certain number of persistent cycles without extra computations. 

\end{document}